\theoremstyle{plain}
\newtheorem{thm}{Theorem}[section]             
\newtheorem{prop}[thm]{Proposition}
\theoremstyle{definition}
\newtheorem{definition}[thm]{Definition}
\theoremstyle{remark}
\newtheorem{remark}[thm]{Remark}
\newcommand{\be}{\begin{equation*}}
\newcommand{\ee}{\end{equation*}}
\newcommand{\ben}{\begin{equation}}
\newcommand{\een}{\end{equation}}
\newcommand{\beqa}{\begin{eqnarray*}}
\newcommand{\eeqa}{\end{eqnarray*}}
\newcommand{\beqan}{\begin{eqnarray}}
\newcommand{\eeqan}{\end{eqnarray}}
\newcommand{\nn}{\nonumber}
\def\i{\mathbf{i}}
\def \const{\mathrm{const}}
\def\bgamma{\boldsymbol{gamma}}
\def\bGamma{\boldsymbol{\Gamma}}
\def\Z{\mathbb{Z}}
\def\R{\mathbb{R}}
\def\End{\mathrm{End}}
\def\Aut{\mathrm{Aut}}
\def\Isom{\mathrm{Isom}}
\def\Diff{\mathrm{Diff}}
\def\Hess{\mathrm{Hess}}
\def\wHess{\widehat{\Hess}}
\def\Crit{\mathrm{Crit}}
\def\hV{\widehat{V}}
\def\const{\mathrm{const}}
\def\Ends{\mathrm{Ends}}
\def\fM{\mathfrak{M}}
\def\Card{\mathrm{Card}}
\def\rT{\mathrm{T}}
\def\rK{\mathrm{K}}
\newcommand{\pd}{\partial}
\def\dd{\mathrm{d}}
\newcommand{\Tr}{\mathrm{Tr}}
\newcommand{\sign}{\mathrm{sign}}
\def\cA{\mathcal{A}}
\def\cC{\mathcal{C}}
\def\cF{\mathcal{F}}
\def\cG{\mathcal{G}}
\def\cL{\mathcal{L}}
\def\cS{\mathcal{S}}
\def\cU{\mathcal{U}}
\def\tc{{\tilde c}}
\def\cA{\mathcal{A}}
\def\rmax{\mathrm{max}}
\def\mD{\mathbb{D}}
\def\rD{\mathrm{D}}
\def\rA{\mathrm{A}}
\def\rS{\mathrm{S}}
\def\fM{\mathfrak{M}}
\def\e{{\bf e}}
\def\c{{\bf c}}
\def\SO{\mathrm{SO}}
\newcommand{\eqdef}{\stackrel{{\rm def.}}{=}}
\DeclareMathOperator{\arctanh}{arctanh}
\def\grad{\mathrm{grad}}
\def\Sym{\mathrm{Sym}}
\def\End{\mathrm{End}}
\def\Re{\mathrm{Re}}
\def\Im{\mathrm{Im}}
\def\im{\mathrm{im}}
\def\O{\mathrm{O}}
\def\bgamma{\boldsymbol{\gamma}}
\def\alim{\mathrm{lim}_\alpha}
\def\olim{\mathrm{lim}_\omega}
\def\hSigma{\widehat{\Sigma}}
\def\hV{\widehat{V}}
\def\hPhi{{\hat \Phi}}
\def\bvert{\big{\vert}}
\def\ind{\mathrm{ind}}
\def\rM{\mathrm{M}}
\newcommand{\fourpartdef}[8]
{
	\left\{
	\begin{array}{ll}
		#1 & \mbox{if} #2 \\
		#3 & \mbox{if} #4 \\
		#5 & \mbox{if} #6\\
                #7 & \mbox{if} #8
	\end{array}
	\right.
}
\newcommand{\twopartdef}[4]
{
	\left\{
	\begin{array}{ll}
		#1 & \mbox{if } #2 \\
		#3 & \mbox{if } #4
	\end{array}
	\right.
}
\newcommand{\threepartdef}[6]
{
	\left\{
	\begin{array}{ll}
		#1 & \mbox{if} #2 \\
		#3 & \mbox{if} #4 \\
		#5 & \mbox{if} #6
	\end{array}
	\right.
}
\def\vol{\mathrm{vol}}
\def\P{\mathbb{P}}
\def\cR{\mathcal{R}}
\def \Grad{\mathrm{Grad}}
\title{The infrared behavior of tame two-field cosmological models}
\author{Elena Mirela Babalic, Calin Iuliu Lazaroiu}
\affiliation{Horia Hulubei National
  Institute of Physics and Nuclear Engineering,\\
  Reactorului 30, Bucharest-Magurele, 077125, Romania\\ 
  }
\emailAdd{mbabalic@theory.nipne.ro, lcalin@theory.nipne.ro}
\abstract{We study the first order infared behavior of tame
hyperbolizable two-field cosmological models, defined as those
classical two-field models whose scalar manifold is a connected,
oriented and topologically finite hyperbolizable Riemann surface
$(\Sigma,\cG)$ and whose scalar potential $\Phi$ admits a positive and
Morse extension to the end compactification of $\Sigma$. We achieve
this by determining the universal forms of the asymptotic gradient
flow of the classical effective potential $V$ with respect to the
uniformizing metric $G$ near all interior critical points and ends of
$\Sigma$, finding that some of the latter act like fictitious but
exotic stationary points of the gradient flow. We also compare these
results with numerical studies of cosmological orbits.  For critical
cusp ends, we find that cosmological curves have transient
quasiperiodic behavior but are eventually attracted or repelled by the
cusp along principal geodesic orbits determined by the extended
effective potential. This behavior is approximated in the infrared by
that of gradient flow curves near the cusp.}
\begin{document}

\maketitle

\pagebreak

\section*{Introduction}

Two-field cosmological models provide the simplest testing ground for
multifield cosmological dynamics. Such models are important for
connecting cosmology with fundamental theories of gravity and matter,
since the effective description of the generic string or M-theory
compactification contains many moduli fields. In particular,
multifield models are crucial in cosmological applications of the
swampland program \cite{V, OV, BCV, BCMV}, as pointed out for example
in \cite{AP, OOSV,GK}. They may also afford a unified description
of inflation, dark matter and dark energy \cite{AL}.

A two-field cosmological model is parameterized by the rescaled Planck
mass $M_0\eqdef M\sqrt{\frac{2}{3}}$ (where $M$ is the reduced Planck
mass) and by its {\em scalar triple} $(\Sigma,\cG,\Phi)$, where the
generally non-compact borderless connected surface $\Sigma$ is the target
manifold for the scalar fields, $\cG$ is the scalar field
metric and $\Phi$ is the scalar potential. To ensure conservation of
energy, one requires that $\cG$ is complete; we
also assume that $\Phi$ is strictly positive. In
\cite{ren}, we used a dynamical RG flow analysis and the
uniformization theorem of Poincar\'e to show that two-field models
whose scalar field metric has constant Gaussian curvature $K$ equal to
$-1$, $0$ or $+1$ give distinguished representatives for the IR
universality classes of all two-field cosmological models. More
precisely, the first order IR approximants of cosmological orbits for
the model parameterized by $(M_0,\Sigma,\cG,\Phi)$ coincide with those
of the model parameterized by $(M_0,\Sigma,G,\Phi)$, where $G$ is the
uniformizing metric of $\cG$. Moreover, these approximants coincide
with the gradient flow orbits of $(\Sigma,G,V)$, where $V\eqdef
M_0\sqrt{2\Phi}$ is the {\em classical effective potential} of the
model. In particular, IR universality classes depend only on the
scalar triple $(\Sigma,G,V)$. This result allows for
systematic studies of two-field cosmological models belonging to a
fixed IR universality class by using the infrared expansion of
cosmological curves outlined in \cite{ren}. The reduction to {\em
uniformized models}, defined as those whose scalar field metric has
Gaussian curvature $K$ equal to $-1$, $0$ or $+1$ serves as an organizing
principle for the infrared expansion, the first order of which is
captured by the gradient flow of $(\Sigma,G,V)$.

The case $K=-1$ is generic and obtains when the topology of $\Sigma$
is of {\em general type}; for such models, the uniformizing metric is
hyperbolic. The few exceptions to this situation arise when $\Sigma$
is of {\em special type}, namely diffeomorphic to $\R^2$, the
two-sphere $\rS^2$, the real projective plane $\R\P^2$, the two torus
$\rT^2$, the open Klein bottle $\rK^2=\R\P^2\times \R\P^2\simeq
\rT^2/\Z_2$, the open annulus $\rA^2$ or the open M\"{o}bius strip
$\rM^2\simeq \rA^2/\Z_2$. When $\Sigma$ is diffeomorphic to $\rS^2$ or
$\R\P^2$, the uniformizing metric has Gaussian curvature $+1$,
while when it is diffeomorphic to a torus or Klein bottle the
uniformizing metric is flat and complete. When $\Sigma$ is of {\em
exceptional type}, i.e. diffeomorphic to $\R^2$, $\rA^2$ or $\rM^2$,
the metric $\cG$ uniformizes to a complete flat metric or to a
hyperbolic metric depending on its conformal class\footnote{A
hyperbolic metric on an exceptional surface is conformally flat but
{\em not} conformally equivalent to a {\em complete} flat
metric.}. The cosmological model, its scalar field metric $\cG$ and
the conformal class of the latter are called {\em hyperbolizable} when
$\cG$ uniformizes to a hyperbolic metric. Thus hyperbolizable models
comprise all two-field models whose target is of general type as well
as those models whose target is exceptional (i.e diffeomorphic with
$\R^2$, $\rA^2$ or $\rM^2$) and for which $\cG$
belongs to a hyperbolizable conformal class. The uniformized form
$(M_0,\Sigma,G,\Phi)$ of a hyperbolizable model is a {\em two field
generalized $\alpha$-attractor model} in the sense of
\cite{genalpha}. Some aspects of such models were investigated
previously in \cite{elem,modular, Noether1, Noether2, Hesse,
Lilia1,Lilia2} (see \cite{unif,Nis,Tim19, LiliaRev} for brief
reviews).

In this paper, we study the infrared behavior of hyperbolizable
two-field models with certain technical assumptions on the topology of
$\Sigma$ and on the scalar potential $\Phi$. Namely, we assume that
$\Sigma$ is oriented and {\em topologically finite} in the sense
that it has finitely-generated fundamental group. When $\Sigma$ is
non-compact, this condition insures that it has a finite number of
Freudenthal ends \cite{Freudenthal1,Freudenthal2,Freudenthal3} and
that its end (a.k.a. Kerekjarto-Stoilow
\cite{Kerekjarto,Stoilow,Richards}) compactification $\hSigma$ is a
smooth and oriented compact surface. Thus $\Sigma$ is recovered from
$\hSigma$ by removing a finite number of points.  We also assume that
$\Phi$ admits a smooth extension $\hPhi$ to $\hSigma$ which is a
strictly-positive Morse function defined on $\hSigma$. A two-field
cosmological model is called {\em tame} when these conditions are
satisfied.

To first order in the scale expansion of \cite{ren}, the IR limit of a
tame two-field model is given by the gradient flow of the classical
effective potential $V=M_0\sqrt{2\Phi}$ on the geometrically finite
hyperbolic surface $(\Sigma,G)$. Since the future limit points of
cosmological curves and of the gradient flow curves of $(\Sigma,G,V)$ are
critical points of $\Phi$ or Freudenthal ends of $\Sigma$, the
asymptotic behavior of such curves for late cosmological times is
determined by the form of $G$ and $V$ near such points. The form of
$G$ near critical points follows from the fact that any hyperbolic
surface is locally isometric with a domain of the Poincar\'e disk,
while that near each end follows from the uniformization of
geometrically finite hyperbolic surfaces. Since the Morse assumption
on the extended potential determines its asymptotic form near the
points of interest, this allows us to derive closed form expressions
for the asymptotic gradient flow and hence to describe the infrared
phases of such models in the sense of \cite{ren}. In particular, we
find that the asymptotic gradient flow of $(\Sigma,G,V)$ near each end
which is a critical point of the extended potential can be expressed
using the incomplete gamma function of order two and certain constants
which depend on the type of end under consideration and on the
(appropriately-defined) principal values of the extended effective
potential $\hV$ at that end. We also find that flaring ends which are
not critical points of $\hV$ act like fictitious but non-standard
stationary points of the effective gradient flow. While the local
form near the critical points of $V$ is standard (since they are
hyperbolic stationary points \cite{Palis,Katok} of the cosmological
and gradient flow), the asymptotic behavior near Freudenthal ends is
exotic in that some of the ends act like fictitious stationary points
with unusual characteristics. For example, the stable and unstable
manifolds of an end under the gradient flow of $(\Sigma,G,V)$ can have
dimensions which differ from those of hyperbolic stationary points of
dynamical systems.

We compare these results with numerical computations of cosmological
curves near the points of interest. We find particularly interesting
behavior near cusp ends, around which generic cosmological
trajectories tend to spiral a large number of times before either
``falling into the cusp'' or being ``repelled'' back toward the
compact core of $\Sigma$ along principal geodesic orbits determined by
$V$.  In particular, cusp ends lead naturally to ``fast turn''
behavior of cosmological curves, a phenomenon which we already
illustrated in our previous analysis of the hyperbolic triply
punctured sphere (see \cite{modular}).

The paper is organized as follows. In Section \ref{sec:models}, we
briefly recall the global description of multifield cosmological
models through a second order geometric ODE and their first order
infrared approximation introduced in \cite{ren}. Section
\ref{sec:tame2field} defines tame two-field cosmological models,
describes the critical points of their extended potential and
discusses principal coordinates centered at ends. In the same section,
we recall the form of the hyperbolic metric $G$ in a canonical
vicinity of an end and extract its asymptotic behavior near each type
of end. Section \ref{sec:int} discusses the behavior of cosmological
curves and their first order IR approximants near interior critical
points. Section \ref{sec:noncritends} performs the asymptotic analysis
of gradient flow curves and compares it with numerical results for
cosmological curves near those ends of $\Sigma$ which are noncritical
for the extended scalar potential, while Section \ref{sec:critends}
performs the same analysis for critical ends. Section
\ref{sec:Conclusions} presents our conclusions and some directions for
further research. The appendix gives some details of the computation
of cosmological curves near interior critical points and near
Freudenthal ends.

\paragraph{Notations and conventions.}

All surfaces $\Sigma$ considered in this paper are connected,
smooth, Hausdorff and paracompact. If $V$ is a smooth real-valued
function defined on $\Sigma$, we denote by:
\be
\Crit V\eqdef \{c\in \Sigma| (\dd V)(c)=0\}
\ee
the set of its critical points. For any $c\in \Crit V$, we denote
by $\Hess(V)(c)\in \Sym^2(T^\ast_c\Sigma)$ the Hessian of $V$ at $c$,
which is a well-defined and coordinate independent symmetric bilinear
form on the tangent space $T_c\Sigma$. Given a metric $\cG$ on
$\Sigma$, we denote by:
\be
\Hess_\cG(V)\eqdef \nabla\dd V\in \Gamma(\Sigma,\Sym^2(T^\ast \Sigma))
\ee
the covariant Hessian tensor of $V$ relative to $\cG$, where
$\nabla$ is the Levi-Civita connection of $\cG$. This symmetric tensor
has the following local expression in coordinates
$(x^1,x^2)$ on $\Sigma$:
\be
\Hess_\cG(V)=(\pd_i\pd_j-\Gamma^k_{ij}(x)\pd_k)V \dd x^i\otimes \dd
x^j~~,
\ee
where $\Gamma^k_{ij}(x)$ are the Christoffel symbols of $\cG$. For any
critical point $c\in \Crit V$, we have $\Hess_\cG(V)(c)=\Hess(V)(c)$.
Recall that a critical point $c$ of $V$ is called {\em nondegenerate}
if $\Hess(V)(c)$ is a non-degenerate bilinear form. When $V$ is a
Morse function (i.e. has only non-degenerate critical points),
the set $\Crit V$ is discrete. 

We denote by $\hSigma$ the Freudenthal (a.k.a. end) compactification
of $\Sigma$, which is a compact Hausdorff topological space containing
$\Sigma$ (see \cite{Freudenthal1,Freudenthal2,Freudenthal3}). We say
that $\Sigma$ is {\em topologically finite} if its fundamental group
is finitely generated.  In this case, $\Sigma$ has a finite number of
Freudenthal ends and $\hSigma$ is a smooth compact surface. In this
situation, we say that $V$ is {\em globally well-behaved} on $\Sigma$
if it admits a smooth extension $\hV$ to $\hSigma$. A metric $\cG$ on
$\Sigma$ is called {\em hyperbolic} if it is complete and of constant
Gaussian curvature equal to $-1$.

\section{Two-field cosmological models and their IR approximants}
\label{sec:models}

Recall that a {\em two-field cosmological model} is a classical
cosmological model with two scalar fields derived from the following action 
on a spacetime with topology $\R^4$:
\ben
\label{S}
S[g,\varphi]=\int \vol_g \cL[g,\varphi]~~,
\een
where:
\ben
\label{cL}
\cL[g,\varphi]=\frac{M^2}{2} \mathrm{R}(g)-\frac{1}{2}\Tr_g \varphi^\ast(\cG)-\Phi\circ \varphi~~.
\een
Here $M$ is the reduced Planck mass, $g$ is the spacetime metric on
$\R^4$ (taken to be of ``mostly plus'') signature, while $\vol_g$ and
$\mathrm{R}(g)$ are the volume form and Ricci scalar of $g$. The
scalar fields are described by a smooth map $\varphi:\R^4\rightarrow
\Sigma$, where $\Sigma$ is a (generally non-compact) smooth and
connected paracompact surface without boundary which is endowed with a
smooth Riemannian metric $\cG$, while $\Phi:\Sigma\rightarrow \R$ is a
smooth function which plays the role of potential for the scalar
fields. We require that $\cG$ is complete to ensure conservation of
energy. For simplicity, we also assume that $\Phi$ is strictly
positive on $\Sigma$. Notice that the model is parameterized by the
quadruplet $\fM\eqdef(M_0,\Sigma,\cG,\Phi)$, where:
\be
M_0\eqdef M\sqrt{\frac{2}{3}}
\ee
is the {\em rescaled Planck mass}.

\subsection{The cosmological equation}

\noindent The two-field model parameterized by
$\fM$ is obtained by assuming that $g$ is an FLRW
metric with flat spatial section:
\ben
\label{FLRW}
\dd s^2_g=-\dd t^2+a(t)^2\sum_{i=1}^3 \dd x_i^2
\een
(where $a(t)>0$) and that $\varphi$ depends only on the {\em
cosmological time} $t\eqdef x^0$. One sets:
\be
H(t)\eqdef \frac{\dot{a}(t)}{a(t)}~~,
\ee
where the dot indicates derivation with respect to $t$. When $H>0$
(which we assume throughout), the variational equations of \eqref{S}
reduce to the {\em cosmological equation}:
\ben
\label{eomsingle}
\nabla_t \dot{\varphi}(t)+\frac{1}{M_0} \left[||\dot{\varphi}(t)||_\cG^2+2\Phi(\varphi(t))\right]^{1/2}\dot{\varphi}(t)+ (\grad_{\cG} \Phi)(\varphi(t))=0
\een
together with the condition:
\ben
\label{Hcond}
H(t)=H_\varphi(t)~~,
\een
where the {\em Hubble parameter} of $\varphi$ is defined through:
\ben
\label{Hvarphi}
H_\varphi(t)\eqdef \frac{1}{3 M_0}\left[||\dot{\varphi}(t)||_\cG^2+2\Phi(\varphi(t))\right]^{1/2}~~.
\een
Here $\nabla_t\eqdef \nabla_{\dot{\varphi}(t)}$ is the covariant
derivative with respect to the tangent vector $\dot{\varphi}(t)\in
T_{\varphi(t)}\Sigma$, which takes the following form in local
coordinates on $\Sigma$:
\be
\nabla_t  \dot{\varphi}^i(t) =\ddot{\varphi}^i(t)+\Gamma^i_{jk}(\varphi(t))\dot\varphi^j(t)\dot\varphi^k(t)~~,
\ee
where $\Gamma^i_{jk}$ are the Christoffel symbols of $\cG$.  The
solutions $\varphi:I\rightarrow \Sigma$ of \eqref{eomsingle} (where
$I$ is a non-degenerate interval) are called {\em cosmological
curves}, while their images in $\Sigma$ are called {\em cosmological
orbits}. Given a cosmological curve $\varphi$, relation \eqref{Hcond}
determines $a$ up to a multiplicative constant. The cosmological
equation can be reduced to first order by passing to the tangent
bundle of $T\Sigma$ (see \cite{SLK}). More precisely,
\eqref{eomsingle} is equivalent with the integral curve equation of a
semispray (a.k.a. second order vector field) $S$ defined on $T\Sigma$ which
is called the {\em cosmological semispray} of the model (see
\cite{ren}). The flow of this vector field on the total space of
$T\Sigma$ is called the {\em cosmological flow}.

\begin{remark}
The cosmological equation can be written as:
\be
\nabla_t \dot{\varphi}(t)+\left[||\dot{\varphi}(t)||_{\cG_0}^2+2\Phi_0(\varphi(t))\right]^{1/2}\dot{\varphi}(t)+ (\grad_{\cG_0} \Phi_0)(\varphi(t))=0~~,
\ee
where we defined the {\em rescaled scalar field metric} and {\em
rescaled scalar potential} by:
\be
\cG_0\eqdef \frac{1}{M_0^2}\cG~~\mathrm{and}~~\Phi_0\eqdef\ \frac{1}{M_0^2}\Phi~~.
\ee
Moreover, \eqref{Hvarphi} reads:
\be
H_\varphi(t)=\frac{1}{3}\left[||\dot{\varphi}(t)||_{\cG_0}^2+2\Phi_0(\varphi(t))\right]^{1/2}~~.
\ee
Hence the cosmological curves and their Hubble parameters depend only
on the {\em rescaled scalar triple} $(\Sigma,\cG_0,\Phi_0)$.
\end{remark}

\subsection{Uniformized models and first order IR approximants}

\noindent Consider a cosmological curve $\varphi:I\rightarrow \Sigma$ of the model
parameterized by $(M_0,\Sigma,\cG,\Phi)$, where we can assume that $0\in
I$ by shifting the cosmological time since the cosmological equation
is autonomous. Define the {\em classical effective potential} $V$ of the
model by:
\be
V\eqdef M_0\sqrt{2\Phi}~~.
\ee
The dynamical RG flow analysis of \cite{ren} shows that the first
order IR approximant of $\varphi$ is the gradient flow curve $\eta$ of
the scalar triple $(\Sigma,\cG,V)$ which satisfies the initial condition:
\ben
\label{incond}
\eta(0)=\varphi(0)~~.
\een

Since the gradient flow of $V$ is invariant under Weyl transformations
of $\cG$ up to increasing reparameterization of the gradient flow
curves, the uniformization theorem of Poincar\'e allows us to replace
$\cG$ with its uniformizing metric $G$ without changing the oriented
gradient flow orbits. Hence the first order IR approximation of
cosmological flow {\em orbits} is given by the gradient flow orbits of
the hyperbolic scalar triple $(\Sigma,G,V)$. Thus the
original model parameterized by $(M_0,\Sigma,\cG,\Phi)$ and the
{\em uniformized model} parameterized by $(M_0,\Sigma,G,\Phi)$
have the same first order IR orbits. The uniformized
model provides a distinguished representative of the IR universality
class of the original model as defined in \cite{ren}. Moreover, this
universality class depends only on the scalar triple
$(\Sigma,G,V)$. Notice that the initial condition \eqref{incond} for
first order IR approximants is invariant under reparameterizations
since both the cosmological and gradient flow equations are autonomous
and we can shift parameters to ensure that \eqref{incond} does not
change. From now on, we work exclusively with the uniformized model
$(M_0,\Sigma,G,\Phi)$ and its scalar triple $(\Sigma, G,V)$, whose gradient
flow we call the {\em effective gradient flow}. 

Since the gradient flow equation of $(\Sigma,G,V)$:
\be
\dot{\eta}(t)=-(\grad_G V)(\eta(t))
\ee
is a first order ODE, the degree of the cosmological equation drops by
one in the first order IR approximation. As a result, the tangent
vector to $\eta$ is constrained to lie within the {\em gradient
flow shell} $\Grad_G V$ of $(\Sigma,G,V)$. The latter is the closed
submanifold of $T\Sigma$ defined as the graph of the vector field
$-\grad_G V$:
\be
\Grad_G V\eqdef \{u\in T\Sigma~\vert~u=-(\grad_G V)(\pi(u))\}~~,
\ee
where $\pi:T\Sigma\rightarrow \Sigma$ is the tangent bundle
projection. In particular, the cosmological flow of the uniformized
model (which is defined on $T\Sigma$) becomes confined to $\Grad_G V$
in this approximation. In this order of the IR expansion, the tangent
vector $\dot{\eta}(0)$ is constrained to equal $-(\grad_G V)(\eta(0))$
and cannot be specified independently; one has to consider higher
orders of the expansion to obtain an approximant of the cosmological
flow which is defined on the entirety of $T\Sigma$. In particular, the
IR approximation is rather coarse.

A cosmological curve $\varphi$ is called {\em infrared optimal}
if its speed at $t=0$ lies in gradient flow shell of $(\Sigma,G,V)$,
i.e. if $\varphi$ satisfies the condition:
\be
\dot{\varphi}(0)=-(\grad_G V)(\varphi(0))~~.
\ee
The first order IR approximant $\eta$ of an infrared optimal
cosmological curve osculates in first order to $\varphi$ at $t=0$.
Thus $\eta(t)$ is a {\em first order} asymptotic approximant of
$\varphi(t)$ for $|t|\ll 1$. The covariant acceleration
$\nabla_t\dot{\varphi}(0)$ of $\varphi$ at $t=0$ need not agree with
that of $\eta$ (which is determined by the gradient flow equation). As
a consequence, $\varphi(t)$ and $\eta(t)$ can differ already to the
second order in $t$. The two curves osculate in second order at $t=0$
only if $G$ and $\Phi$ satisfy a certain condition at the initial
point $\varphi(0)$ (see \cite{ren}). In particular, the IR
approximation of an infrared optimal cosmological curve can be
expected to be accurate only for sufficiently small cosmological
times. For cosmological curves which are not IR optimal, the approximation can be
accurate only when the speed of the curve at $t=0$ is sufficiently
close to the gradient flow shell of $(\Sigma,\cG,V)$. Despite these
limitations, the first order IR approximation provides an important
conceptual tool for classifying multifield cosmological models into IR
universality classes and gives a useful picture of the low frequency
behavior of cosmological curves (see \cite{ren}).

\begin{remark}
\label{rem:rescaledgradshell}
The transformation $G\rightarrow G_0=\frac{1}{M_0^2}G$,
$\Phi\rightarrow \Phi_0=\frac{1}{M_0^2}\Phi$ replaces
$(\Sigma,G,V)$ with  $(\Sigma,G_0,V_0)$, where
$V_0=M_0\sqrt{2\Phi_0}=\sqrt{2\Phi}=\frac{1}{M_0}V$. Since
$\grad_{G_0} V_0=M_0 \grad_G V$, the gradient flow shell of
$(\Sigma_0,G_0,V_0)$ differs from that of $(\Sigma,G,V)$ by a constant
rescaling in the fiber directions. This can be absorbed by a constant
reparameterization of the gradient flow curves and hence does not
affect the gradient flow orbits.
\end{remark}

\section{Hyperbolizable tame two-field models}
\label{sec:tame2field}

The results of \cite{ren} allow us to describe the infrared behavior
of two-field models under certain assumptions on the scalar manifold
and potential. Throughout this section, we consider a hyperbolizable
model parameterized by $(M_0,\Sigma,\cG,\Phi)$ and let $G$ be the
hyperbolization of $\cG$ and $V\eqdef M_0\sqrt{2\Phi}$.

\subsection{The tameness conditions}

\noindent Recall that adding the Freudenthal ends to $\Sigma$
produces its {\em end compactification} $\hSigma$, where
each point of the set:
\be
\Ends(\Sigma)\eqdef \hSigma\setminus \Sigma
\ee
corresponds to an end. When endowed with its natural topology,
$\hSigma$ is the classical compactification of surfaces considered by
Kerekjarto and Stoilow \cite{Kerekjarto,Stoilow}, which was clarified
further and extended to the unoriented case by Richards
\cite{Richards}; it coincides with Freudenthal's end compactification
of manifolds for the case of dimension two. In general, the set of
ends $\Ends(\Sigma)$ can be infinite and rather complicated (it is a
totally disconnected space which can be a Cantor space). Moreover, the
scalar potential $\Phi$ (and the effective potential $V$) can have
complicated asymptotic behavior near each end; in particular, they may
fail to extend to smooth functions on $\hSigma$. Furthermore, $\Phi$
(and thus $V$) may have non-isolated critical points on $\Sigma$. To
obtain a tractable set of models, we make the following

\paragraph{\bf Assumptions.}
\begin{enumerate}
\item $\Sigma$ is oriented and {\em topologically finite} in the sense
that its fundamental group $\pi_1(\Sigma)$ is finitely-generated. This
implies that $\Sigma$ has finite genus and a finite number of ends and
that its end compactification $\hSigma$ is a compact smooth
surface. Notice that $(\Sigma,G)$ need not have finite area.
\item The scalar potential $\Phi$ is {\em globally well-behaved},
i.e. $\Phi$ admits a smooth extension $\hPhi$ to $\hSigma$. We require
that $\hPhi$ is strictly positive on $\hSigma$, which means that the
limit of $\Phi$ at each end of $\Sigma$ is a {\em strictly} positive
number.
\item The extended potential $\hPhi$ is a Morse function on $\hSigma$
(in particular, $\Phi$ is a Morse function on $\Sigma$).
\end{enumerate}

\begin{definition}
A hyperbolic two-dimensional scalar triple $(\Sigma,\cG,\Phi)$ is called {\em
tame} if it satisfies conditions $1$, $2$ and $3$ above. A two-field
cosmological model with tame scalar triple is called tame.
\end{definition}

\begin{remark} It may seem at first sight that our tameness
assumptions could reduce the study of the IR behavior of
cosmological curves to an application of known results from Morse
theory \cite{MB,Bott} and from the theory of gradient flows. However
this is {\em not} the case because the metrics $\cG$ and $G$ do not
extend to the end compactification of $\Sigma$ and because the vector
field $\grad_G V$ is singular at the ends. On the other hand, the flow
of $\grad_G V$ on the non-compact surface $\Sigma$ is not amenable to
ordinary Morse theory, which assumes a compact manifold. One might
hope that some version of Morse theory on manifolds with boundary (see
\cite{KM,Laudenbach,Akaho}) could apply to the conformal
compactification of $(\Sigma,G)$. However, the assumptions made in
common versions of that theory are not satisfied in our case. As
already shown in \cite{genalpha}, Morse theoretic results are
nevertheless useful for relating the indices of the critical points of
$\hPhi$ to the topology of $\Sigma$ -- a relation which can be used in
principle to constrain the topology of $\Sigma$ using cosmological
observations.
\end{remark}

\subsection{Interior critical points. Critical and  noncritical ends}

\noindent The assumption that $\Sigma$ is topologically finite implies
that the set of ends $\Ends(\Sigma)$ is finite, while the assumption
that $\hPhi$ is Morse constrains the asymptotic behavior of $\Phi$ at
the ends of $\Sigma$. Notice that the extended potential is uniquely
determined by $\Phi$, since continuity of $\hPhi$ implies:
\be
\hPhi(\e)=\lim_{\Sigma\ni m\rightarrow \e}\Phi(m)~~\forall \e\in \Ends(\Sigma)~~.
\ee
Also notice that $\hPhi$ (hence also $\Phi$) is bounded since it is
continuous while $\hSigma$ is compact. The condition that $\hPhi$ is
Morse implies that its critical points are isolated. Since $\hSigma$
is compact, it follows that the set:
\be
\Crit\hPhi\eqdef\{c\in \hSigma\vert (\dd\hPhi)(c)=0\}
\ee
is finite. Since we assume that $\hPhi$ is strictly
positive on $\hSigma$, the classical effective potential
$V=M_0\sqrt{2\Phi}$ is also globally well-behaved, i.e.  admits a
smooth extension $\hV$ to $\hSigma$, which is given by:
\be
\hV=M_0\sqrt{2\hPhi}~~.
\ee
Moreover, $\hV$ has the same critical points\footnote{Indeed, we have
$\dd \hV=M_0\frac{\dd \hPhi}{\sqrt{2\hPhi}}$.}  as $\hPhi$:
\be
\Crit\hV=\Crit \hPhi~~.
\ee
Since $V$ is the restriction of $\hV$ to $\Sigma$, the critical points
of $V$ coincide with the {\em interior critical points} of $\hV$ (and
$\hPhi$), i.e. those critical points which lie on $\Sigma$:
\be
\Crit V=\Crit \Phi =\Sigma\cap \Crit \hV=\Sigma\cap\Crit \hPhi~~.
\ee
Let:
\be
\Crit_\infty V =\Crit_\infty \Phi\eqdef \Ends(\Sigma)\cap\Crit \hV=\Ends(\Sigma)\cap\Crit \hPhi
\ee
be the set of {\em critical ends} (or ``critical points at
infinity''), i.e.  those critical points of the extended potential
which are also ends of $\Sigma$. We have the disjoint union
decomposition:
\be
\Crit \hV =\Crit V \sqcup \Crit_\infty V ~~.
\ee
Finally, an end of $\Sigma$ which is not a critical point of $\hPhi$
(and hence of $\hV$) will be called a {\em noncritical end}. Such
ends form the set $\Ends(\Sigma)\setminus \Crit_\infty(V)$. We denote
interior critical points by $\c$ and arbitrary critical points of
$\hV$ by $c$; the latter can be interior critical points or critical
ends. Finally, we denote by $\e$ the ends of $\Sigma$.  To describe
the early and late time behavior of the gradient flow of
$(\Sigma,G,V)$, we must study the asymptotic form of this flow near
the interior critical points as well as near all ends of $\Sigma$.

\subsection{Stable and unstable manifolds under the effective gradient flow}

\noindent For any maximal gradient flow curve
$\eta:(a_-,a_+)\rightarrow\Sigma$ of $(\Sigma,G,V)$, we denote by:
\be
\alim\eta\eqdef \lim_{t\rightarrow a_-}\eta(t)\in \hSigma~~\mathrm{and}~~
\olim\eta\eqdef \lim_{t\rightarrow a_+}\eta(t)\in \hSigma
\ee
its $\alpha$- and $\omega$- limits as a curve in $\hSigma$. Each
of these points is either an interior critical point or an end of
$\Sigma$.

For any $m\in \Sigma$, let $\eta_m$ be the maximal gradient flow curve
of $(\Sigma,G,V)$ which satisfies $\eta_m(0)=m$. Recall that the stable and
unstable manifolds of an interior critical point $\c\in\Crit V$ under
the gradient flow of $(\Sigma,G,V)$ are defined
through:
\ben
\label{SUdefc} \cS(\c)\eqdef \{m\in
\Sigma|\olim\eta_m(q)=\c\}~~,~~\cU(\c)\eqdef\{m\in \Sigma|\alim
\eta_m(q)=\c\}~~.
\een
By analogy, we define the {\em stable and unstable
  manifolds of an end} $\e$ by:
\ben
\label{SUdefe} \cS(\e)\eqdef \{m\in
\Sigma|\olim\eta_m(q)=\e\}~~,~~\cU(\e)\eqdef\{m\in
\Sigma|\alim\eta_m(q)=\e\}~~.
\een
Notice that the stable and unstable manifolds of an end are subsets of $\Sigma$.
 
\subsection{The form of $G$ in a vicinity of an end}

\noindent Recall (see \cite{Borthwick} or \cite[Appendix
D.6]{genalpha}) that any end $\e$ of a geometrically-finite and
oriented hyperbolic surface $(\Sigma,G)$ admits an open neighborhood
$U_\e\subset \hSigma$ (which is diffeomorphic with a disk) such that
the hyperbolic metric $G$ takes a canonical form when restricted to
$\dot{U}_\e\eqdef U_\e\setminus \{\e\}\subset \Sigma$. More precisely,
there exist {\em semigeodesic polar coordinates} $(r,\theta)\in
\R_{>0}\times \rS^1$ defined on $\dot{U}_\e$ in which the hyperbolic
metric has the form:
\ben
\label{emetric}
\dd s_G^2|_{\dot{U}_\e}=\dd r^2+f_\e(r)\dd \theta^2~~,
\een
where: 
\ben
\label{fe}
f_\e(r)= \fourpartdef{\sinh^2(r)}{~\e=\mathrm{plane~end}}
{\frac{1}{(2\pi)^2}e^{2r}}{~\e=\mathrm{horn~end}}
{\frac{\ell^2}{(2\pi)^2}\cosh^2(r)}{~\e=\mathrm{funnel~end~of~circumference}~\ell>0}{\frac{1}{(2\pi)^2}e^{-2r}}{~\e=\mathrm{cusp~end}}~~.
\een
In such coordinates, the end $\e$ corresponds to $r\rightarrow
\infty$. Setting $\zeta\eqdef r e^{\i \theta}$, the corresponding {\em  semigeodesic 
Cartesian coordinates} are defined through:
\be
\zeta_1\eqdef \Re(\zeta)=r \cos \theta~~\mathrm{and}~~\zeta_2\eqdef \Im (\zeta)=r \sin\theta~~.
\ee
Plane, horn and funnel ends are called {\em flaring ends}. For such
ends, the length of horocycles in $(\Sigma,G)$ grows exponentially
when one approaches $\e$. The only non-flaring ends are cusp
ends, for which the length of horocycles tends to zero as one
approaches the cusp. An oriented geometrically-finite hyperbolic surface is
called {\em elementary} if it is isometric with the Poincar\'e disk,
the hyperbolic punctured disk or a hyperbolic annulus. A non-elementary
oriented geometrically finite hyperbolic surface admits only cusp and funnel
ends. The Poincar\'e disk has a single end which is a plane end. The
hyperbolic punctured disk has two ends, namely a cusp end and a horn
end. Finally, a hyperbolic annulus has two funnel ends.

\subsection{Canonical coordinates centered at an end}

\noindent It is convenient for what follows to consider {\em canonical
coordinates} centered at $\e$. These are defined on $U_\e$ using the
relation:
\ben
\label{inv}
z=\frac{1}{\bar{\zeta}}=\frac{1}{r}e^{\i\theta}~~.
\een
The {\em canonical polar coordinates} $(\omega,\theta)$ centered at
$\e$ are defined through:
\be
\omega\eqdef |z|=\frac{1}{r}~~,
\ee
while the {\em canonical Cartesian coordinates} $(x,y)$ centered at $\e$ are given by:
\be
x\eqdef \Re(z)=\omega\cos \theta=\frac{1}{r}\cos \theta~~~ \mathrm{and}~~~ y\eqdef \Im(z)=\omega\sin\theta=\frac{1}{r}\sin\theta~~.
\ee
In such coordinates, the end $\e$ corresponds to $\omega=0$,
i.e. $(x,y)=(0,0)$.

\subsection{Local isometries near an end}

\noindent Semigeodesic coordinates near $\e$ are not unique. In fact,
expression \eqref{emetric} is invariant under the following action of
the orthogonal group $\O(2)$:
\ben
\label{O2action1}
\left[\begin{array}{c} \zeta_1 \\ \zeta_2 \end{array}\right]\rightarrow A \left[\begin{array}{c} \zeta_1 \\ \zeta_2 \end{array}\right]~~\forall A\in \O(2)~~
\een
which thus acts by isometries of $(\dot{U}_\e,G)$. Explicitly, we have
an action $\rho_\e:\O(2)\rightarrow \Isom(\dot{U_\e},G)$ which is
defined through the conditions:
\be
\left[\begin{array}{c} \zeta_1(\rho_\e(A)(m)) \\ \zeta_2(\rho_\e(A)(m)) \end{array}\right]= A \left[\begin{array}{c} \zeta_1(m) \\ \zeta_2(m) \end{array}\right]~~\forall A\in \O(2)~~\forall m\in \dot{U}_\e~~.
\ee
The $\SO(2)$ subgroup of orientation-preserving isometries acts by
shifting $\theta$ while the axis reflections act as $\theta\rightarrow
\pi -\theta$ (i.e. $\zeta \rightarrow -\bar{\zeta}$) and
$\theta\rightarrow -\theta$ (i.e.  $\zeta\rightarrow \bar{\zeta}$). In
particular, invariance under the $\SO(2)$ subgroup implies that we can
choose the origin of $\theta$ arbitrarily. Accordingly, canonical
coordinates at $\e$ are also determined only up to this $\O(2)$
action. Since the canonical coordinates are well-defined at $\e$, the
action $\rho_\e$ extends to an action $\bar{\rho}_\e:\O(2)\rightarrow
\Diff(U_\e)$ by diffeomorphisms of $U_\e$, which is given by:
\ben
\label{O2action2}
\left[\begin{array}{c} x(\bar{\rho}_\e({\hat m})) \\ y(\bar{\rho}_\e({\hat m})) \end{array}\right]= A \left[\begin{array}{c} x({\hat m}) \\ y({\hat m}) \end{array}\right]~~
\forall A\in \O(2)~~\forall {\hat m}\in U_\e~~.
\een

\paragraph{The end representation of the local isometry action.}

Differentiating the local isometry action $\bar{\rho}_\e$ at $\e$
gives a linear representation ${\hat \rho}_\e:\O(2)\rightarrow
\Aut(T_\e\hSigma)$ of $\O(2)$ on the tangent space $T_\e\hSigma$ which
transforms the basis vectors $v_x\eqdef \pd_x\vert_\e$ and $v_y\eqdef
\pd_y\vert_\e$ as:
\ben
\label{O2action3}
\left[\begin{array}{c} {\hat \rho}_\e(A)(v_x) \\ {\hat \rho}_\e(A)(v_y) \end{array}\right]=A^t \left[\begin{array}{c} v_x \\ v_y \end{array}\right]~~\forall A\in \O(2)~~.
\een
This representation is well-defined even though the metric
\eqref{emetric} is singular at the point $\e\in \hSigma$. Let:
\ben
\label{spe}
(~,~)_\e:T_\e\hSigma\times T_\e\hSigma\rightarrow \R
\een
be any scalar product on $T_\e\hSigma$ which is invariant with respect
to this representation. Since ${\hat \rho}_\e$ is equivalent with the
fundamental representation of $\O(2)$, such a scalar product is
determined up to homothety transformations of the form:
\ben
\label{kappatf}
(~,~)_\e\rightarrow \kappa (~,~)_\e~~\forall \kappa\in \R_{>0}~~.
\een
Moreover, ${\hat \rho}_\e$ is an injective map and we have:
\ben
\label{imlambda}
\im({\hat \rho}_\e)\eqdef {\hat \rho}_\e(\O(2))=\O(T_\e\Sigma,(~,~)_\e)~~,
\een
where the right hand side is the group of linear isometries of the
Euclidean vector space $(T_\e\hSigma,(~,~)_\e)$.

\subsection{Principal values and characteristic signs at a critical end}

\noindent Suppose that $\e$ is a critical end of $(\Sigma,V)$. A
choice of $\O(2)$-invariant scalar product \eqref{spe} on $T_\e\Sigma$
allows us to define a linear operator $H^V_\e\in \End(T_\e\Sigma)$
through the relation:
\be
\Hess(\hV)(\e)(u,v)=(H^V_\e(u),v)_\e~~\forall u,v\in T_\e\hSigma~~.
\ee
Since $\Hess(\hV)(\e)$ is a symmetric bilinear form, $H_\e^V$ is a
symmetric operator in the Euclidean vector space
$(T_\e\Sigma,(~,~)_\e)$.

\begin{definition}
An orthogonal basis $(\epsilon_1,\epsilon_2)$ of
$(T_\e\hSigma,(~,~)_\e)$ is called {\em principal} for $V$ if
$\epsilon_1$ and $\epsilon_2$ are eigenvectors of $H_\e^V$ ordered
such that their eigenvalues $\mu_1(\e)$ and $\mu_2(\e)$ satisfy:
\ben
\label{ineq}
|\mu_1(\e)|\,||\epsilon_1||^2_\e\leq |\mu_2(\e)|\,||\epsilon_2||^2_\e ~~,
\een
where $||~||_\e$ is the norm defined by the scalar product $(~,~)_\e$.
\end{definition}

\noindent Given a principal orthogonal basis $(\epsilon_1,\epsilon_2)$
of $T_\e\hSigma$, we have:
\be
\Hess(\hV)(\e)(u,v)=\lambda_1(\e) u^1 v^1 +\lambda_2(\e) u^2 v^2~~, 
\ee
where $u=\sum_{i=1}^2 u^i \epsilon_i$ and $v=\sum_{i=1}^2 v^i
\epsilon_i$ are arbitrary vectors of $T_\e\hSigma$ and we defined:
\ben
\label{lambdadef}
\lambda_i(\e)\eqdef \mu_i(\e) ||\epsilon_i||_\e^2~~(i=1,2)~~.
\een
Under a rescaling \eqref{kappatf} of $(~,~)_\e$, we have:
\ben
\label{normtf}
||~||_\e\rightarrow \kappa^{1/2} ||~||_\e~~.
\een
Moreover, the operator $H_\e^V$ transforms as $H_\e^V\rightarrow
\frac{1}{\kappa} H_\e^V$. Accordingly, its eigenvalues change as:
\ben
\label{mutf}
\mu_i(\e)\rightarrow \frac{1}{\kappa}\mu_i(\e)~~(i=1,2)~~.
\een
Using \eqref{mutf} and \eqref{normtf} in \eqref{lambdadef} shows that
$\lambda_i(\e)$ are invariant under such transformations and hence
depend only on $(\Sigma,G,V)$ and $\e$.

\begin{definition} The quantities $\lambda_1(\e)$ and $\lambda_2(\e)$
are called the {\em principal values} of $(\Sigma,G,V)$ at the
critical end $\e$.
\end{definition}

\begin{definition}
The globally well-behaved potential $V$ is called {\em circular} at
the critical end $\e$ if the Hessian $\Hess(\hV)(\e)\in
\Sym^2(T^\ast_\e\Sigma)$ satisfies:
\be
(\Hess(\hV)(\e))({\hat \rho}_\e(R)(u), {\hat \rho}_\e(R)(v))=\Hess(\hV)(\e)(u,v)~~\forall R\in \SO(2)~~\forall u,v\in T_\e\hSigma~.
\ee
\end{definition}

\noindent Notice that $V$ is circular at $\e$ iff
$\lambda_1(\e)=\lambda_2(\e)$.

\begin{definition}
The {\em critical modulus} $\beta_\e$ of $(\Sigma,\!G,\!V)$ at the
critical end $\e$ is the ratio:
\ben
\label{beta}
\beta_\e\eqdef \frac{\lambda_1(\e)}{\lambda_2(\e)}\in [-1,1]\setminus \{0\}~~,
\een
where $\lambda_1(\e)$ and $\lambda_2(\e)$ are the principal values of
$(\Sigma,G,V)$ at $\e$. The sign factors:
\be
\epsilon_i(\e)\eqdef \sign(\lambda_i(\e))\in \{-1,1\}~~(i=1,2)
\ee
are called the {\em characteristic signs} of $(\Sigma,G,V)$ at $\e$.
\end{definition}

\noindent Notice that $\sign(\beta_\e)=\epsilon_1(\e)\epsilon_2(\e)$.

\subsection{Principal canonical coordinates centered at a critical
  end}
\label{subsec:PrincCritEnds}

\begin{definition}
A canonical Cartesian coordinate system $(x,y)$ for $(\Sigma,G)$
centered at the critical end $\e$ is called {\em principal} for $V$ if
the tangent vectors $\epsilon_x=\frac{\pd}{\pd x}\bvert_\e$ and
$\epsilon_y=\frac{\pd}{\pd y}\bvert_\e$ form a principal basis for $V$
at $\e$.
\end{definition}

\noindent In a principal coordinate system $(x,y)$ centered at $\e$,
the Taylor expansion of $\hV$ at $\e$ has the form:
\beqan
\label{Vase}
\!\!\hV(x,y)\!&=&\!\hV(\e)+
\frac{1}{2}\!\left[\lambda_1(\e) x^2\!+\!\lambda_2(\e) y^2\right]\!+\!\O((x^2+y^2)^{3/2})\!=
\nn\\
&=&\!\hV(\e)+\frac{1}{2}\omega^2\!\left[\lambda_1(\e) \cos^2 \theta\!+\!\lambda_2(\e)\sin^2 \theta\right]\!+\!\O(\omega^3)~,
\eeqan
where $\lambda_1(\e)$ and $\lambda_2(\e)$ are the principal values of
$V$ at $\e$ and $\omega=\sqrt{x^2+y^2}$, $\theta= \arg(x+\i y)$.

\begin{remark} A system of principal canonical coordinates at $\e$
determines a system of semigeodesic coordinates near $\e$ through
relation \eqref{inv}, which will be called a system of {\em principal
semigeodesic coordinates} near $\e$.
\end{remark}

Let $\Delta\simeq \Z_2\times \Z_2$ be the subgroup of $\O(2)$
generated by the axis reflections $(x,y)\rightarrow (-x,y)$ and
$(x,y)\rightarrow (x,-y)$. This subgroup also contains the point
reflection $(x,y)\rightarrow (-y,-x)$.

\begin{prop}
\label{prop:pn} There exists a principal Cartesian canonical
coordinate system $(x,y)$ for $(\Sigma,G,V)$ at every critical end
$\e$. When $V$ is circular at $\e$, these coordinates are determined
by $V$ and $G$ up to an $\O(2)$ transformation. When $V$ is not
circular at $\e$, these coordinates are determined by $V$ and $G$ up
to the action of the subgroup $\Delta$ of $\O(2)$.
\end{prop}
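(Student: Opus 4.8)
\medskip
\noindent\textbf{Proof strategy.} The plan is to reduce the statement to elementary linear algebra on the two-dimensional real vector space $T_\e\hSigma$, using two facts recalled above: that any two canonical Cartesian coordinate systems centered at $\e$ differ by the action $\bar\rho_\e(A)$ of some $A\in\O(2)$, and that the induced representation $\hat\rho_\e:\O(2)\to\Aut(T_\e\hSigma)$ is injective with image $\O(T_\e\hSigma,(~,~)_\e)$ and equivalent with the fundamental representation of $\O(2)$. First I would normalize the invariant scalar product: since $\hat\rho_\e$ is irreducible over $\R$, Schur's lemma (equivalently, a direct computation using the rotation by $\pi/2$ in $\hat\rho_\e(\O(2))$ together with \eqref{O2action3}) shows that the Gram matrix of $(~,~)_\e$ in the basis $(\epsilon_x,\epsilon_y)$ attached to any canonical Cartesian chart is a positive multiple of the identity. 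After a homothety \eqref{kappatf} of $(~,~)_\e$ --- which changes neither the eigenvectors of $H^V_\e$, nor their orthogonality, nor the principal values $\lambda_i(\e)$, and so leaves the notion of a principal chart untouched --- we may assume that $(\epsilon_x,\epsilon_y)$ is orthonormal for $(~,~)_\e$ in every canonical Cartesian chart. With this normalization, $H^V_\e$ is represented in the basis $(\epsilon_x,\epsilon_y)$ by the symmetric matrix of $\Hess(\hV)(\e)$ written in the coordinates $(x,y)$, so the chart $(x,y)$ is principal for $V$ precisely when that matrix is diagonal, in which case it necessarily equals $\diag(\lambda_1(\e),\lambda_2(\e))$ ordered so that $|\lambda_1(\e)|\le|\lambda_2(\e)|$.

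Existence is then immediate from the spectral theorem: starting from any canonical Cartesian chart $(x_0,y_0)$ centered at $\e$, pick $A\in\O(2)$ diagonalizing the Hessian matrix of $\hV$ at $\e$ in those coordinates, with the eigenvalues ordered by increasing absolute value; the chart produced by the diffeomorphism $\bar\rho_\e(A)$ is then principal. For the rigidity statements, let $(x,y)$ and $(x',y')$ be two principal charts. They differ by $\bar\rho_\e(A)$ for some $A\in\O(2)$, and the induced change of bases between $(\epsilon_x,\epsilon_y)$ and $(\epsilon_{x'},\epsilon_{y'})$ is given by some $B\in\O(2)$ (one of $A$ or $A^t$, which is immaterial here since both answers below are closed under transpose). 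Because the Hessian matrix equals $\diag(\lambda_1(\e),\lambda_2(\e))$ in both charts, $B$ must centralize that diagonal matrix inside $\O(2)$. If $V$ is circular at $\e$, then $\lambda_1(\e)=\lambda_2(\e)$, the matrix is scalar, its centralizer in $\O(2)$ is all of $\O(2)$, and conversely any $A\in\O(2)$ carries a principal chart to a principal chart (every orthonormal basis is an eigenbasis of a scalar operator), so the chart is determined up to $\O(2)$. If $V$ is not circular, then $\lambda_1(\e)\ne\lambda_2(\e)$, the centralizer of $\diag(\lambda_1(\e),\lambda_2(\e))$ in $\GL(2,\R)$ consists of diagonal matrices, hence $B\in\O(2)$ has the form $\diag(\pm1,\pm1)$, i.e. $B\in\Delta$; conversely each element of $\Delta$ visibly maps principal charts to principal charts, so the chart is determined up to $\Delta$.

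The one point requiring care is the interplay between the ordering convention \eqref{ineq} and a possible coincidence $|\lambda_1(\e)|=|\lambda_2(\e)|$. When $|\lambda_1(\e)|<|\lambda_2(\e)|$, the labelled pair $(\lambda_1(\e),\lambda_2(\e))$ --- and hence the diagonal matrix above --- is determined unambiguously by $(\Sigma,G,V)$ and $\e$, and the argument runs verbatim. In the borderline non-circular case $\lambda_1(\e)=-\lambda_2(\e)$ (that is, $\beta_\e=-1$ in \eqref{beta}), the inequality \eqref{ineq} does not separate the two eigendirections, so one first has to adopt a tie-breaking convention (for instance, ordering so that $\lambda_1(\e)<0<\lambda_2(\e)$) for the $\Delta$-rigidity statement to hold as written; with any such convention, the centralizer computation is unchanged. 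I expect this case distinction, together with keeping careful track of the transpose appearing in \eqref{O2action3} when passing from the $\O(2)$-action on charts to its action on $T_\e\hSigma$, to be the only genuine bookkeeping obstacle --- the mathematical content being just the spectral theorem and a one-line centralizer computation.
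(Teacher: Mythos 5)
Your proof is correct and follows essentially the same route as the paper's (one-line) proof: starting from any canonical Cartesian chart, use the local $\O(2)$ isometry action to rotate it into an eigenbasis of the Hessian operator, with the rigidity statements reducing to the computation of the centralizer of $\diag(\lambda_1(\e),\lambda_2(\e))$ inside $\O(2)$. Your observation that the borderline non-circular case $\lambda_1(\e)=-\lambda_2(\e)$ requires a tie-breaking convention for the $\Delta$-rigidity to hold as literally stated is a genuine refinement of the paper's ``the remaining statements are obvious.''
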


\begin{proof}
Starting from any Cartesian canonical coordinate system of
$(\Sigma,G)$ centered at $\e$, we can use the local isometry action
\eqref{O2action2} to rotate it into a principal canonical coordinate
system centered at $\e$.  The remaining statements are obvious.
\end{proof}

\noindent If $V$ is circular at $\e$, then any Cartesian canonical
coordinate system centered at $\e$ is principal. When $V$ is not
circular at $\e$ (i.e. when $\lambda_1(\e)\neq \lambda_2(\e)$), the
geodesic orbits of $(\dot{U}_\e,G)$ given by
$(\theta-\theta_\e){\mathrm \mod}\,2\pi \in
\{0,\frac{\pi}{2},\pi,\frac{3\pi}{2}\}$ will be called the {\em
principal geodesic orbits} at $\e$ determined by $V$. These geodesic
orbits correspond to the four semi-axes defined by the principal
Cartesian coordinate system $(x,y)$ centered at $\e$; they have the
end $\e$ as a common limit point.

\subsection{The asymptotic form of $G$ near the ends}

\noindent In canonical Cartesian coordinates $(x,y)$ centered at $\e$, we have
$\omega=\sqrt{x^2+y^2}=\frac{1}{r}$, $\theta=\arg(x+\i y)$ and (see
\eqref{emetric}):
\ben
\label{eomegametric}
\dd s_G^2|_{\dot{U}_\e}=\frac{\dd \omega^2}{\omega^4}+f_\e(1/\omega)\dd \theta^2~~,
\een
with:
\ben
\label{feas}
f_\e(1/\omega)= {\tilde c}_\e e^{\frac{2\epsilon_\e}{\omega}}\left[1+\O\left(e^{-\frac{2}{\omega}}\right)\right]~~\mathrm{for}~~\omega\ll 1~~,
\een
where\footnote{The quantity ${\tilde c}_\e$ is related to the quantity $c_\e$ used in \cite{genalpha} through the formula
  ${\tilde c}_\e=\left(\frac{c_\e}{4\pi}\right)^2$.}:
\ben
\label{tce}
{\tilde c}_\e=\fourpartdef{\frac{1}{4}}{~\e=\mathrm{plane~end}}
{\frac{1}{(2\pi)^2}}{~\e=\mathrm{horn~end}}
{\frac{\ell^2}{(4\pi)^2}}{~\e=\mathrm{funnel~end~of~circumference}~\ell>0}{\frac{1}{(2\pi)^2}}{~\e=\mathrm{cusp~end}}
\een
and:
\ben
\label{epsilone}
\epsilon_\e=\twopartdef{+1}{~\e=\mathrm{flaring~(i.e.~plane,~horn~or~funnel)~end}}{-1}{~\e=\mathrm{cusp~end}}~~.
\een
The term $\O\left(e^{-\frac{2}{\omega}}\right)$ in \eqref{feas}
vanishes identically when $\e$ is a cusp or horn end. In particular,
the constants ${\tilde c}_\e$ and $\epsilon_\e$ determine the leading
asymptotic behavior of the hyperbolic metric $G$ near $\e$.

The gradient flow equations of $(\dot{U}_\e,G|_{\dot{U}_\e},V|_{\dot{U}_\e})$ read:
\beqan
\label{gradeqe}
&& \frac{\dd \omega}{\dd q}=-(\grad V)^\omega\simeq -\omega^4 \pd_\omega V~~\nn\\
&& \frac{\dd \theta}{\dd q}=-(\grad V)^\theta\simeq -\frac{1}{{\tilde c}_\e}e^{-\frac{2\epsilon_\e}{\omega}}\pd_\theta V~~.
\eeqan
We now proceed to study these equations for each end of $\Sigma$.

\begin{remark}
Recall that $V$ is globally well-behaved and $\hV$ is Morse on
$\hSigma$. Together with the formulas above, this implies that
$(\grad_G V)^\omega$ tends to zero at all ends while $(\grad_G
V)^\theta$ tends to zero exponentially at flaring ends and to infinity
at cusp ends. On the other hand, we have:
\be
\label{nue}
||\grad_G V||^2=||\dd V||^2=\frac{1}{\omega^4} (\partial_\omega V)^2+f_e(1/\omega) (\partial_\theta V)^2\approx \frac{1}{\omega^4} (\partial_\omega V)^2+ {\tilde c}_\e e^{\frac{2\epsilon_\e}{\omega}} (\partial_\theta V)^2~~.
\ee
Thus $||\grad_G V||$ tends to infinity at all ends.
\end{remark}

\section{The IR phases of interior critical points}
\label{sec:int}

\noindent To describe the IR phases of interior critical points, we
first use the hyperbolic geometry of $(\Sigma,G)$ to introduce
convenient local coordinate systems centered at such points. We first
describe the canonical systems of local coordinates afforded by the
hyperbolic metric around each point of $\Sigma$, then we specialize
these to local coordinates centered at an interior critical point in
which the second order of the Taylor expansion of $V$ has no
off-diagonal terms. Using such coordinates allows us to determine
explicitly the asymptotic form of the gradient flow of $(\Sigma,G,V)$
near each interior critical point.

\subsection{Canonical local coordinates centered at a point of $\Sigma$}
\label{subsec:CanIntCrit}

\noindent Denote by $\exp_m^G:T_m\Sigma\rightarrow \Sigma$ the exponential
map of $(\Sigma,G)$ at a point $m\in \Sigma$. Since the metric $G$ is
complete, this map is surjective by the Hopf-Rinow theorem. For any point $m\in
\Sigma$, let $r(m)$ be the injectivity radius of $G$ at $m$ (see
\cite{Petersen}) and set
$\omega_{\rmax}(m)=\tanh\big(\frac{r(m)}{2}\big)$. Let:
\be
\rD_{\omega_\rmax(m)}\eqdef \{(x,y)\in \R^2|\sqrt{x^2+y^2}< \omega_{\rmax}(m)\}~~
\ee
be the open disk of radius $\omega_{\rmax}(m)$ centered at the origin of the plane.

\begin{definition}
A system of {\em canonical Cartesian coordinates} for $(\Sigma,G)$ at
a point $m\in \Sigma$ is a system of local coordinates $(U_m,x,y)$
centered at $m$ (where $U_m$ is an open neighborhood of $m$) such that
the image of the coordinate map $(x,y):U_m\rightarrow \R^2$ coincides
with the disk $\rD_{\omega_\rmax(m)}$ and the restriction of the
hyperbolic metric $G$ to $U_m$ takes the Poincar\'e form:
\ben
\label{princanint}
\dd s^2_G\vert_{U_m}=\frac{4}{[1-(x^2+y^2)]^2} (\dd x^2+\dd y^2)~~.
\een  
\end{definition}

\begin{remark}
The map $f=(x,y):U_m\rightarrow \rD_{\omega_\rmax(m)}$ is an isometry
between $(U_m,G_m)$ and the region $|u|\leq
\tanh\big(\frac{r(m)}{2}\big)$ of the Poincar\'e disk. The proof
below shows that $U_m=\exp_m(B(r(m)))$, where $B(r(m))\subset
T_m\Sigma$ is the Euclidean disk of radius $r(m)$ centered at the
origin of the Euclidean space $(T_m\Sigma,G_m)$.
\end{remark}

\begin{prop}
\label{prop:CanCoords}
Canonical Cartesian coordinates for $(\Sigma,G)$ exist at any point $m\in \Sigma$. 
\end{prop}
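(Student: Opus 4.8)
The plan is to construct the required coordinate chart directly from the Riemannian exponential map of $(\Sigma,G)$ at $m$ and the classical fact that small geodesic balls in a hyperbolic surface are isometric to geodesic balls in the Poincar\'e disk. First I would fix $m\in\Sigma$, let $r(m)$ be the injectivity radius of $G$ at $m$, and consider the Euclidean ball $B(r(m))\subset T_m\Sigma$ of radius $r(m)$ centered at the origin of the Euclidean vector space $(T_m\Sigma,G_m)$. Since $G$ is complete, $\exp_m^G$ is defined on all of $T_m\Sigma$, and by the definition of the injectivity radius its restriction to $B(r(m))$ is a diffeomorphism onto an open neighborhood $U_m\eqdef\exp_m^G(B(r(m)))$ of $m$.

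Next I would transport the hyperbolic metric to $B(r(m))$ via $\exp_m^G$ and identify it. Choosing a $G_m$-orthonormal basis of $T_m\Sigma$ gives geodesic polar coordinates $(\rho,\theta)$ on $B(r(m))$ in which, by the standard computation of the metric in normal coordinates for a surface of constant curvature $-1$, one has $\dd s_G^2=\dd\rho^2+\sinh^2(\rho)\,\dd\theta^2$. The change of radial variable $\omega=\tanh(\rho/2)$ (so that $0\le\rho<r(m)$ corresponds to $0\le\omega<\omega_\rmax(m)=\tanh(r(m)/2)$) brings this to $\dd s_G^2=\frac{4}{[1-\omega^2]^2}(\dd\omega^2+\omega^2\dd\theta^2)$, which in the Cartesian coordinates $x=\omega\cos\theta$, $y=\omega\sin\theta$ is exactly the Poincar\'e form \eqref{princanint}. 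Composing the identifications produces a coordinate map $(x,y):U_m\to\R^2$ whose image is precisely the disk $\rD_{\omega_\rmax(m)}$ and in which $G$ takes the required form, so $(U_m,x,y)$ is a system of canonical Cartesian coordinates at $m$; this also verifies the assertion in the preceding remark that $U_m=\exp_m^G(B(r(m)))$.

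The only genuinely nontrivial input is the identification of the metric in geodesic normal coordinates, i.e. that a geodesic ball of radius $<r(m)$ in a surface of constant curvature $-1$ is isometric to the corresponding ball in $\mH^2$; this is the standard local rigidity of space forms (Cartan's theorem, or the explicit Jacobi-field computation giving $f(\rho)=\sinh\rho$ as the solution of $f''-f=0$ with $f(0)=0$, $f'(0)=1$), and I expect it to be the main point one must invoke rather than re-derive. Everything else — surjectivity of $\exp_m^G$ from completeness via Hopf--Rinow, injectivity on $B(r(m))$ from the definition of injectivity radius, and the elementary substitution $\omega=\tanh(\rho/2)$ — is routine, so the proof is short.
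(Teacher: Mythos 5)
Your proposal is correct, and it reaches the same intermediate object as the paper --- the identification of $U_m=\exp_m^G(B(r(m)))$ with a hyperbolic disk of radius $r(m)$ --- but by a different key input. The paper invokes the Poincar\'e uniformization theorem to get a local isometry $f$ of a neighborhood of $m$ into the Poincar\'e disk $\mD$, uses homogeneity of $\mD$ to arrange $f(m)=0$, and then computes the image $\exp_0^\mD(B(r(m)))=\rD_{\omega_\rmax(m)}$ from the hyperbolic length formula $\|v\|=2\arctanh|u|$; the coordinates are simply $(x,y)=f$. You instead work intrinsically: pull back $G$ to $B(r(m))$ via $\exp_m^G$, identify the metric in geodesic polar coordinates as $\dd\rho^2+\sinh^2(\rho)\,\dd\theta^2$ by the Jacobi-field computation (equivalently Cartan's local rigidity of space forms), and then perform the explicit substitution $\omega=\tanh(\rho/2)$ to land on the Poincar\'e form --- which is the same computation as the paper's length formula, read in the opposite direction. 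Your route is slightly more self-contained: it does not need uniformization, and it sidesteps the point (left implicit in the paper) that the local isometry furnished by uniformization actually extends to all of $\exp_m^G(B(r(m)))$ rather than just some small neighborhood of $m$; your normal-coordinate computation is valid on the whole ball where $\exp_m^G$ is a diffeomorphism by definition of the injectivity radius. The paper's route is shorter given that uniformization is already a standing hypothesis of the whole framework. Both are complete proofs.
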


\begin{proof}
By the Poincar\'e uniformization theorem, the hyperbolic surface
$(\Sigma,G)$ is locally isometric with the Poincar\'e disk
$\mD$. Since the latter is a homogeneous space, the local isometry $f$
in a vicinity of $m$ can be taken to send $m$ to the origin of $\mD$
and to be defined on the vicinity $U_m=\exp_m(B(r(m)))$ of $m$ in
$\Sigma$, where $B(r(m))\subset T_m\Sigma$ is the Euclidean disk of radius
$r(m)$ centered at the origin of $T_m\Sigma$. The image of $U_m$
through $f$ coincides with the image of a similar ball inside $T_0\mD$
through the exponential map $\exp_0^\mD$ at the origin of the
Poincar\'e disk. Recall that $\exp_0(v)=\psi_v(1)$, where $\psi_v(t)$
is a geodesic of $\mD$ which starts at the origin and satisfies
$\frac{\dd \psi_v(t)}{\dd t}\bvert_{t=0}=v\in T_0\mD$.  For such a
geodesic, the set $\psi_v([0,1])$ is a segment of hyperbolic length
$||v||$ which connects the origin of $\mD$ with the point $u\eqdef
\psi_v(1)\in \mD$. The hyperbolic length formula gives
$||v||=2\arctanh|u|$, hence $\exp_0(v)\!=\!\psi_v(1)$ lies on the circle
of Euclidean radius $\tanh\left(\frac{||v||}{2}\right)$. Thus
$\exp_0^\mD(B(r(m)))\!=\!D_{\omega_\rmax(m)}$ and hence $f$ is a
diffeomorphism from $U_m$ to $D_{\omega_\rmax(m)}$. The conclusion
follows by setting $(x, y)=f$.
\end{proof}

\subsection{Principal values, critical modulus and characteristic signs at an interior critical point}

\begin{definition} The {\em principal values} of
$(\Sigma,G,V)$ at an interior critical point $\c\in \Crit V$ are the
eigenvalues $\lambda_1(\c)$ and $\lambda_2(\c)$ of the Hessian
operator $\wHess_G(V)(\c)\in \End^s(T_\c\Sigma)$, ordered such that
$|\lambda_1(\c)|\leq |\lambda_2(\c)|$. We say that $V$ is {\em
circular} at $\c$ if $\lambda_1(\c)=\lambda_2(\c)$.
\end{definition}

\begin{remark}
Notice that $\wHess_G(V)(\c)$ is the symmetric linear operator defined by
$\Hess(V)(\c)$ in the Euclidean vector space $(T_\c\Sigma,G_\c)$:
\be
\Hess(V)(\c)(w_1,w_2)=G_\c(\wHess_G(V)(\c)(w_1),w_2)~~\forall w_1,w_2\in T_\c\Sigma~~.
\ee
\end{remark}

\noindent When $V$ is not circular at $\c$, the one-dimensional
eigenspaces $L_1(\c),L_2(\c)\subset T_\c\Sigma$ of $\wHess_G(V)(\c)$
are called the {\em principal lines} of $(\Sigma,G,V)$ at $\c$. The
geodesic orbits $\exp_\c^G(L_1(\c))$ and $\exp_\c^G(L_2(\c))$
determined by the principal lines are called the {\em principal
geodesic orbits} of $(\Sigma,G,V)$ at $\c$.

\begin{definition}
With the notations of the previous definition, the {\em critical
modulus} $\beta_\c$ and {\em characteristic signs} $\epsilon_1(\c)$
and $\epsilon_2(\c)$ of $(\Sigma,G,V)$ at $\c$ are defined through:
\ben
\beta_\c\eqdef\frac{\lambda_1(\c)}{\lambda_2(\c)}\in [-1,1]\setminus \{0\}~~,~~\epsilon_i(\c)\eqdef \sign(\lambda_i(\c))~~(i=1,2)~~.
\een
\end{definition}

\noindent Notice the relation:
\be
\sign(\beta_\c)=\epsilon_1(\c)\epsilon_2(\c)~~.
\ee
The critical point $\c$ is a local extremum of $V$ (a sink or a
source) when $\beta_\c>0$ and a saddle point of $V$ when
$\beta_\c<0$. In the first case, $\c$ is a sink
($\epsilon_1(\c)=\epsilon_2(\c)=1$) or source
($\epsilon_1(\c)=\epsilon_2(\c)=-1$) depending on whether the gradient
curves of $(\Sigma,G,V)$ near $\c$ flow toward $\c$ or away from
$\c$. When $\c$ is a saddle point, the gradient curves flow toward the
$x$ axis when $\epsilon_1(\c)=-1$ and toward the $y$ axis when
$\epsilon_2(\c)=-1$.

\subsection{Principal canonical coordinates centered at an interior critical point}

\noindent \noindent Let $\c\in \Crit V$ be an interior critical point.

\begin{definition}
A system of {\em principal Cartesian canonical coordinates} for
$(\Sigma,G,V)$ at $\c$ is a system of canonical Cartesian coordinates for
$(\Sigma,G)$ centered at $\c$ such that the tangent vectors
$\epsilon_x=\frac{\pd}{\pd x}\bvert_\c$ and $\epsilon_y=\frac{\pd}{\pd
y}\bvert_\c$ are eigenvectors of the Hessian operator
$\wHess_G(V)(\c)$ corresponding to the principal values
$\lambda_1(\c)$ and $\lambda_2(\c)$ of $(\Sigma,G,V)$ at $\c$.
\end{definition}

\noindent In principal Cartesian canonical coordinates $(x,y)$
centered at $\c$, the Taylor expansion of $\hV$ at $\c$ has the form:
\beqan
\label{Vasc}
\!\!\hV(x,y)\!&=&\hV(\c)+
\!\frac{1}{2}\!\left[\lambda_1(\c) x^2\!+\!\lambda_2(\c) y^2\right]\!+\!\O((x^2+y^2)^{3/2})\!=\nn\\
&=&\!\hV(\c)+\frac{1}{2}\omega^2\!\left[\lambda_1(\c) \cos^2 \theta\!+\!\lambda_2(\c)\sin^2 \theta\right]\!+\!\O(\omega^3)~,
\eeqan
where $\lambda_1(\c)$ and $\lambda_2(\c)$ are the principal values of
$V$ at $\c$ and we defined $\omega\eqdef \sqrt{x^2+y^2}$, $\theta\eqdef
\arg(x+\i y)$.

\begin{prop}
Principal Cartesian canonical coordinates for $(\Sigma,G,V)$ exist
at any interior critical point $\c\in \Crit V$.
\end{prop}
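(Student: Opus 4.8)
The plan is to mirror the argument already given for Proposition~\ref{prop:pn} in the setting of a critical end, now using the canonical coordinate system afforded by the hyperbolic metric at an interior point (Proposition~\ref{prop:CanCoords}) together with the linear-algebraic fact that a symmetric operator on a Euclidean plane is diagonalized by an orthonormal basis. First I would invoke Proposition~\ref{prop:CanCoords} to fix an arbitrary system of canonical Cartesian coordinates $(U_\c,x',y')$ for $(\Sigma,G)$ centered at $\c$, so that $G$ takes the Poincar\'e form \eqref{princanint} on $U_\c$ and the coordinate map identifies $U_\c$ isometrically with the disk $\rD_{\omega_\rmax(\c)}$.

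Next I would analyze the ambiguity in this choice. Two canonical coordinate systems at $\c$ differ by a diffeomorphism of $\rD_{\omega_\rmax(\c)}$ fixing the origin and preserving the Poincar\'e metric \eqref{princanint}; such an isometry of a neighborhood of $0$ in the Poincar\'e disk fixing $0$ is the restriction of a rotation or a reflection, i.e.\ an element of $\O(2)$ acting linearly on $(x',y')$. (This is the exact analogue of the local isometry action used in the proof of Proposition~\ref{prop:pn}, here in disk coordinates rather than semigeodesic coordinates.) Differentiating at $\c$, the pair $(\epsilon_{x'},\epsilon_{y'})$ is an orthonormal basis of the Euclidean plane $(T_\c\Sigma,G_\c)$, and the residual $\O(2)$ acts on it as the full orthogonal group of that plane.

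Then I would use the spectral theorem for the symmetric operator $\wHess_G(V)(\c)\in\End^s(T_\c\Sigma)$: there is an orthonormal basis of $(T_\c\Sigma,G_\c)$ consisting of eigenvectors with eigenvalues $\lambda_1(\c),\lambda_2(\c)$, ordered so that $|\lambda_1(\c)|\leq|\lambda_2(\c)|$. Since the residual $\O(2)$ acts transitively on orthonormal frames, I can rotate (and, if necessary, reflect) the starting frame $(\epsilon_{x'},\epsilon_{y'})$ into such an eigenbasis; the corresponding linear change of the canonical coordinates $(x',y')$ is itself an isometry of \eqref{princanint}, hence yields again a canonical Cartesian coordinate system $(x,y)$ for $(\Sigma,G)$ centered at $\c$, now with $\epsilon_x,\epsilon_y$ eigenvectors of $\wHess_G(V)(\c)$ for $\lambda_1(\c),\lambda_2(\c)$. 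This is by definition a system of principal Cartesian canonical coordinates at $\c$, and the Taylor expansion \eqref{Vasc} follows immediately from the definition of the Hessian operator and Taylor's theorem.

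I expect no serious obstacle here: the only point requiring a line of care is the classification of metric-preserving origin-fixing diffeomorphisms of a disk in $\mD$ as linear $\O(2)$ maps, which follows because an isometry of a geodesically complete model space is determined by its $1$-jet at a point and the isotropy group of a point in $\mD$ is $\O(2)$ acting linearly in these coordinates --- exactly the same fact implicitly used in Proposition~\ref{prop:pn}. The remaining statements (existence, and the residual ambiguity being $\O(2)$ in the circular case and $\Delta$ otherwise, if one wishes to record it) are then routine, in complete parallel with Proposition~\ref{prop:pn}.
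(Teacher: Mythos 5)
Your proposal is correct and follows essentially the same route as the paper's own proof: start from an arbitrary canonical Cartesian coordinate system at $\c$ (Proposition \ref{prop:CanCoords}), observe that the residual freedom is an $\O(2)$ action by origin-fixing isometries of the Poincar\'e form \eqref{princanint} acting orthogonally on $(T_\c\Sigma,G_\c)$, and use the spectral theorem to rotate the frame into an eigenbasis of $\wHess_G(V)(\c)$ with the eigenvalues ordered by absolute value. The extra care you take in justifying that the origin-fixing isometries are exactly the linear $\O(2)$ maps is a welcome elaboration of a point the paper leaves implicit, but the argument is the same.
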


\begin{proof}
The proof of Proposition \ref{prop:CanCoords} implies that the
canonical Cartesian coordinates of $(\Sigma,G)$ centered at $\c$ are
determined up to transformations of the form:
\be
\left[\begin{array}{c} x\\ y\end{array}\right]\rightarrow A
\left[\begin{array}{c} x\\ y\end{array}\right]~~\forall A\in \O(2)~~.
\ee
These corresponds to isometries of the Poincar\'e disk metric which
fix the origin and induce orthogonal transformations of the Euclidean
space $(T_\c\Sigma,G_\c)$. Performing such a transformation we can ensure
that the orthogonal vectors $\pd_x\vert_\c$ and $\pd_y\vert_\c$ are
eigenvectors of the symmetric operator ${\wHess}_G(V)(\c)\in \End(T_\c
\Sigma)$ whose eigenvalues $\lambda_1(\c)$ and $\lambda_2(\c)$ satisfy
$|\lambda_1(\c)|\leq |\lambda_2(\c)|$.
\end{proof}

\subsection{The infrared behavior near an interior critical point}
\label{IRint}

\noindent Let $\c$ be an interior critical point and $(x,y)$ be
principal Cartesian canonical coordinates centered at $\c$. Setting
$\omega\eqdef\sqrt{x^2+y^2}$ and $\theta\eqdef \arg(x+\i y)$, we have:
\be
\dd s^2_G=\frac{4}{(1-\omega^2)^2}[\dd \omega^2+\omega^2\dd \theta^2]~~
\ee
and:
\be
V(\omega,\theta)= V(\c)+\frac{1}{2}\omega^2\left[\lambda_1(\c) \cos^2\theta +\lambda_2(\c) \sin^2\theta\right]+\O(\omega^3)~~.
\ee
Thus:
\beqan
\label{gradc}
&& (\grad V)^\omega\!\approx\!\frac{(1-\omega^2)^2}{4} \pd_\omega V\!=\!\frac{(1-\omega^2)^2\omega}{4}[\lambda_1(\c)\cos^2\theta+\lambda_2(\c)\sin^2\theta]~,\nn\\
&& (\grad V)^\theta\!\approx\!\frac{(1-\omega^2)^2}{4\omega^2} \pd_\theta V\!=\!\frac{(1-\omega^2)^2}{4} [\lambda_2(\c)\!-\!\lambda_1(\c)]\sin(\theta)\cos(\theta)~.
\eeqan
Distinguish the cases:
\begin{enumerate}
\item $\lambda_1(\c)=\lambda_2(\c):=\lambda(\c)$,
  i.e. $\beta_\c=1$. Then $\epsilon_1(\c)=\epsilon_2(\c):=\epsilon(\c)$
and $\c$ is a local minimum of $V$ when $\lambda(\c)$ is positive
(i.e. when $\epsilon(\c)=1$) and a local maximum of $V$ when
$\lambda(\c)$ is negative (i.e. when $\epsilon(\c)=-1$). Relations
\eqref{gradc} become:
\be
(\grad V)^\omega\approx \frac{(1-\omega^2)^2\omega}{4}\lambda(\c)~~,~~ (\grad V)^\theta\approx 0
\ee
and the gradient flow equation of $(\Sigma,G,V)$ takes the following approximate form near $\c$:
\be
\frac{\dd \omega}{\dd q}=-\frac{(1-\omega^2)^2\omega}{4}\lambda(\c)~~,~~\frac{\dd \theta}{\dd q}=0~~.
\ee
This gives $\theta=\const$, i.e. the gradient flow curves near $\c$
are approximated by straight lines through the origin when drawn in
principal Cartesian canonical coordinates $(x,y)$ at $\c$; their orbits are
geodesic orbits of $(\Sigma,G)$ passing though $\c$ since $G$ identifies
near $\c$ with the Poincar\'e disk metric. The gradient lines flow
toward/from the origin when $\c$ is a local minimum/maximum of $V$.
\item $\lambda_1(\c)\neq \lambda_2(\c)$, i.e. $\beta_\c\neq 1$.  When
$\theta\in \{0,\frac{\pi}{2}, \pi, \frac{3\pi}{2}\}$, the gradient
flow equation reduces to:
\beqa
&&\frac{\dd \omega}{\dd q}=\frac{(1-\omega^2)^2\omega}{4}\times \twopartdef{\lambda_1(\c)}{\theta\in \{0,\pi\}}{\lambda_2(\c)}{\theta\in \{\frac{\pi}{2},\frac{3\pi}{2}\}}\\
&& \frac{\dd \theta}{\dd q}=0~~.
\eeqa
This gives four gradient flow orbits which are approximated near $\c$
by the principal geodesic orbits. When $\theta\not \in
\{0,\frac{\pi}{2}, \pi, \frac{3\pi}{2}\}$, the gradient flow equation
takes the form:
\ben
\label{gradcn}
(1-\beta_\c)\frac{\dd \omega}{\dd \theta}=\omega (\beta_\c\cot\theta+\tan\theta)~~,
\een
with general solution:
\ben
\label{gradflowc}
\omega=C \frac{|\sin(\theta)|^{\frac{\beta_\c}{1-\beta_\c}}}{|\cos(\theta)|^\frac{1}{1-\beta_\c}}~~,
\een
where $C$ is a positive integration constant and we used the primitives:
\ben
\label{primitives}
\int \dd \theta \cot\theta =\log|\sin \theta|~~\mathrm{and}~~\int \dd \theta \tan\theta =-\log|\cos \theta|~~.
\een
This gives four gradient flow orbits, each of which lies within one of
the four quadrants; the orbits are related to each other by
reflections in the coordinate axes. In this case, $\beta_\c$ can be
positive or negative and the orientation of the gradient flow orbits
is determined by the characteristic signs.
\end{enumerate}

\noindent Figure \ref{fig:IntCritical} shows the unoriented gradient
flow orbits of $V$ near an interior critical point $\c$ for
$\beta_\c=-1/2$ and $\beta_\c=+1/2$.

\begin{figure}[H]
\centering
\begin{minipage}{.49\textwidth}
\centering  \includegraphics[width=.95\linewidth]{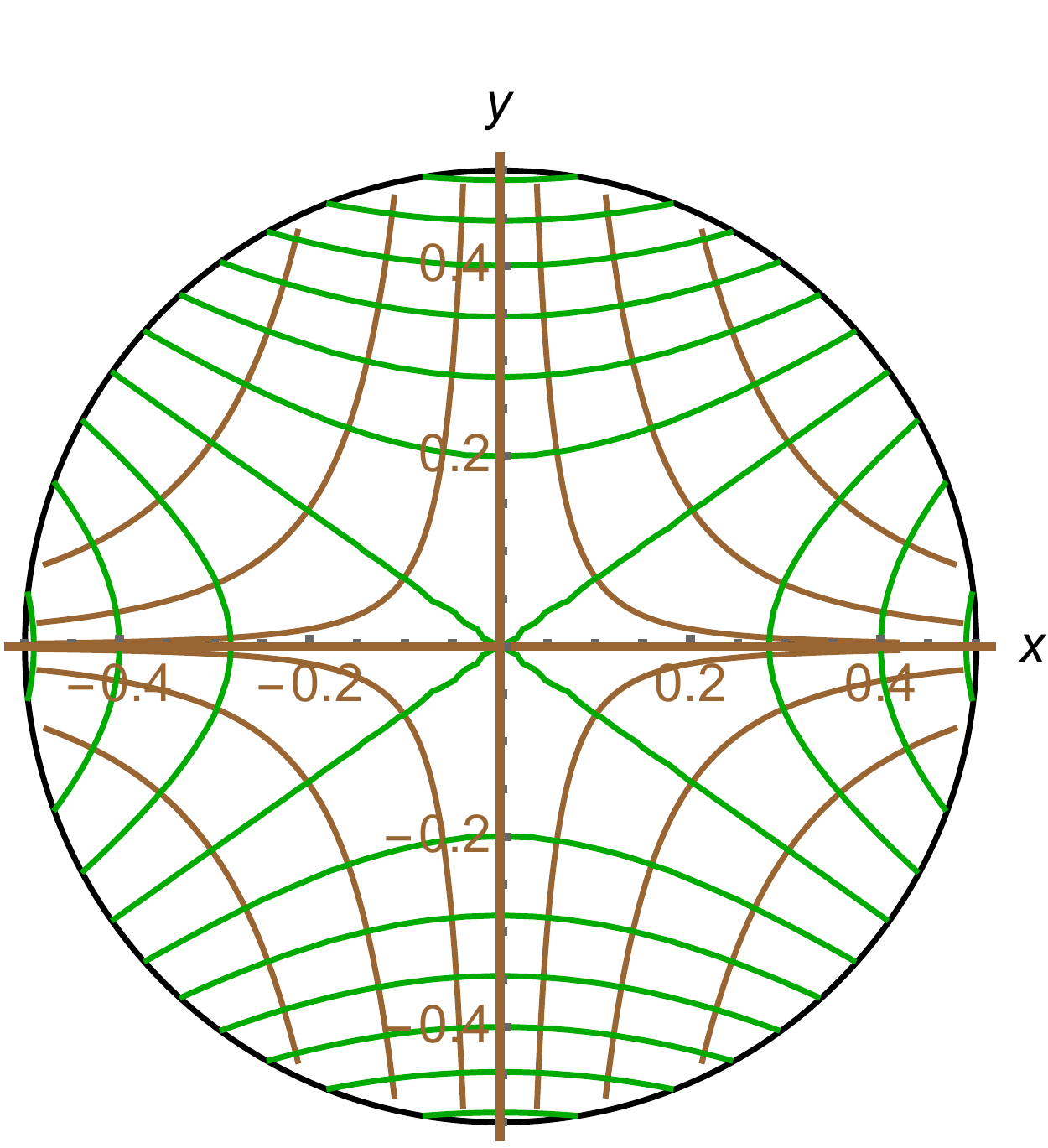}
\subcaption{For $\beta_\c\!=\!\!-0.5$ (interior saddle point of $V$)}
\end{minipage}\hfill 
\begin{minipage}{.5\textwidth}
\centering \includegraphics[width=.93\linewidth]{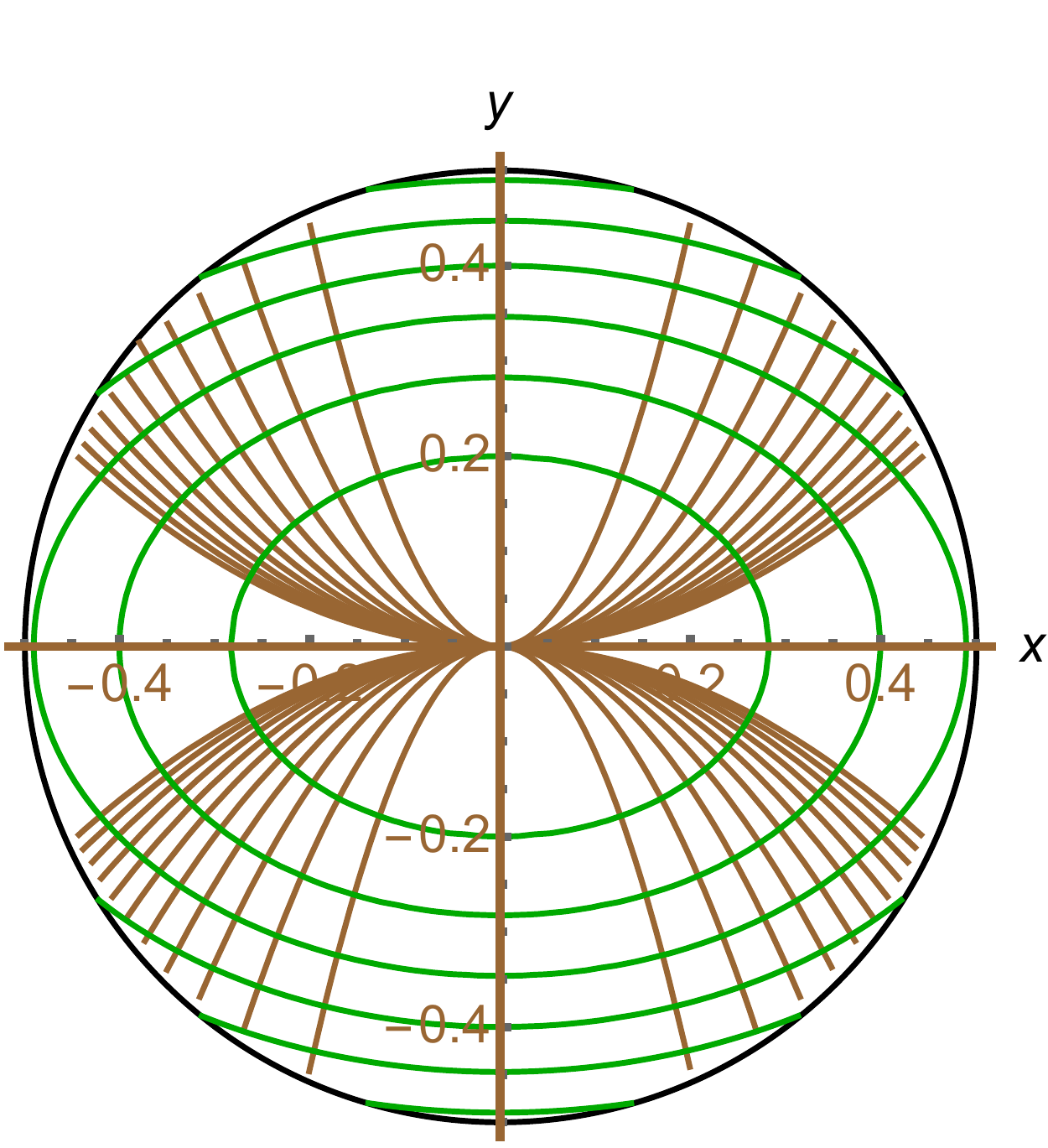}
\subcaption{For $\!\beta_\c\!\!=\!\!0.5$ (interior local extremum of $V$)}
\end{minipage}
\caption{Unoriented gradient flow orbits of $V$ (shown in brown) near
an interior critical point superposed over the level lines of $V$
(shown in green) for two values of $\beta_\c$, plotted in principal
Cartesian canonical coordinates centered at the critical point. The figure
assumes $\omega_\rmax(\c)=1/2$, i.e. that the injectivity radius at
$\c$ equals $r(\c)=2\arctanh(1/2)=1.098$. The principal coordinate
axes correspond to the principal geodesic orbits at $\c$, which
coincide with four special gradient flow orbits.}
\label{fig:IntCritical}
\end{figure}

\

The scalar potential $\Phi$ of the model can be recovered
from the classical effective potential $V=M_0\sqrt{2\Phi}$ as:
\ben
\label{Phic}
\Phi=\frac{1}{2 M_0^2} V^2\approx \frac{{\bar \lambda}_2(\c)^2}{2}\left[\bar{V}(\c)+\frac{1}{2}\omega^2 (\beta_\c \cos^2\theta+\sin^2\theta)\right]^2~~,
\een
where we defined:
\be
{\bar \lambda}_2(\c)\eqdef \frac{\lambda_2(\c)}{M_0}~~,~~{\bar V}(\c)\eqdef \frac{V(\c)}{\lambda_2(\c)}~~.
\ee
Figure \ref{fig:CosmIntCritical} shows some numerically computed
infrared optimal cosmological curves of the uniformized model
parameterized by $(M_0,\Sigma,G,\Phi)$ near an interior critical point
$\c$. In this figure, we took ${\bar \lambda}_2(\c)=1$, ${\bar
V}(\c)=1$ and $M_0=1$, so the rescaled scalar triple
$(\Sigma,G_0,\Phi_0)$ coincides with $(\Sigma,G,\Phi)$ (recall that
the choice of $M_0$ does not affect the cosmological or gradient flow
orbits). Notice that the accuracy of the first order IR approximation
depends on the value of ${\bar V}(\c)$, since the first IR parameter
of \cite{ren} depends on this value.

\begin{figure}[H] \centering
\begin{minipage}{.49\textwidth} \centering
\includegraphics[width=.95\linewidth]{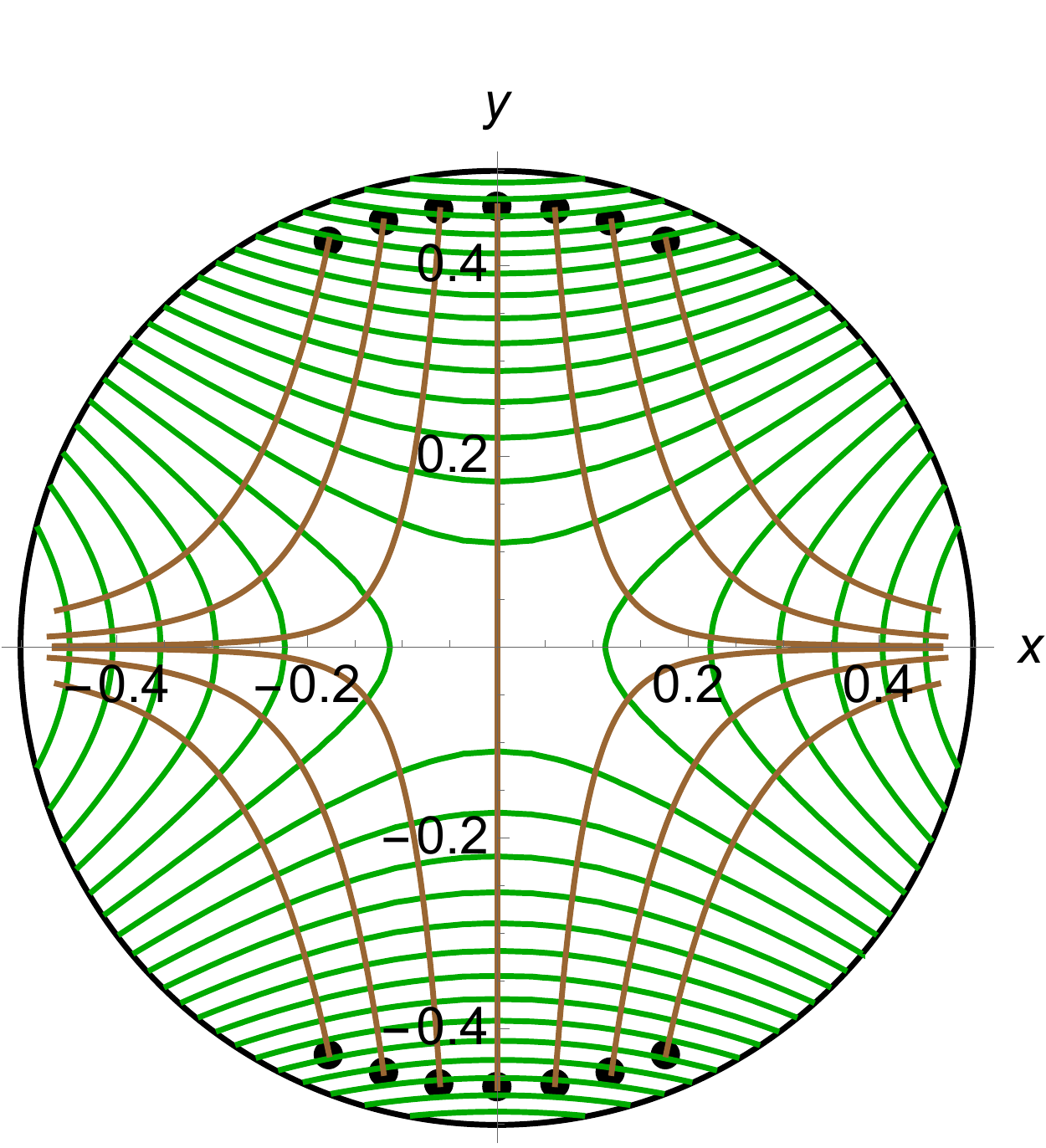} \subcaption{For
$\beta_\c\!=\!\!-0.5$ (interior saddle point of $\Phi$)}
\end{minipage}\hfill
\begin{minipage}{.5\textwidth} \centering
\includegraphics[width=.93\linewidth]{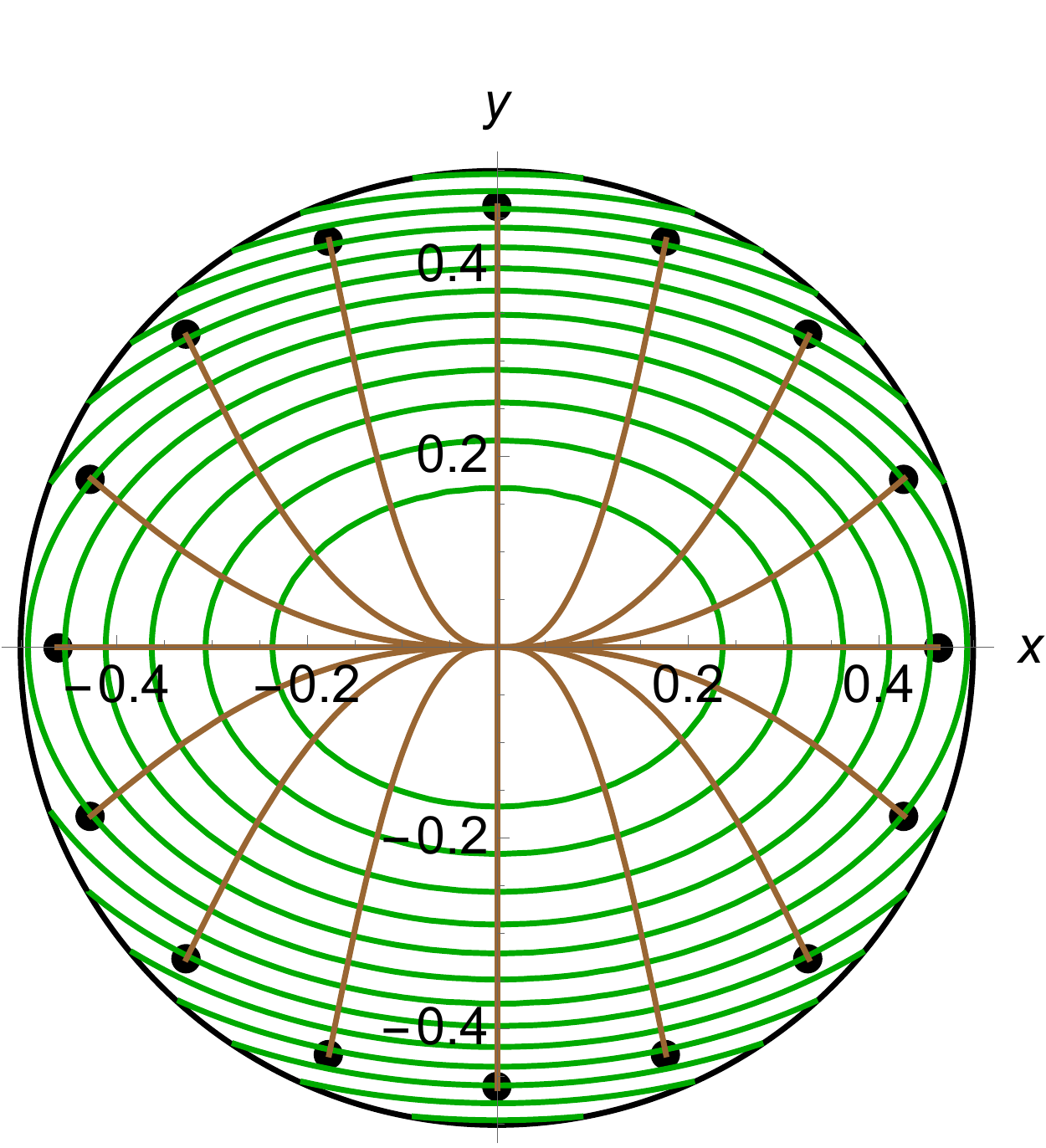} \subcaption{For
$\!\beta_\c\!\!=\!\!0.5$ (interior local extremum of $\Phi$)}
\end{minipage}
\caption{Numerically computed orbits of infrared optimal cosmological
curves of the uniformized model (shown in brown) near an interior
critical point $\c$, superposed over the level lines of $\Phi$ (shown
in green) for two values of $\beta_\c$. Here $x,y$ are principal
Cartesian canonical coordinates centered at the critical point. We assume
$\omega_\rmax(\c)=1/2$, i.e. that the injectivity radius is
$r(\c)=2\arctanh(1/2)=1.098$. The figure shows cosmological orbits
for IR optimal cosmological curves $\varphi$, whose initial speed
belongs to the gradient shell of $(\Sigma,G,V)$. The initial points
$\varphi(0)$ of these curves are shown as black dots.}
\label{fig:CosmIntCritical}
\end{figure}

\noindent The results above imply the following:

\begin{prop}
\label{prop:IRIntCrit}
The asymptotic form of the gradient flow orbits of $(\Sigma, G,V)$
near $\c$ is determined by the critical modulus $\beta_\c$, while the
orientation of these orbits is determined by the characteristic signs
at $\c$. In particular, the first order IR approximation of those
cosmological orbits which have $\c$ as an $\alpha$- or $\omega$- limit
point depends only on these quantities.
\end{prop}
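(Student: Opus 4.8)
The plan is to read off both assertions from the explicit computations of Subsection~\ref{IRint}. First I would fix a system of principal Cartesian canonical coordinates $(x,y)$ at $\c$, which exists by the preceding proposition and is unique up to the residual $\O(2)$-action on canonical coordinates (reduced to the subgroup $\Delta$ when $V$ is not circular at $\c$). In such coordinates $G$ is the Poincar\'e metric and $\hV$ has the Taylor expansion \eqref{Vasc}, so the effective gradient flow equations reduce near $\c$ to the approximate system \eqref{gradc}. Eliminating the flow parameter yields, in the non-circular case, the single first order ODE \eqref{gradcn} for $\omega(\theta)$, whose coefficients depend on the principal values only through the ratio $\beta_\c$; its general solution \eqref{gradflowc} involves $\beta_\c$ and a positive constant $C$ that merely labels an individual orbit within the resulting one-parameter family. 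Adjoining the four distinguished orbits along the principal geodesic orbits ($\theta\in\{0,\tfrac{\pi}{2},\pi,\tfrac{3\pi}{2}\}$), together with the circular case $\beta_\c=1$ in which the orbits are radial geodesics independent of the common eigenvalue, shows that the unoriented germ at $\c$ of the gradient flow foliation is determined by $\beta_\c$ alone.

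Next I would settle the orientation. Along each orbit the sign of $\dot\omega$ is governed by the bracket $\lambda_1(\c)\cos^2\theta+\lambda_2(\c)\sin^2\theta$ appearing in \eqref{gradc}, hence by the characteristic signs $\epsilon_i(\c)=\sign(\lambda_i(\c))$: when $\beta_\c>0$ the point $\c$ is a sink if $\epsilon_1(\c)=\epsilon_2(\c)=1$ and a source if $\epsilon_1(\c)=\epsilon_2(\c)=-1$, while when $\beta_\c<0$ it is a saddle whose gradient curves flow toward the $x$-axis if $\epsilon_1(\c)=-1$ and toward the $y$-axis if $\epsilon_2(\c)=-1$, exactly as recorded after the definition of the critical modulus. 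This establishes the first assertion.

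For the ``in particular'' clause I would invoke Section~\ref{sec:models}: the first order IR approximant of a cosmological curve is the gradient flow curve of $(\Sigma,G,V)$ through the same point, so at the level of orbits the first order IR approximant of a cosmological orbit is a gradient flow orbit of $(\Sigma,G,V)$. Because $\hPhi$ is Morse and strictly positive, $\hV=M_0\sqrt{2\hPhi}$ is Morse as well, so $\c$ is a nondegenerate zero of $\grad_G V$, i.e. a hyperbolic stationary point of the effective gradient flow; consequently a cosmological orbit with $\c$ as an $\alpha$- (resp. $\omega$-) limit point has an IR approximant orbit lying in $\cU(\c)$ (resp. $\cS(\c)$), and near $\c$ this orbit is one of those analyzed above, whose asymptotic shape is fixed by $\beta_\c$ and whose orientation is fixed by the characteristic signs. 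The one place that needs a little care --- the main, and quite mild, obstacle --- is precisely this matching of limit points: it is not forced by first order osculation alone (which is accurate only for short cosmological times), but follows from the robustness of hyperbolic fixed points and their stable and unstable manifolds (cf. \cite{Palis,Katok}), which the tameness assumptions guarantee at $\c$. Everything else is bookkeeping on the formulas of Subsection~\ref{IRint}.
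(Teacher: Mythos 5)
Your proposal is correct and follows essentially the same route as the paper, which states this proposition as a direct consequence of the computations in Subsection~\ref{IRint}: the elimination of the flow parameter yielding \eqref{gradcn}--\eqref{gradflowc} (whose coefficients involve the principal values only through $\beta_\c$), the four special orbits and the circular case, and the orientation read off from the signs $\epsilon_i(\c)$. Your extra remark on matching $\alpha$-/$\omega$-limit points via the hyperbolicity of $\c$ as a stationary point is a reasonable piece of care that the paper leaves implicit, but it is not needed for the statement as written, which only concerns the local form of the gradient flow orbits serving as first order IR approximants near $\c$.
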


\noindent Notice that the unoriented orbits of the asymptotic gradient
flow near $\c$ are determined by the positive homothety class of the
pair $(\lambda_1(\c),\lambda_2(\c))$, i.e. by the image of this pair
in the quotient $(\R^\times)^2/\R_{>0}$, where the multiplicative
group $\R_{>0}$ acts diagonally:
\be
(\lambda_1,\lambda_2)\rightarrow (\alpha\lambda_1,\alpha\lambda_2)~~\forall \alpha>0~~.
\ee
Equivalently, these unoriented orbits depend only on the positive
homothety class $[\wHess(V)(\c)]\in \End^s(T_\c\Sigma)/\R_{>0}$ of the
Hessian operator of $V$ at $\c$. On the other hand, the topological
and smooth topological equivalence class of the gradient flow of
$(\Sigma,G,V)$ near $\c$ is given by the following well-known result
(see, for example, \cite[Chap. 2, Thm.  5.1]{Palis}), where
$\varepsilon_1(\c)\eqdef \min(\epsilon_1(\c),\epsilon_2(\c))$ and
$\varepsilon_2(\c)\eqdef \max(\epsilon_1(\c),\epsilon_2(\c))$:
 
\begin{prop}
\label{prop:topint}
The gradient flow of $(\Sigma,G,V)$ is locally topologically
equivalent near $\c$ with the gradient flow of the function:
\be
V_0(x,y)=\frac{1}{2}(\varepsilon_1(\c)x^2+\varepsilon_2(\c) y^2)=
\threepartdef{\frac{1}{2}(x^2+y^2)}{~\ind(\c)=0}{\frac{1}{2}(-x^2+y^2)}{~\ind(\c)=1}{-\frac{1}{2}(x^2+y^2)}{~\ind(\c)=2}
\ee
computed with respect to the Euclidean metric:
\ben
\label{s0}
\dd s^2_0=\dd x^2+\dd y^2~~.
\een
This topological equivalence can be chosen to be smooth when
$\ind(\c)=1$ and when $\ind(\c)\in \{0,2\}$ with
$\lambda_1(\c)=\lambda_2(\c)$. When $\ind(\c)\in \{0,2\}$ and
$\lambda_1(\c)\neq \lambda_2(\c)$, the gradient flow of $(\Sigma,G,V)$
is locally smoothly topologically equivalent near $\c$ with the
gradient flow of the function:
\be
V_0(x,y)=\frac{1}{2}(\varepsilon_1(\c)x^2+\varepsilon_2(\c) (2y)^2)=
\twopartdef{\frac{1}{2}(x^2+4y^2)}{~\ind(\c)=0}{-\frac{1}{2}(x^2+4y^2)}{~\ind(\c)=2}~~.
\ee
\end{prop}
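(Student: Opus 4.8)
The plan is to treat $\c$ as a hyperbolic equilibrium of the effective gradient flow and to reduce the statement to the classical topological — and, where possible, smooth — classification of planar hyperbolic singularities, in the spirit of \cite[Chap.~2, Thm.~5.1]{Palis}. Since $\hPhi$, and hence $V$, is Morse, $\c$ is a nondegenerate critical point, so the Hessian operator $\wHess_G(V)(\c)\in\End^s(T_\c\Sigma)$ is invertible and its eigenvalues are the nonzero principal values $\lambda_1(\c),\lambda_2(\c)$. The gradient field $-\grad_G V$ vanishes at $\c$ with linearization $-\wHess_G(V)(\c)$, whose spectrum $\{-\lambda_1(\c),-\lambda_2(\c)\}$ avoids the imaginary axis; hence $\c$ is hyperbolic, the number of negative principal values equals $\ind(\c)=\dim\cU(\c)$, and the number of positive ones equals $2-\ind(\c)$.

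For the topological statement I would work in a principal Cartesian canonical coordinate system $(x,y)$ centered at $\c$; such a system exists by the preceding proposition and is $G_\c$-orthonormal at $\c$, so in it the linearization of $-\grad_G V$ reads $\dot x=-\lambda_1(\c)\,x$, $\dot y=-\lambda_2(\c)\,y$. By the Hartman--Grobman theorem the flow of $-\grad_G V$ is locally topologically conjugate to this linear flow. Rescaling each coordinate (and the time variable) by the positive constants $|\lambda_i(\c)|$ leaves the oriented orbits unchanged up to monotone reparametrization and turns the linear flow into $\dot x=-\varepsilon_1(\c)\,x$, $\dot y=-\varepsilon_2(\c)\,y$ with $\varepsilon_i(\c)=\sign\lambda_i(\c)$; after exchanging the two coordinates if necessary so that $\varepsilon_1(\c)\le\varepsilon_2(\c)$, this is exactly the gradient flow, with respect to the flat metric \eqref{s0}, of $V_0(x,y)=\tfrac12\big(\varepsilon_1(\c)x^2+\varepsilon_2(\c)y^2\big)$. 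Listing the three possibilities for $(\varepsilon_1(\c),\varepsilon_2(\c))$ according to $\ind(\c)\in\{0,1,2\}$ gives the displayed normal forms; equivalently, two planar linear hyperbolic flows are topologically equivalent exactly when they have the same number of expanding directions.

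For the smooth refinements I would ask when the equivalence can be realized by a diffeomorphism, which is controlled by resonances among the eigenvalues $-\lambda_i(\c)$ together with the smoothness of the invariant manifolds. When $\ind(\c)\in\{0,2\}$ and $V$ is circular at $\c$ the eigenvalue ratio equals $1$, there is no resonance, and a $C^\infty$ linearization (Sternberg's theorem, see \cite{Katok}) followed by a linear rescaling yields a smooth equivalence with $V_0=\tfrac12\big(\varepsilon_1(\c)x^2+\varepsilon_2(\c)y^2\big)$. When $\ind(\c)=1$ one uses that $\cS(\c)$ and $\cU(\c)$ are smooth curves (the stable manifold theorem) together with the classification of planar hyperbolic saddles from \cite[Chap.~2]{Palis}. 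When $\ind(\c)\in\{0,2\}$ but $V$ is not circular at $\c$, the circular model cannot be reached by a diffeomorphism: all orbits of the non-circular node save the two strong ones are tangent at $\c$ to its weak eigendirection, so their images under any such map would be tangent at the origin to a single fixed line, contradicting that the orbits of the circular model realize every direction through the origin; one then matches the flow with the distinguished non-circular representative $V_0=\tfrac12\big(\varepsilon_1(\c)x^2+\varepsilon_2(\c)(2y)^2\big)$ instead, again via \cite[Chap.~2]{Palis}.

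I expect the main obstacle to lie in this last point — identifying the smooth orbital-equivalence class of the effective gradient flow near a non-circular interior extremum and matching it with the stated non-circular model. The topological statement and the saddle and circular-node cases should be comparatively routine once Hartman--Grobman, the smoothness of $\cS(\c)$ and $\cU(\c)$, and Sternberg's linearization theorem are in hand.
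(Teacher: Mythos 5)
Your treatment of the purely topological claim is correct and is essentially all the paper itself offers: the paper gives no proof of this proposition beyond the citation of \cite[Chap.~2, Thm.~5.1]{Palis}, and Hartman--Grobman together with the classification of planar hyperbolic linear flows by the dimension of the stable subspace is exactly the standard route. Your circular-node case is also fine: when $\lambda_1(\c)=\lambda_2(\c)$ the linearization $-\wHess_G(V)(\c)$ is a nonzero multiple of the identity, there are no resonances of order $\geq 2$, and Sternberg linearization followed by a constant time rescaling produces a smooth equivalence with the round model.

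The genuine gap is in the two remaining smooth claims, and it cannot be closed by deferring to ``the classification of planar hyperbolic saddles and nodes in \cite{Palis}'': under the standard meaning of smooth orbital equivalence those claims fail for generic principal values, because the critical modulus $\beta_\c=\lambda_1(\c)/\lambda_2(\c)$ is itself a smooth-equivalence invariant. Heuristically, a diffeomorphism carrying oriented orbits to oriented orbits conjugates the linearizations at the fixed points up to a positive scalar, so only the homothety class of the spectrum survives; if one worries about continuity of the time-rescaling factor at the fixed point, the same conclusion follows at a saddle from the Dulac transition map between transversals, whose germ $s\mapsto Cs^{|\lambda_2(\c)/\lambda_1(\c)|}$ pins down the exponent under smooth reparameterization, and at a node from the contact order of the generic orbits $y=C|x|^{\lambda_2(\c)/\lambda_1(\c)}$ with the weak axis --- which is precisely the tangency invariant you invoke to separate circular from non-circular nodes, pushed one step further. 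Consequently a saddle with $|\lambda_1(\c)|\neq|\lambda_2(\c)|$ is not smoothly equivalent to the gradient flow of $\tfrac12(-x^2+y^2)$, and a non-circular node with $\lambda_2(\c)/\lambda_1(\c)\neq 4$ is not smoothly equivalent to that of $\tfrac12(x^2+4y^2)$; the example in \cite{Palis} only shows that the smooth class is strictly finer than the topological one and distinguishes the circular from the non-circular extremum. You correctly flagged the non-circular node as the danger point, but the resolution is not a cleverer matching argument: either the smooth part of the statement must be weakened (record $\beta_\c$ as a further smooth invariant, or require smoothness only away from $\c$), or your proof should be presented as establishing only the topological claim together with the circular case.
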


\noindent Hence the IR phase determined by $\c$ belongs to one of five
IR universality classes (in the sense of \cite{ren}), which are
characterized respectively by the conditions $\ind(\c)=1$,
$\ind(\c)=0$ with $\beta_\c=1$, $\ind(\c)=2$ with $\beta_\c= 1$,
$\ind(\c)=0$ with $\beta_\c\neq 1$ and $\ind(\c)=2$ with $\beta_\c\neq
1$. Notice that Proposition \ref{prop:IRIntCrit} and equation
\eqref{gradflowc} give detailed information about the asymptotic form
of the first order IR orbits of the uniformized model near $\c$, while
Proposition \ref{prop:topint} characterizes its local IR universality
class in the phase determined by $\c$.

\section{The IR phases of noncritical ends}
\label{sec:noncritends}

\noindent Recall that an end $\e\in \Ends(\Sigma)$ is noncritical if
$(\dd \hV)(\e)\neq 0$. In this case, the kernel of the linear map
$(\dd \hV)(\e):T_\e\hSigma\rightarrow \R$ is one-dimensional and hence
we can rotate the Cartesian canonical coordinates $(x,y)$ centered at
$\e$ to ensure that $(\dd \hV)(\e)$ vanishes on the tangent vector
$v_y\eqdef\frac{\partial}{\partial y}\bvert_\e\in T_\e \hSigma$ and
that the quantity $\mu_\e\eqdef (\dd \hV)(\e)(v_x)=(\pd_x \hV)(\e)$ is
positive, where $v_x\eqdef \frac{\partial}{\partial x}\bvert_\e$.

\begin{definition}
A system of {\em special Cartesian canonical coordinates} for
$(\Sigma,G,V)$ centered at the noncritical end $\e$ is a system of
canonical Cartesian coordinates $(x,y)$ centered at $\e$ which
satisfies the conditions:
\be
(\dd \hV)(\e)(v_y)=0~~\mathrm{i.e.}~~(\pd_y \hV)(\e)=0~~
\ee
and:
\be
(\dd \hV)(\e)(v_x)>0~~\mathrm{i.e.}~~(\pd_x \hV)(\e)>0~~.
\ee
Given such coordinates, we set $\mu_\e\eqdef (\dd
\hV)(\e)(v_x)=(\pd_x\hV)(\e)$.
\end{definition}

\noindent In special Cartesian canonical coordinates centered at a
noncritical end $\e$, the Taylor expansion of the extended potential
$\hV$ at $\e$ has the form:
\ben
\label{Vase1}
\hV(x,y)=\hV(\e)+\mu_\e x +\O(x^2+y^2)=\hV(\e)+\mu_\e\omega\cos \theta+\O(\omega^2)~~.
\een
In leading order, we have:
\beqan
\label{egradas}
&& (\grad_G V)^\omega=\omega^4 \partial_\omega V= \mu_\e \omega^4 \cos \theta\nn\\
&& (\grad_G V)^\theta=\frac{1}{\tc_\e} e^{-\frac{2\epsilon_\e}{\omega}}\partial_\theta V=
-\frac{\mu_\e}{{\tilde c}_\e} e^{-\frac{2\epsilon_\e}{\omega}} \omega \sin\theta ~~.
\eeqan

\subsection{Special gradient flow orbits}

\noindent For $\theta\in \{0,\pi\}$, the gradient flow equation reduces in
leading order to $\theta=\const$ and:
\be
\frac{\dd\omega}{\dd q}=\twopartdef{-\mu(\e)\omega^4}{\theta=0}{\mu(\e)\omega^4}{\theta=\pi}~~,
\ee
with general solution:
\be
\omega=\twopartdef{\frac{1}{(3\mu(\e)(q_0+q))^{1/3}}}{\theta=0~(\mathrm{with}~q>-q_0)}{\frac{1}{(3\mu(\e)(q_0-q))^{1/3}}}{\theta=\pi~(\mathrm{with}~q<q_0)}~~,
\ee
where $q_0$ is an arbitrary constant. Shifting $q$ by $q_0$ when
$\theta=0$ and by $-q_0$ when $\theta=\pi$, we can bring this to the
form:
\be
\omega=\twopartdef{\frac{1}{(3\mu(\e)q)^{1/3}}}{\theta=0~(\mathrm{with}~q>0)}{\frac{1}{(3\mu(\e)|q|)^{1/3}}}{\theta=\pi~(\mathrm{with}~q<0)}~~.
\ee
This gives two gradient flow orbits which have $\e$ as a limit point
and asymptote close to $\e$ to the principal geodesic orbits
corresponding to the $x$ axis. The first of these lies along the
positive $x$ semi-axis ($\theta=0$) and tends to $\e$ for
$q\rightarrow +\infty$ (thus it approximates the late time behavior of
a cosmological orbit near the end) while the second lies on the
negative $x$ semi-axis ($\theta=\pi$) and tends to $\e$ for
$q\rightarrow -\infty$ (thus it approximates the early time behavior
of a cosmological orbit near the end). The cosmological orbit
approximated by the gradient orbit with $\theta=0$ flows towards $\e$
in the distant future while that approximated by the gradient orbit
with $\theta=\pi$ originates from $\e$ in the distant past. Of course,
the end $\e$ is never reached by these orbits since it lies at
infinity on $\Sigma$.

\subsection{Non-special gradient flow orbits}

\noindent For $\theta\not \in \{0,\pi\}$, the gradient flow equation reduces to:
\ben
\frac{\dd \omega}{\dd \theta}=-\tilde{c}_\e \omega^3 \cot(\theta) e^{\frac{2\epsilon_\e}{\omega}}
\een
with Pfaffian form:
\ben
\label{gradnc}
\frac{e^{-\frac{2\epsilon_\e}{\omega}}}{\omega^3}\dd \omega =-\tilde{c}_\e \cot(\theta) \dd \theta~~.
\een
Setting $v \eqdef \frac{2\epsilon_\e}{\omega}$, we have:
\be
\frac{e^{-\frac{2\epsilon_\e}{\omega}}}{\omega^3}\dd \omega=-\frac{1}{4} v e^{-v} \dd v=-\frac{1}{4}\dd \bgamma_2(v)~~,
\ee
where:
\ben
\label{gamma}
\bgamma_n(v):=\bGamma(n,0,v)\eqdef \int_0^v w^{n-1} e^{-w} \dd w ~~(\mathrm{with}~ v\in \R)
\een
is the lower incomplete Gamma function of order $n$, which is plotted
in Figure \ref{fig:gamma2} for $n=2$. We have:
\ben
\label{gamma2}
\bgamma_2(v)=1-e^{-v}-ve^{-v}~~.
\een

\begin{figure}[H]
\centering
\includegraphics[width=.62\linewidth]{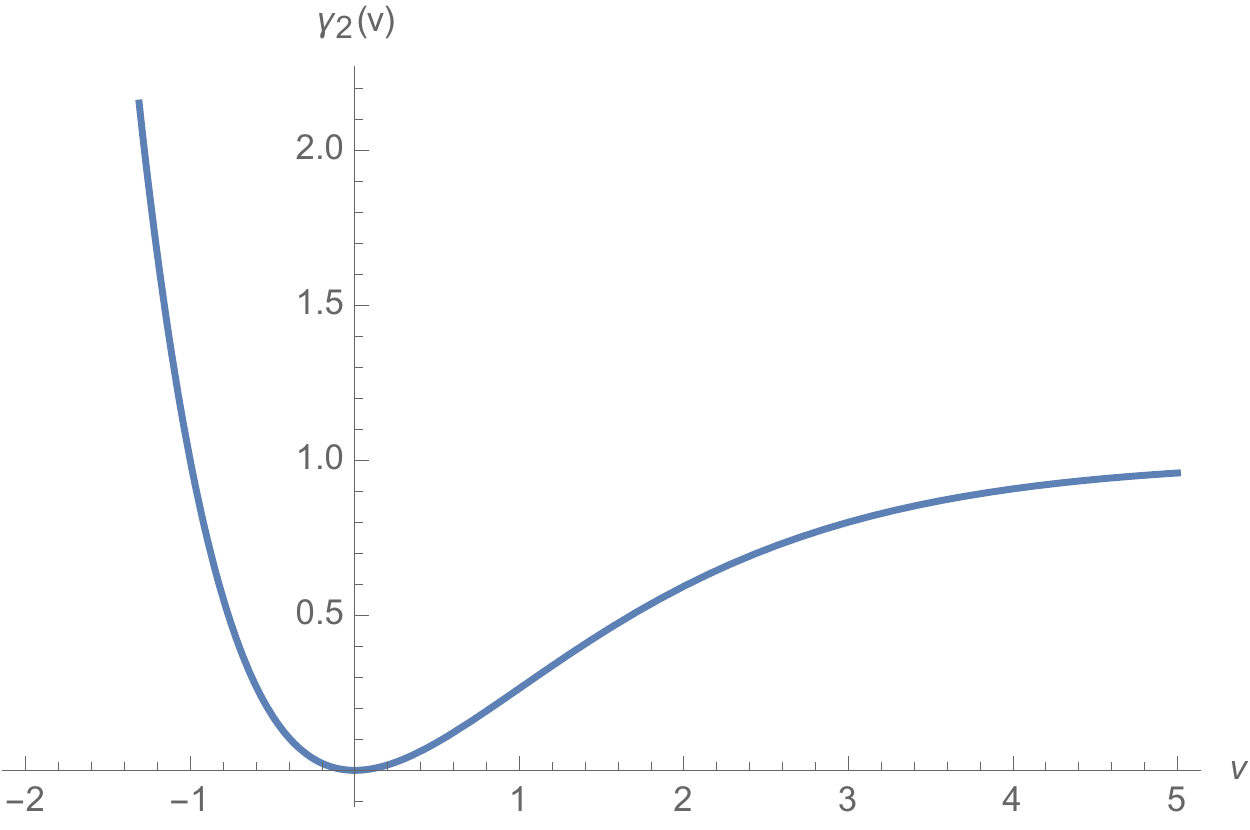}
\caption{Graph of the incomplete gamma function $\bgamma_2$ of order $2$.}
\label{fig:gamma2}
\end{figure}

Since $\cot(\theta) \dd \theta=\dd \log|\sin\theta|$, we find that the
non-special gradient flow orbits have the following implicit equation
near $\e$:
\ben
\label{gradflownc}
\frac{1}{4}\bgamma_2\left(\frac{2\epsilon_\e}{\omega}\right)=A+\tilde{c}_\e \log|\sin\theta|~~,
\een
where $A$ is an integration constant. In particular, the asymptotic
form of the unoriented gradient flow orbits as one approaches the
noncritical end $\e$ is determined in leading order by the constants
$\epsilon_\e$ and ${\tilde c}_\e$. Notice that the asymptotic orbits
are independent of the constant $\mu(\e)$. Since $\mu(\e)>0$, the effective
potential increases with $x$ and all curves flow in the direction
opposite to that of the $x$ axis.

\paragraph{Asymptotic sectors for non-special gradient flow orbits near the end.}

Since the left hand side of \eqref{gradflownc} is positive while the
right hand side is bounded from above by $A$, this equation requires
$A> 0$. Moreover, a solution exists only for ${\tilde
c}_\e\log|\sin\theta|\geq -A$ i.e.:
\be
|\sin\theta|> \sigma_A:=\sigma_A(\e)\eqdef e^{-\frac{A}{{\tilde c}_\e}}~~,
\ee
which becomes an equality in the limit $\omega\rightarrow
\infty$. Hence each gradient flow curve is contained in the angular
region:
\ben
\label{condnc1}
\theta\in \cC_A\eqdef \left(\theta_A,\pi-\theta_A\right)\cup \left(\pi+\theta_A, 2\pi -\theta_A\right)~~,
\een
where:
\be
\theta_A\eqdef \arcsin(\sigma_A)~~.
\ee
When $\e$ is a flaring end, the left hand side of \eqref{gradflownc}
is smaller than $1/4$, so in this case we have the further condition:
\ben
\label{condnc2}
|\sin\theta|<\sigma'_A\eqdef e^{\frac{1}{4\tc_\e}} e^{-\frac{A}{\tc_\e}}~~\mathrm{for~flaring~ends.}
\een
This is automatically satisfied when $A<\frac{1}{4}$, while for $A\geq
\frac{1}{4}$ it gives a further constraint which excludes a region of
the form $\cR_A$ but with $\theta_A$ replaced by:
\be
\theta'_A\eqdef \arcsin(\sigma'_A)~~.
\ee
Since $\sigma'_A>\sigma_A$, we have $\theta'_A>\theta_A$, so
\eqref{condnc1} and \eqref{condnc2} give:
\be
\theta\in \cF_A \eqdef \left(\theta_A,\theta'_A\right)\cup \left(\pi-\theta'_A,\pi-\theta_A\right)\cup
\left(\pi+\theta_A, \pi+\theta'_A\right) \cup \left(2\pi-\theta'_A, 2\pi -\theta_A\right)~~.
\ee
Thus when $\e$ is a cusp end we have $\theta\in \cC_A$, while
when $\e$ is a flaring end we have:
\be
\theta\in \twopartdef{\cR_A}{A\in (0,1/4)}{\cF_A}{A\in [1/4,\infty)}~~.
\ee
Since the sign $\epsilon_\e$ is fixed for each end, equation
\eqref{gradflownc} (with fixed $A>0$) produces a single solution
$\omega_A(\theta)$ for each value of $\theta$ in the allowed region
and $\omega_A$ is a decreasing and continuous function of
$|\sin\theta|$. The right hand side of \eqref{gradflownc} is invariant
under the transformations $\theta\rightarrow \pi-\theta$ and
$\theta\rightarrow -\theta$, which correspond to reflections in the
coordinate axes. Hence for each $A>0$ we have two gradient flow orbits
for a cusp end, while for a flaring end we have two gradient flow
orbits when $A\in (0,1/4)$ and four gradient flow orbit when $A\in
[1/4,\infty)$. Each orbit is contained in a connected component of the
corresponding allowed region. This collection of orbits is invariant
under axis reflections.

\paragraph{Non-special gradient flow orbits which have $\e$ as a limit point.}

Distinguish the cases:
\begin{enumerate}
\item $\epsilon_\e=+1$, i.e. $\e$ is a flaring end. Then \eqref{gradflownc} becomes:
\be
\frac{1}{4}\bgamma_2\left(\frac{2}{\omega}\right)=A+\tilde{c}_\e \log|\sin\theta|~~.
\ee
Since the left hand side tends to $\frac{1}{4}$ for $\omega\rightarrow
0$ while the right hand side is bounded from above by $A$, it follows
that the gradient flow orbits which reach the end have $A\in
[1/4,\infty)$. For each such $A$, we have four gradient flow orbits
which reach the end at angles equal to $\pm \theta'_A\mod \pi$. For
$A\rightarrow +\infty$, these orbits asymptote near the origin to the
$x$ axis, while for $A\rightarrow 1/4$ they asymptote to the $y$
axis. The orbits obtained for $A\in (0,1/4)$ do not have $\e$ as a
limit point and should be discarded in our approximation.
\item $\epsilon_\e=-1$, i.e. $\e$ is a cusp end. Then
\eqref{gradflownc} becomes:
\be
\frac{1}{4}\bgamma_2\left(-\frac{2}{\omega}\right)=A+\frac{1}{(2\pi)^2} \log|\sin\theta|\leq A~~.
\ee
In this case, the left hand side tends to infinity when $\omega$ tends
to zero so a gradient flow orbit cannot have $\e$ as a limit point
because $A$ is finite. When $A$ tends to infinity, the gradient flow orbits
approach the noncritical cusp closer and closer but they never pass
through it. Along the two gradient flow orbits with integration constant $A$,
the smallest value $\omega_0:=\omega_0(A)$ of $\omega$ is realized for
$\theta=\frac{\pi}{2}$ or $\theta=\frac{3\pi}{2}$ (at the points where
these orbits intersect the $y$ axis) and is the solution of the
equation:
\be
\frac{1}{4}\bgamma_2\left(-\frac{2}{\omega_0}\right)=A~~.
\ee
\end{enumerate} 

\noindent Figure \ref{fig:NonCritQP} shows some unoriented asymptotic
gradient flow orbits close to each type of noncritical end.

\begin{figure}[H]
\centering
\begin{minipage}{.45\textwidth}
\centering ~~\includegraphics[width=.97\linewidth]{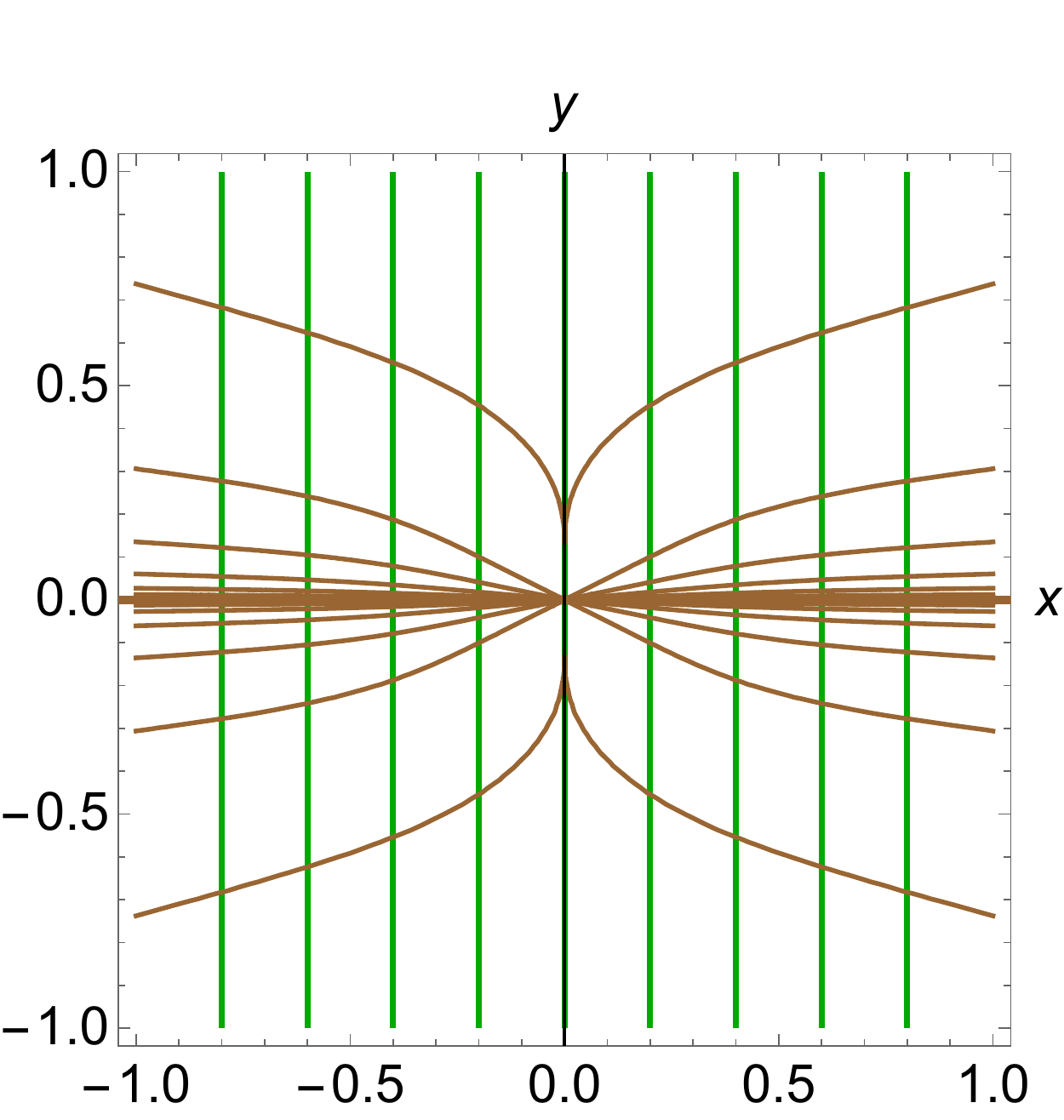}
\subcaption{Plane end.}
\end{minipage}\hfill 
\begin{minipage}{.45\textwidth}
\centering ~~\includegraphics[width=.95\linewidth]{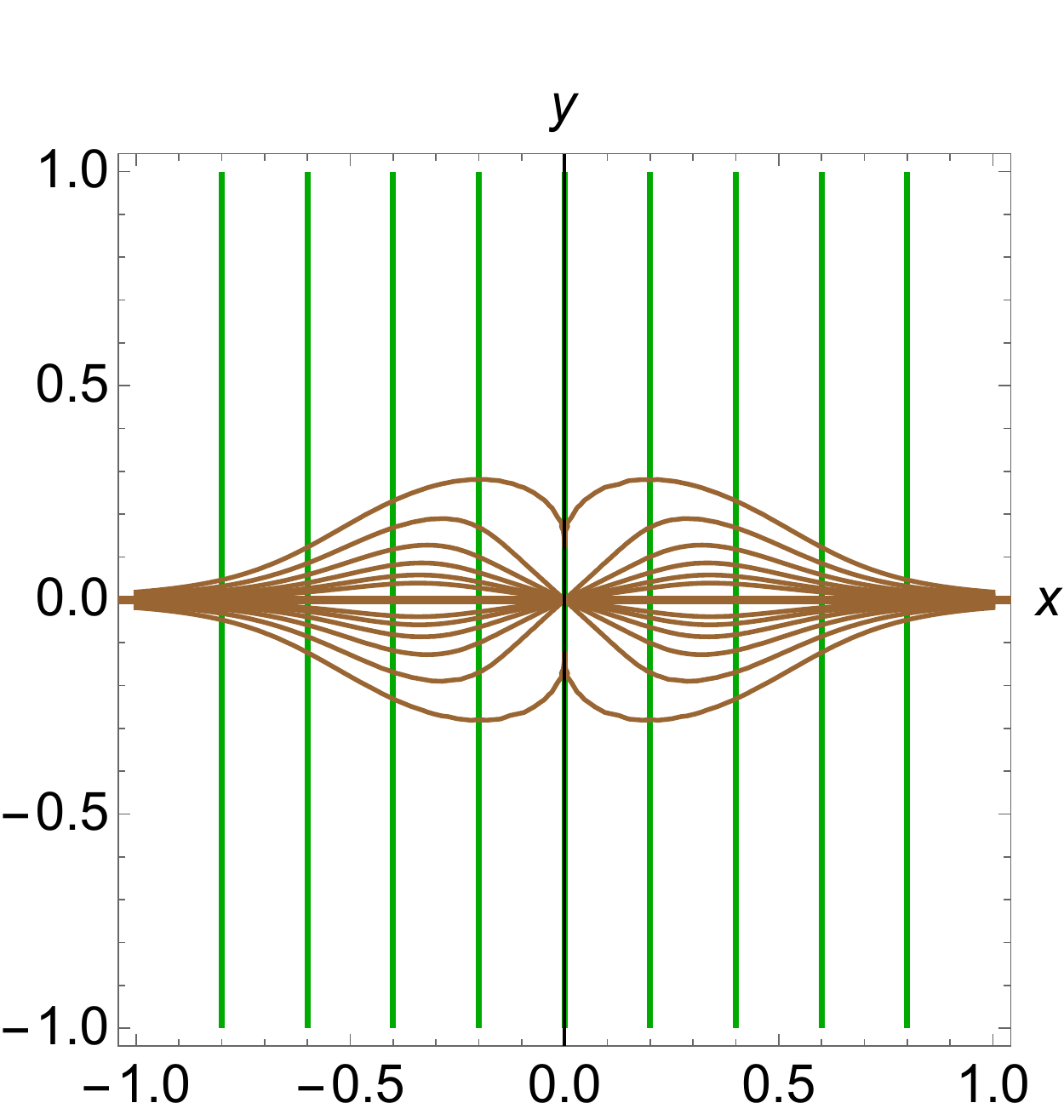}
\subcaption{Horn end.}
\end{minipage}\\
\begin{minipage}{.45\textwidth}
\centering ~~\includegraphics[width=.95\linewidth]{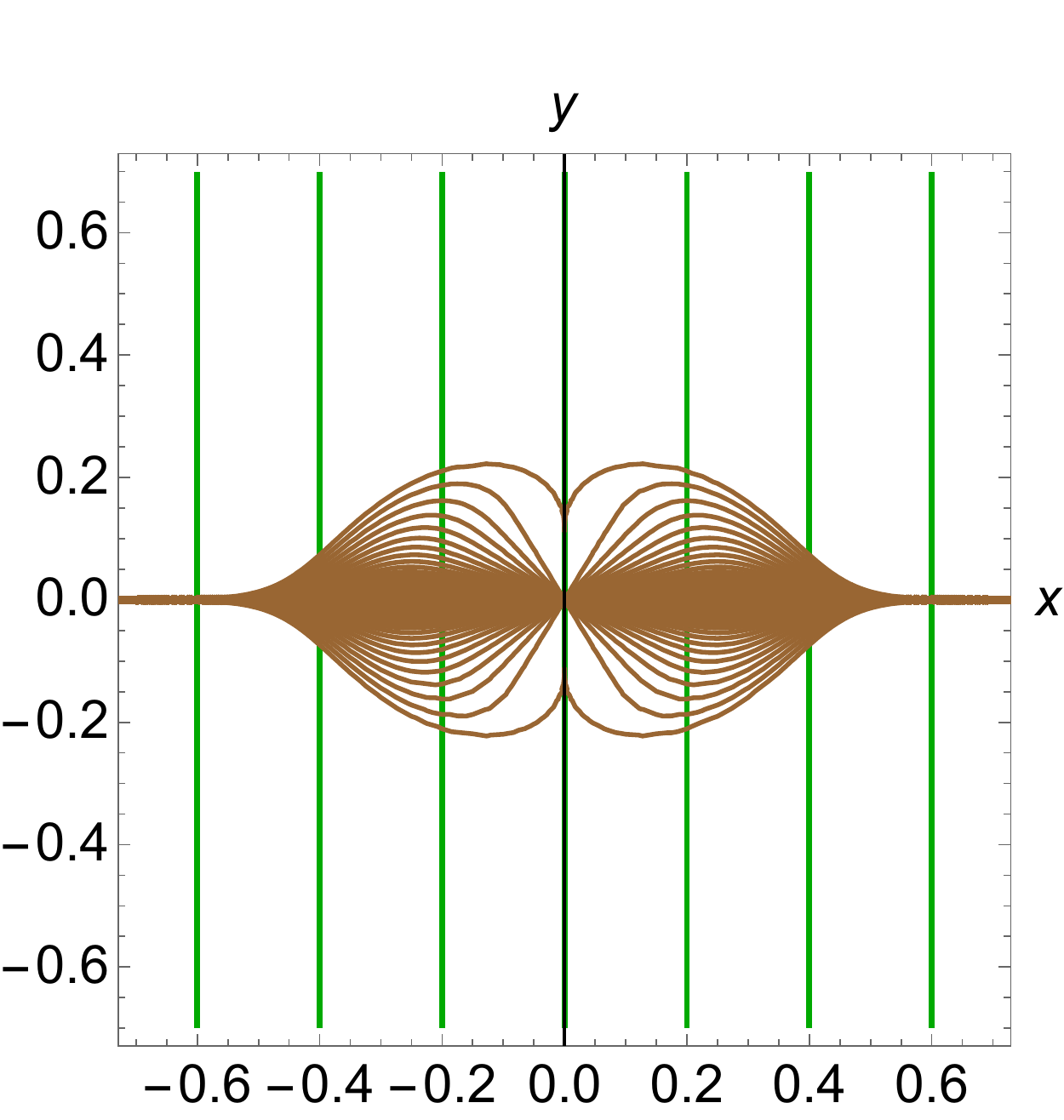}
\subcaption{Funnel end with $\ell=1$.}
\end{minipage}\hfill
\begin{minipage}{.45\textwidth}
\centering ~~ \includegraphics[width=.97\linewidth]{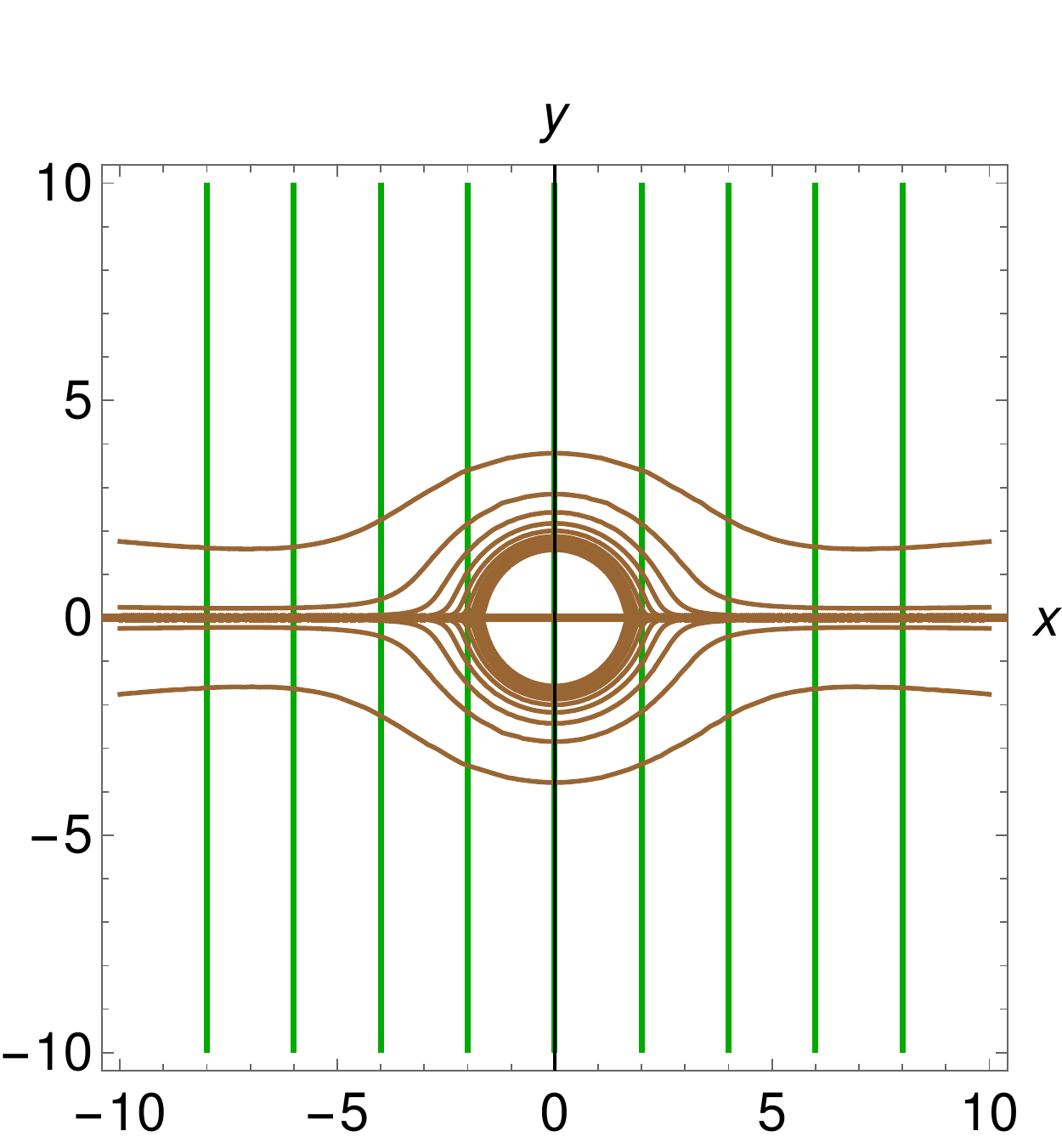}
\subcaption{Cusp end.}
\end{minipage}
\caption{Unoriented gradient flow orbits of $V$ (shown in brown) and
level sets of $V$ (shown in green) near noncritical plane, horn,
funnel and cusp ends in special Cartesian canonical coordinates
centered at the end. For the funnel end we took $\ell=1$. For flaring
ends, we show only orbits which have the end as a limit point (which
correspond to $A\geq 1/4$. The orbits flow from right to left since
$\mu_\e>0$.}
\label{fig:NonCritQP}
\end{figure}
The extended scalar potential $\hPhi$ of the canonical model can be
recovered from the extended classical effective potential as:
\ben
\label{Phic}
\hPhi=\frac{1}{2 M_0^2} \hV^2\approx \frac{1}{2} {\bar \mu}_\e^2\left[\hat{\bar{V}}(\e)+\omega \cos\theta\right]^2~~,
\een
where we defined:
\be
{\bar \mu}_\e\eqdef \frac{\mu_\e}{M_0}~~,~~{\hat {\bar V}}(\e)\eqdef \frac{{\hat V}(\e)}{\mu_\e}~~.
\ee
Figure \ref{fig:NoncritCosm} shows some infrared optimal cosmological
orbits of the uniformized model parameterized by $(M_0,\Sigma,G,\Phi)$
near noncritical ends $\e$; the initial point of each orbit is shown
as a black dot. In this figure, we took ${\bar \mu}_\e=1$, ${\hat
{\bar V}}(\e)=1$ and $M_0=1$. Notice that the accuracy of the first
order IR approximation depends on the value of ${\hat {\bar V}}(\e)$,
since the first IR parameter of \cite{ren} depends on this value.

\begin{figure}[H]
\centering
\begin{minipage}{.45\textwidth}
\centering ~~\includegraphics[width=.97\linewidth]{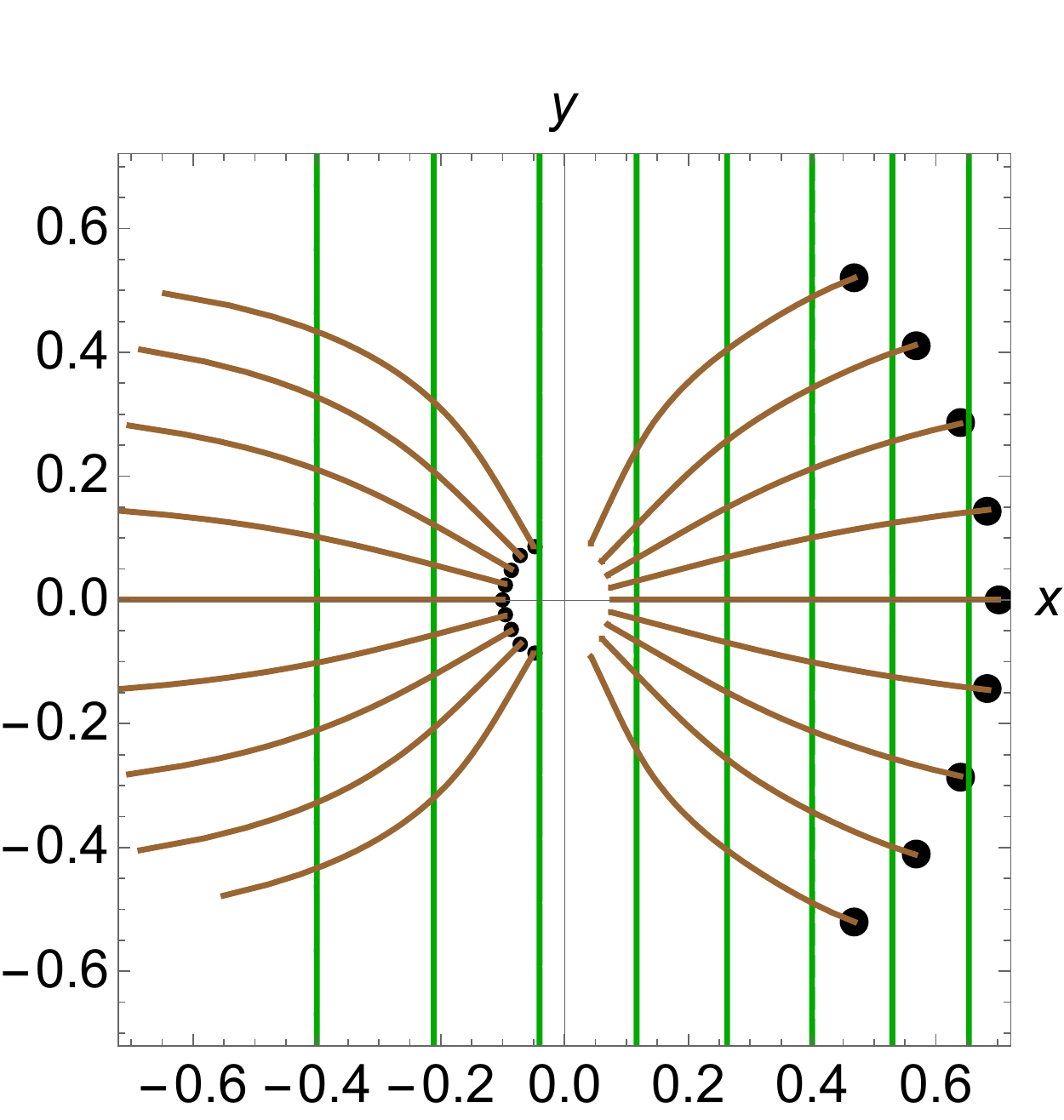}
\subcaption{Plane end.}
\end{minipage}\hfill 
\begin{minipage}{.45\textwidth}
\centering ~~\includegraphics[width=.95\linewidth]{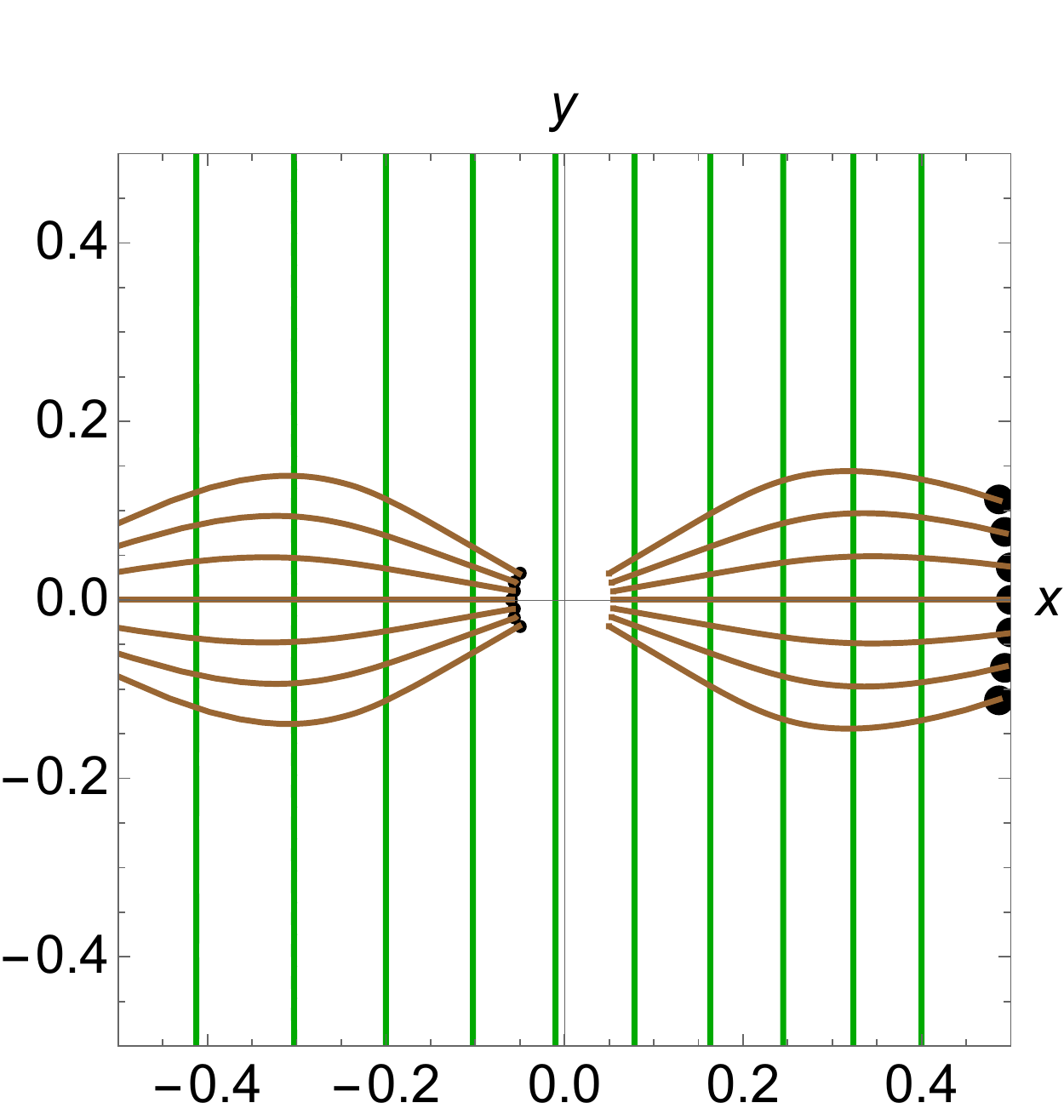}
\subcaption{Horn end.}
\end{minipage}\\
\begin{minipage}{.45\textwidth}
\centering ~~\includegraphics[width=.95\linewidth]{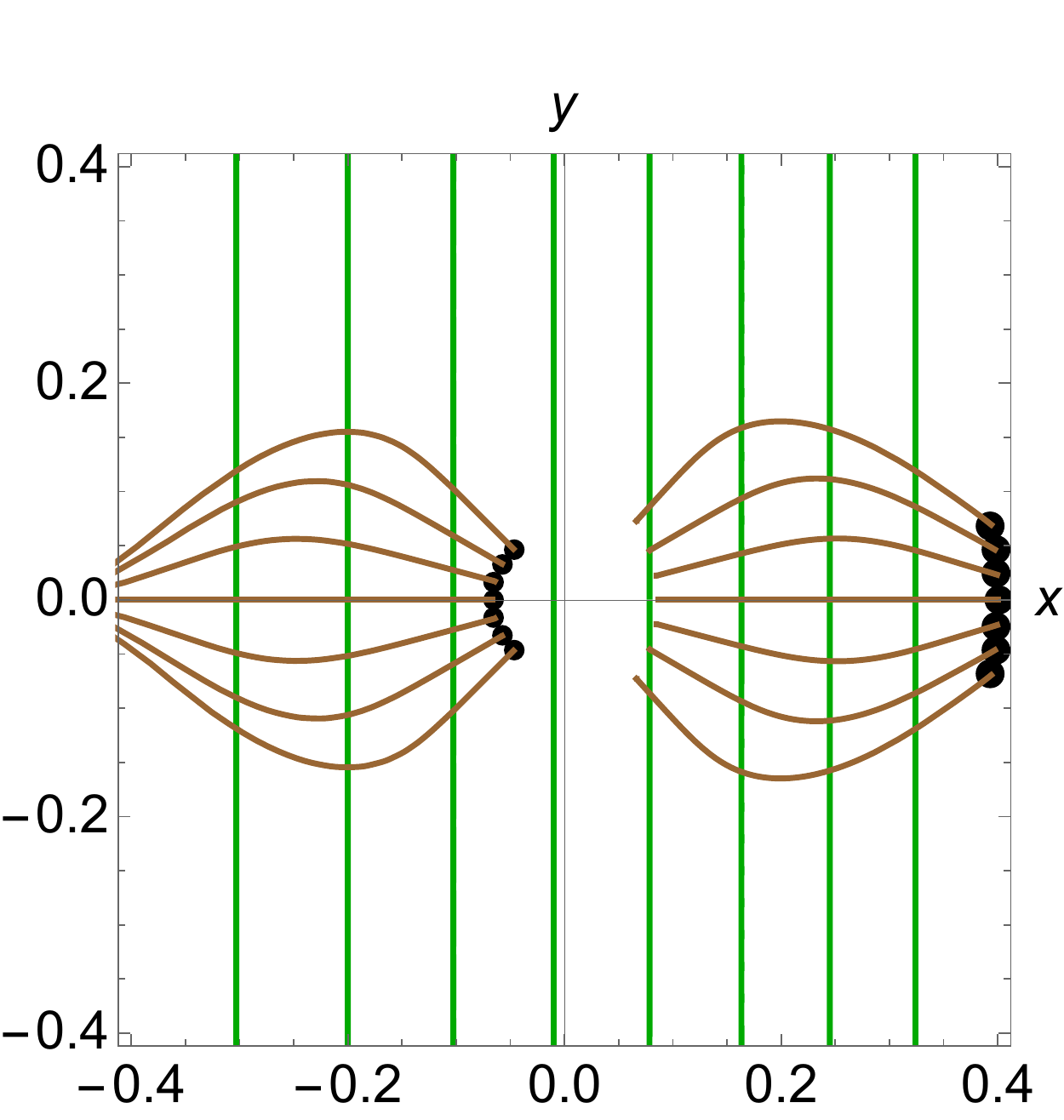}
\subcaption{Funnel end with $\ell=1$.}
\end{minipage}\hfill
\begin{minipage}{.45\textwidth}
\centering ~~ \includegraphics[width=.97\linewidth]{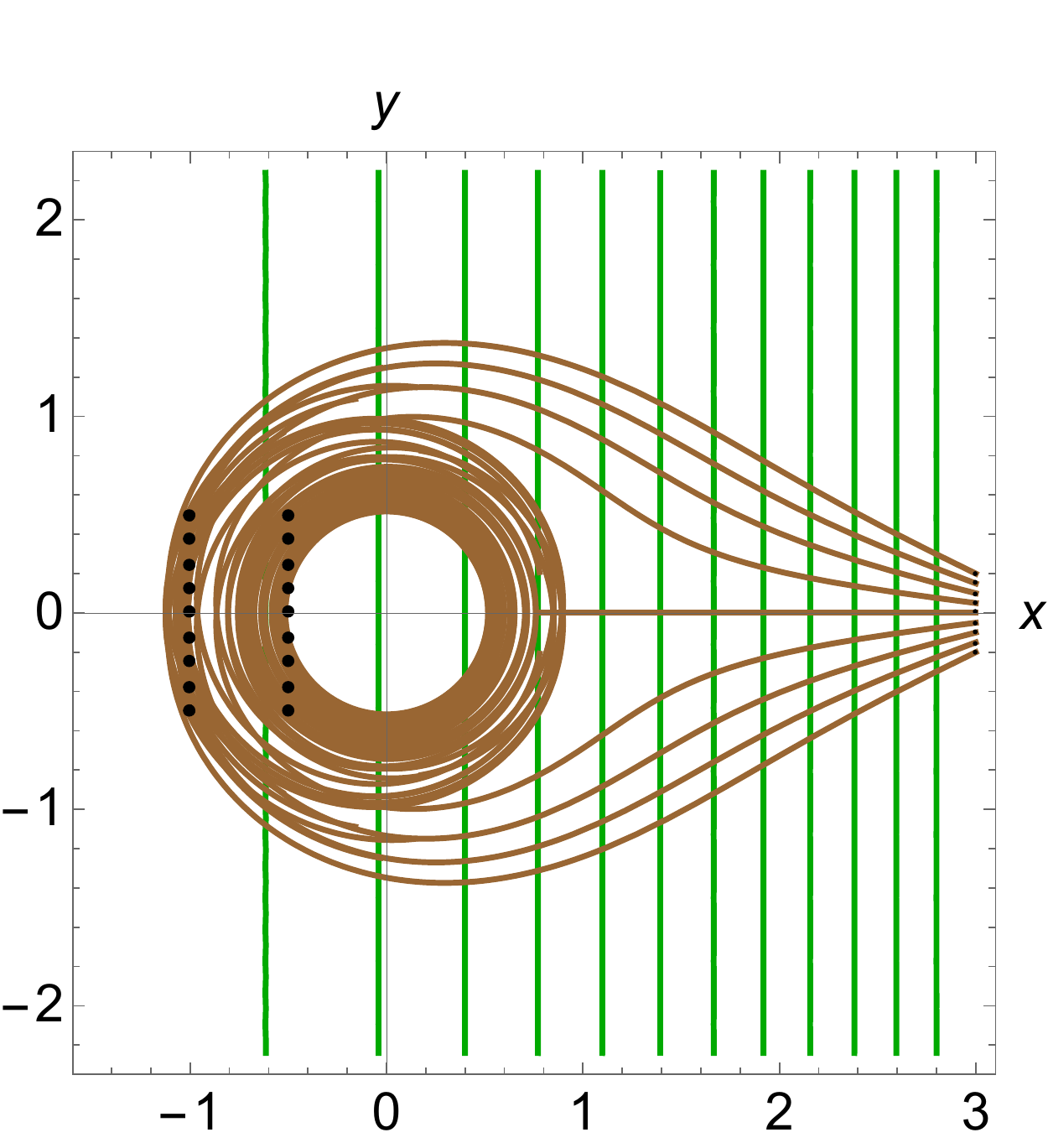}
\subcaption{Cusp end.}
\end{minipage}
\caption{Numerically computed infrared optimal cosmological orbits
(shown in brown) and level sets of $\Phi$ (shown in green) near
noncritical plane, horn, funnel and cusp ends in special Cartesian
canonical coordinates centered at the end. For the funnel end we took
$\ell=1$. The initial values of the corresponding curves are shown as
black dots, while the initial speeds lie in the gradient shell of
$(\Sigma,G,V)$.}
\label{fig:NoncritCosm}
\end{figure}

The case of noncritical cusp ends is particularly interesting. For
clarity, Figure \ref{fig:NoncritCosmCuspDetail} shows the evolution of
a few infrared optimal cosmological curves for four consecutive
cosmological times.

\begin{figure}[H]
\centering
\begin{minipage}{.45\textwidth}
\centering ~~\includegraphics[width=.97\linewidth]{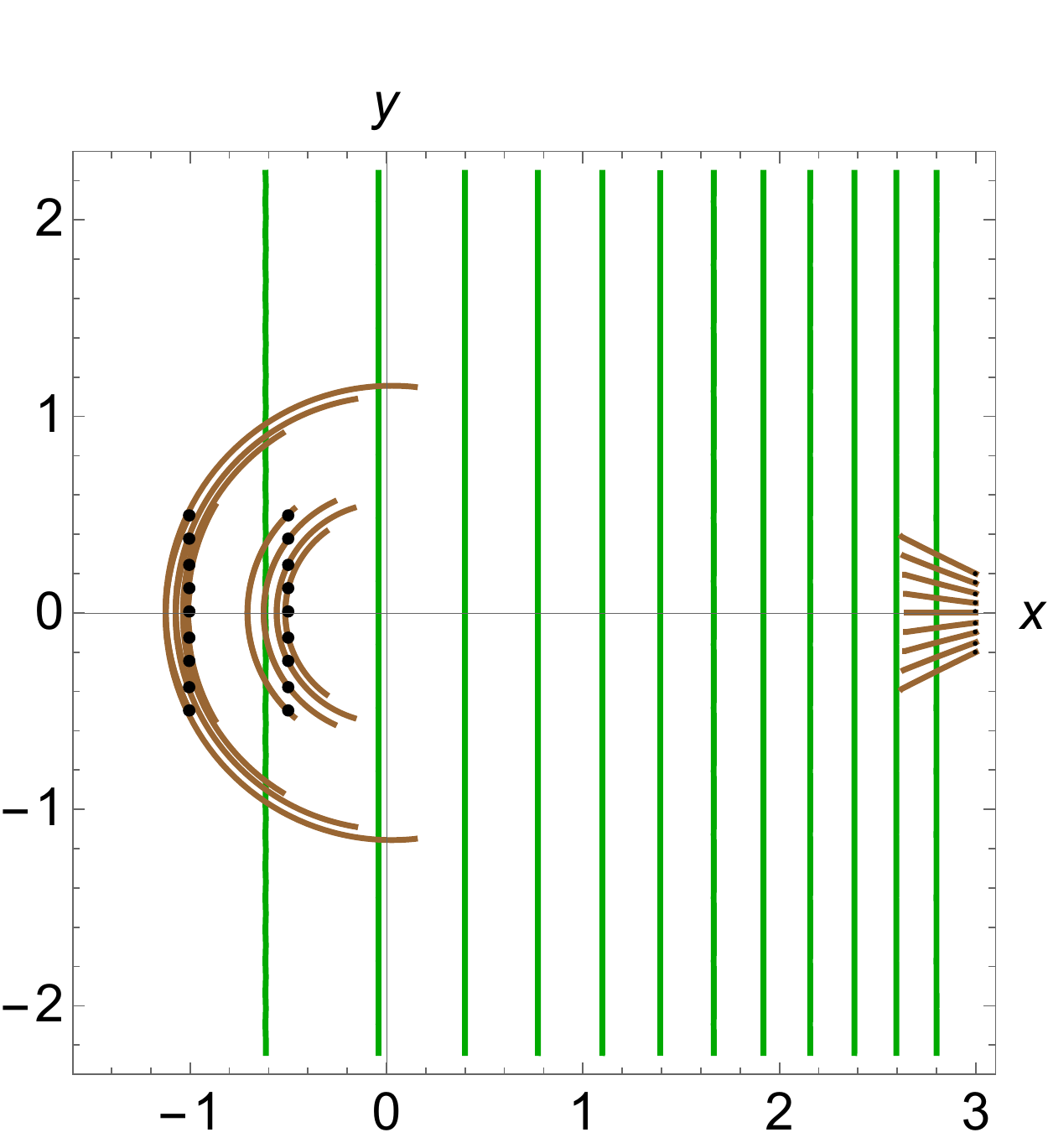}
\subcaption{For $t= 0.01$.}
\end{minipage}\hfill 
\begin{minipage}{.45\textwidth}
\centering ~~\includegraphics[width=.95\linewidth]{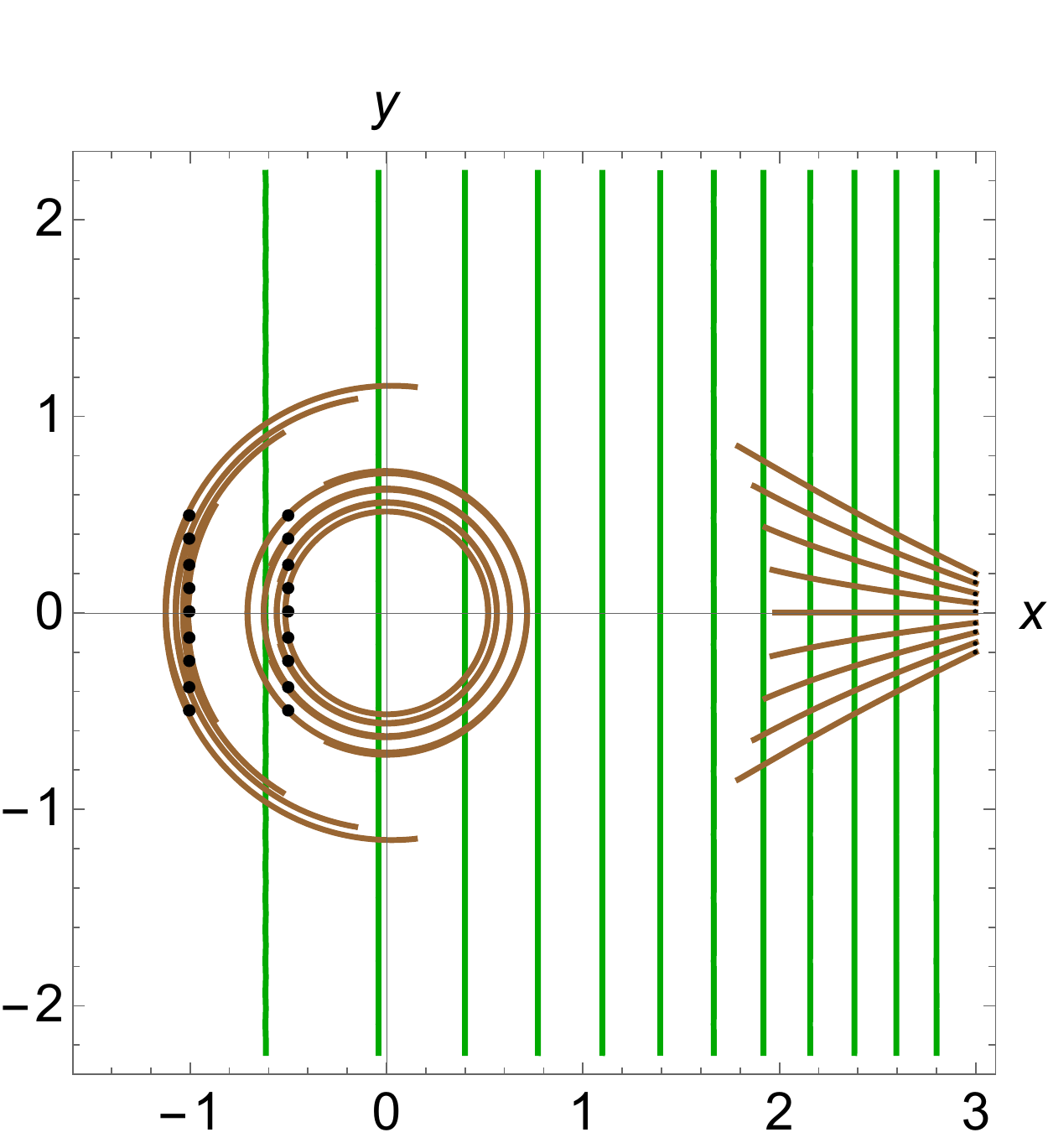}
\subcaption{For $t=0.04$.}
\end{minipage}\\
\begin{minipage}{.45\textwidth}
\centering ~~\includegraphics[width=.95\linewidth]{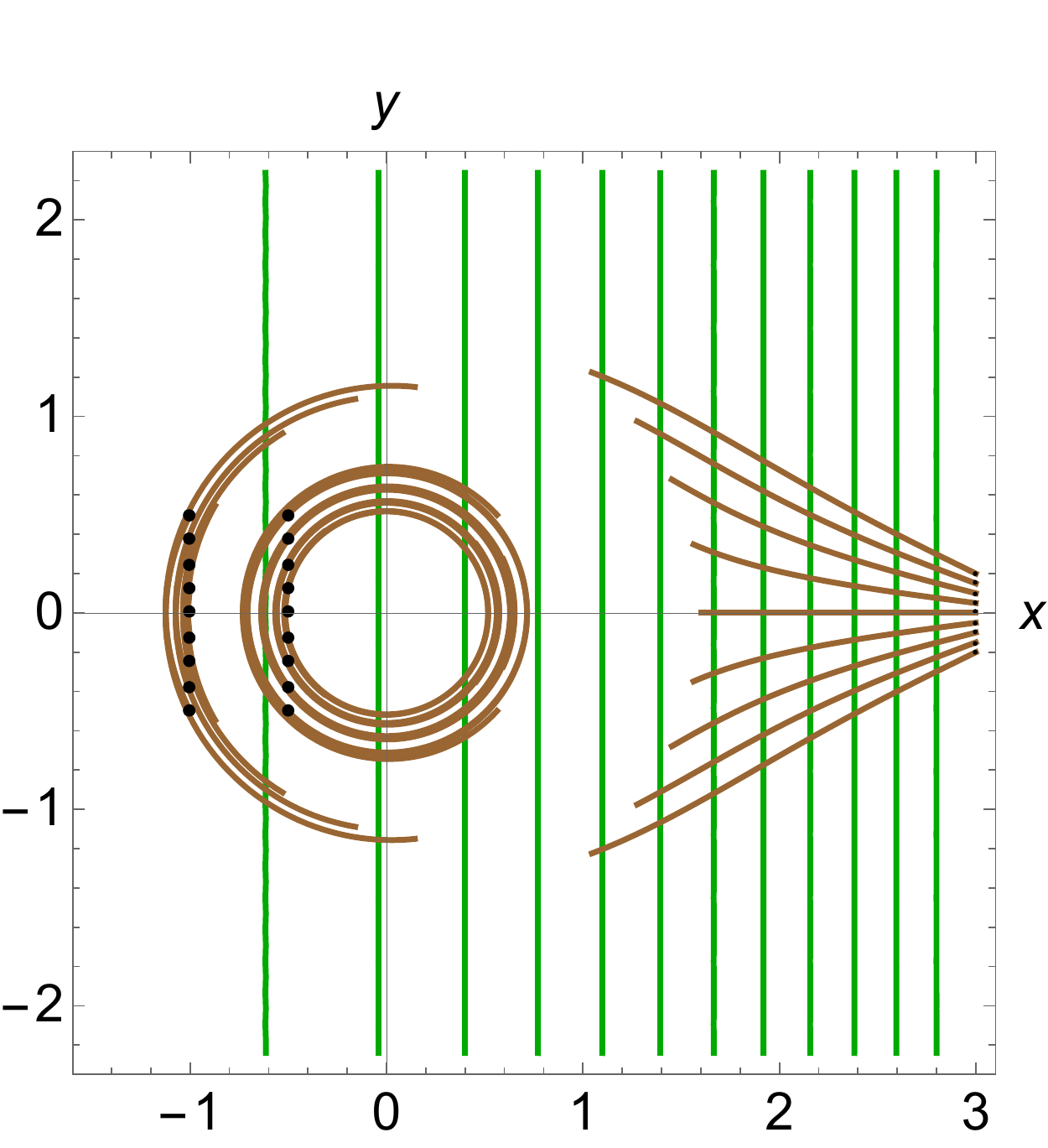}
\subcaption{For $t=0.07$.}
\end{minipage}\hfill
\begin{minipage}{.45\textwidth}
\centering ~~ \includegraphics[width=.97\linewidth]{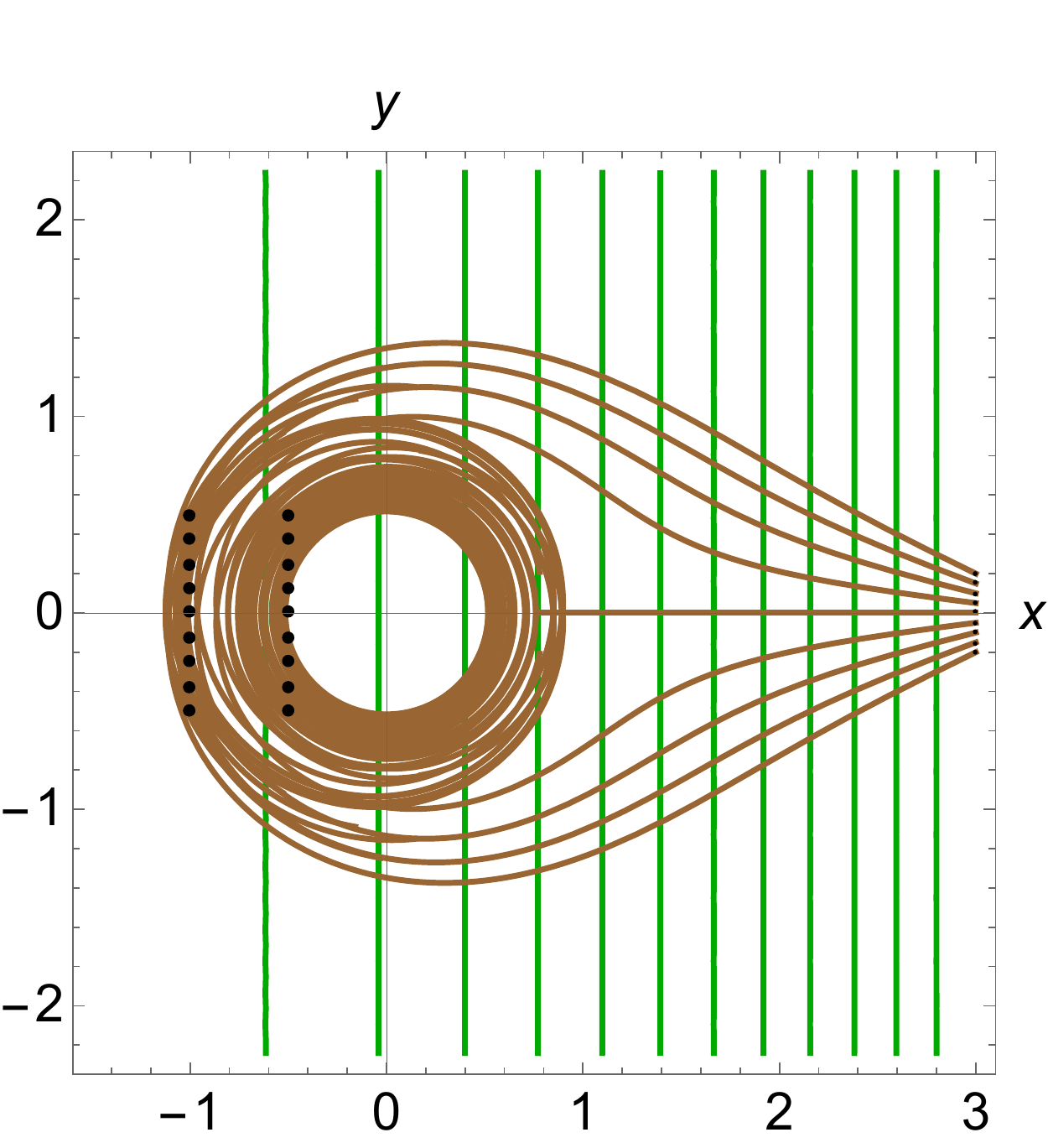}
\subcaption{For $t=0.3$.}
\end{minipage}
\caption{Numerically computed infrared optimal cosmological orbits of
the canonical model (shown in brown) and level sets of $\Phi$ (shown
in green) near a noncritical cusp end, with initial points shown as
black dots and initial speeds taken in the gradient flow shell of
$(\Sigma,G,V)$. The four figures show the orbits for cosmological times
between zero and $t= 0.01$, $0.04$, $0.07$ and $0.3$ respectively.}
\label{fig:NoncritCosmCuspDetail}
\end{figure}

\subsection{Stable and unstable manifolds of noncritical ends under the effective gradient flow}

\noindent The discussion above shows that noncritical flaring ends
behave like fictitious stationary points of the gradient flow of
$(\Sigma,G,V)$ even though Freudenthal ends are not points of $\Sigma$ and
even though $\hV$ does not have a critical point at a non-critical end. 
The stable and unstable manifolds of such an end (see \eqref{SUdefe})
are connected and of dimension two:
\be
\dim \cS(\e)=\dim \cU(\e)=2~~\mathrm{for}~~\e=\mathrm{noncritical~flaring~end}~~,
\ee
a dimension count which differs from that of ordinary hyperbolic fixed
points of dynamical systems. On the other hand, noncritical cusp ends
do not behave like stationary points of the effective gradient flow;
instead, they ``repel'' all gradient flow orbits with the exception of
the two special orbits which lie on the $x$ axis and hence correspond
to geodesic orbits having the cusp as a limit point. These special
orbits form the stable and unstable manifolds of a noncritical cusp
end, which are connected and one-dimensional:
\be
\dim \cS(\e)=\dim \cU(\e)=1~~\mathrm{for}~~\e=\mathrm{noncritical~cusp~end}~~.
\ee
With the exception of the two special orbits, every other effective
gradient flow orbit never reaches the noncritical cusp end. Notice
that the noncritical ends do not act as attractors of the effective
gradient flow.

\section{The IR phases of critical ends}
\label{sec:critends}

\noindent Consider principal Cartesian canonical coordinates $(x,y)$
centered at a critical end $\e$ (see Subsection
\ref{subsec:PrincCritEnds}). In such coordinates, we have:
\beqan
\label{egradas1}
&& (\grad V)^\omega\approx \omega^4 \partial_\omega V\approx \omega^5 \left[\lambda_1(\e) \cos^2 \theta+\lambda_2(\e)\sin^2 \theta\right]\\
&& (\grad V)^\theta\approx \frac{1}{f_\e(1/\omega)}\partial_\theta V\approx \frac{\lambda_2(\e)-\lambda_1(\e)}{{\tilde c}_\e}\omega^2 e^{-\frac{2\epsilon_\e}{\omega}} \sin\theta\cos\theta~~.\nn
\eeqan

\subsection{Special gradient flow orbits}

\noindent For $\theta\in \{0,\frac{\pi}{2},\pi,\frac{3\pi}{2}\}$, the
gradient flow equation of $(\Sigma,G,V)$ reduces in leading order near
$\e$ to $\theta=\const$ and:
\be
\frac{\dd\omega}{\dd q}=\twopartdef{\lambda_1(\e)\omega^5}{\theta\in \{0,\pi\}}{\lambda_2(\e)\omega^5}{\theta\in \{\frac{\pi}{2},\frac{3\pi}{2}\}}~~,
\ee
with general solution:
\be
\omega=\twopartdef{\frac{1}{[4\lambda_1(\e)(q_0-q)]^{1/4}}}{\theta\in \{0,\pi\}~(\mathrm{with}~q<q_0)}{\frac{1}{[4\lambda_2(\e)(q_0-q)]^{1/4}}}{\theta\in \{\frac{\pi}{2},\frac{3\pi}{2}\}~(\mathrm{with}~q<q_0)}~~.
\ee
Shifting $q$ by $q_0$ brings this to the form:
\be
\omega=\twopartdef{\frac{1}{[4\lambda_1(\e)|q|]^{1/4}}}{\theta\in \{0,\pi\}~(\mathrm{with}~q<0)}{\frac{1}{[4\lambda_2(\e)|q|]^{1/4}}}{\theta\in \{\frac{\pi}{2},\frac{3\pi}{2}\}~(\mathrm{with}~q<0)}~~.
\ee
This gives four gradient flow orbits which tend to $\e$ for
$q\rightarrow -\infty$ and asymptote to the principal
geodesic orbits near $\e$. 

\subsection{Non-special gradient flow orbits}

\noindent For $\theta\not \in \{0,\frac{\pi}{2},\pi,\frac{3\pi}{2}\}$, the
gradient flow equation reduces to:
\be
\frac{\dd \omega}{\dd \theta}=\frac{{\tilde c}_\e}{\lambda_2(\e)-\lambda_1(\e)} \omega^3 e^{\frac{2\epsilon_\e}{\omega}}\left[\lambda_1(\e)\cot\theta+\lambda_2(\e)\tan\theta\right]~~,
\ee
which can be written as:
\ben
\label{Pfaff}
[\lambda_2(\e)-\lambda_1(\e)]\omega^{-3}  e^{-\frac{2\epsilon_\e}{\omega}}\dd \omega={\tilde c}_\e \left[\lambda_1(\e)\cot\theta+\lambda_2(\e)\tan\theta\right]\dd\theta~~.
\een
Setting $v\eqdef \frac{2\epsilon_\e}{\omega}$ brings \eqref{Pfaff} to
the form:
\ben
\label{PfaffFlaring}
\frac{1}{4}[\lambda_1(\e)-\lambda_2(\e)]v  e^{-v}\dd v={\tilde c}_\e \left[\lambda_1(\e)\cot\theta+\lambda_2(\e)\tan \theta\right]\dd\theta~~.
\een
We have $v e^{-v}\dd v=\dd \bgamma_2(v)$, where $\bgamma_2(v)$ is the
lower incomplete Gamma function of order $2$ (see \eqref{gamma}).
Hence \eqref{PfaffFlaring} gives the following implicit equation for
the asymptotic gradient flow orbits near a critical end $\e$:
\ben
\label{gammaflow}
\frac{1}{4}[\lambda_1(\e)-\lambda_2(\e)]\bgamma_2\left(\frac{2\epsilon_\e}{\omega}\right)=A+{\tilde c}_\e\left[\,\lambda_1(\e)\log|\sin\theta|-\lambda_2(\e)\log|\cos\theta |\,\right]~~,
\een
where $A$ is an integration constant. Since:
\be
\bgamma_2(v)=1-e^{-v}-ve^{-v}~~,
\ee
this can be written explicitly as:
\ben
\label{eSol0}
\frac{1}{4}[\lambda_1(\e)-\lambda_2(\e)]\left[1-e^{-\frac{2\epsilon_\e}{\omega}}-\frac{2\epsilon_\e}{\omega}e^{-\frac{2\epsilon_\e}{\omega}}\right]=
A+{\tilde c}_\e \left[\,\lambda_1(\e)\log|\sin\theta|-\lambda_2(\e)\log|\cos\theta |\,\right]~~.
\een
Distinguish the cases:
\begin{enumerate}[1.]
\itemsep 0.0em
\item $V$ is circular at $\e$,
i.e. $\lambda_1(\e)=\lambda_2(\e):=\lambda(\e)$ (which amounts to
$\beta_\e=1$). Then \eqref{gammaflow} becomes:
\ben
\label{eqri}
\log|\sin\theta|-\log|\cos\theta|=B~~\mathrm{i.e.}~~\tan\theta=
\pm e^B\Longleftrightarrow \theta=\pm \arctan(e^B) \!\!\! \mod \pi
\een
with $B\eqdef -\frac{A}{\lambda(\e){\tilde c}_\e}$ and hence $\theta$
is constant for all asymptotic gradient flow curves. In this case, the
asymptotic gradient flow orbits near $\e$ are geodesic orbits of
$(\Sigma,G)$ having $\e$ as a limit point. For each value of $B$,
there are exactly four such orbits.
\item $V$ is not circular at $\e$, i.e. $\beta_\e\neq 1$. Then
\eqref{gammaflow} can be written as:
\ben
\label{eSolBeta}
\bgamma_2\left(\frac{2\epsilon_\e}{\omega}\right)
=1-e^{-\frac{2\epsilon_\e}{\omega}}-\frac{2\epsilon_\e}{\omega}e^{-\frac{2\epsilon_\e}{\omega}}=C-\frac{4{\tilde c}_\e}{1-\beta_\e}\left(\beta_\e \log|\sin\theta|-\log|\cos\theta |\right)~~,
\een
where $C\eqdef\frac{4A}{\lambda_1(\e)-\lambda_2(\e)}$.
\end{enumerate}

\noindent Thus:

\begin{prop}
The unoriented orbits of the asymptotic gradient flow of
$(\Sigma,G,V)$ near a critical end $\e$ are determined by the
hyperbolic type of the end (i.e. by $\epsilon_\e$ and ${\tilde c}_\e$)
and by the critical modulus $\beta_\e$, while the orientation of the
orbits is determined by the critical signs $\epsilon_i(\e)$, which
satisfy $\epsilon_1(\e)\epsilon_2(\e)=1$.
\end{prop}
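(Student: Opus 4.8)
The plan is to read the proposition off the case analysis carried out above, so the argument is almost entirely organisational; the one substantive point is the identity $\epsilon_1(\e)\epsilon_2(\e)=1$, which I address last. I would first separate, exactly as in the two preceding subsections, the four \emph{special} orbits with $\theta\in\{0,\frac{\pi}{2},\pi,\frac{3\pi}{2}\}$ from the non-special ones. The special orbits are precisely the four principal geodesic semi-axes at $\e$; by Proposition~\ref{prop:pn} the principal Cartesian canonical coordinate system at $\e$ is determined by $(\Sigma,G,V)$ up to the finite group $\Delta$ (or, when $V$ is circular at $\e$, i.e.\ $\beta_\e=1$, up to all of $\O(2)$, every geodesic through $\e$ then being principal, cf.\ \eqref{eqri}), so in such coordinates this part of the flow carries no extra data. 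For the non-special orbits I would simply quote the implicit equations already obtained: \eqref{eqri} when $\beta_\e=1$ and \eqref{eSolBeta} when $\beta_\e\neq 1$. The right-hand side of \eqref{eSolBeta} involves only $\beta_\e$, the end-type constants $\epsilon_\e,\tc_\e$ and the free constant $C$ -- note in particular that, by \eqref{egradas1}, the leading gradient flow near $\e$ sees only $\Hess(\hV)(\e)$, not $\hV(\e)$ or $M_0$ -- and the equation is invariant under $(\lambda_1(\e),\lambda_2(\e))\mapsto(-\lambda_1(\e),-\lambda_2(\e))$, which fixes $\beta_\e$ and merely reflects the free constant; hence the \emph{unoriented} orbit family depends on $(\epsilon_\e,\tc_\e,\beta_\e)$ alone, while in the circular case even $(\epsilon_\e,\tc_\e)$ drop out. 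This, together with the remark on the special orbits, proves the first assertion.

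For the orientation I would return to the explicit special-orbit solutions. Along $\theta\in\{0,\pi\}$ one has $\omega\propto[\,4\lambda_1(\e)(q_0-q)\,]^{-1/4}$, which forces $\omega\to 0$ at exactly one endpoint of the admissible $q$-range, the one selected by $\sign\lambda_1(\e)=\epsilon_1(\e)$; hence this principal orbit limits onto $\e$ in the future or in the past according to $\epsilon_1(\e)$ -- i.e.\ it lies in $\cS(\e)$ or in $\cU(\e)$ -- and symmetrically the orbit along $\theta\in\{\frac{\pi}{2},\frac{3\pi}{2}\}$ according to $\epsilon_2(\e)$. To extend this to \emph{every} orbit I would use $\epsilon_1(\e)\epsilon_2(\e)=1$: then $\lambda_1(\e)$ and $\lambda_2(\e)$ share a common sign $\epsilon(\e)$, so by \eqref{egradas1} the radial component $(\grad V)^\omega\approx\epsilon(\e)\,\omega^5\!\left(|\lambda_1(\e)|\cos^2\theta+|\lambda_2(\e)|\sin^2\theta\right)$ has constant sign on a punctured neighbourhood of $\e$; since $\dot\eta=-\grad_G V$, the coordinate $\omega$ is then monotone along every orbit there, $\e$ acting as a fictitious sink when $\epsilon(\e)=+1$ and as a fictitious source when $\epsilon(\e)=-1$. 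Thus the orientation of all asymptotic orbits, not merely the four special ones, is pinned down by the characteristic signs $\epsilon_i(\e)$. (If one also wished to track a saddle critical end, the same computation shows that the orientation of each orbit is still governed by the $\epsilon_i(\e)$, now through the sign of $\lambda_1(\e)\cos^2\theta+\lambda_2(\e)\sin^2\theta$ along it.)

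The only step that is not pure bookkeeping is the identity $\epsilon_1(\e)\epsilon_2(\e)=1$ -- equivalently, that a critical end is a local extremum of $\hV$ rather than a saddle -- and I expect this to be the main obstacle. I would take it as an input, as in the companion analysis of generalized two-field $\alpha$-attractors \cite{genalpha}, where the Morse data of $\hPhi$ is tied to the topology of $\Sigma$; a self-contained proof would proceed by excluding $\ind(\e)=1$, combining the Morse condition on $\hPhi$ with the hyperbolic normal form \eqref{emetric}--\eqref{fe} of $G$ near the puncture. Should it transpire that this requires an extra hypothesis on $\Sigma$, the orientation statement still holds in the sectorial form indicated above, and the conclusion on the unoriented orbits is unaffected.
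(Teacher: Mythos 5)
Your argument for the first assertion follows the paper's own route: the proposition is stated there as an immediate consequence (``Thus:'') of the implicit orbit equations \eqref{eqri} and \eqref{eSolBeta}, whose right-hand sides involve only $\epsilon_\e$, ${\tilde c}_\e$, $\beta_\e$ and a free integration constant, exactly as you observe; your treatment of the orientation through the sign of $\lambda_1(\e)\cos^2\theta+\lambda_2(\e)\sin^2\theta$ in \eqref{egradas1} is a more explicit version of what the paper leaves implicit, and it is correct.

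The one place where you diverge is the clause $\epsilon_1(\e)\epsilon_2(\e)=1$, which you rightly single out as the only substantive claim and propose to establish by excluding $\ind(\e)=1$. Do not spend effort on this: the identity is equivalent to $\beta_\e>0$ (since $\sign(\beta_\e)=\epsilon_1(\e)\epsilon_2(\e)$ by \eqref{beta}), and saddle critical ends with $\beta_\e\in[-1,0)$ are perfectly possible --- the very same section of the paper analyzes them at length (cases 1.(a) and 2.(a) of the sector analysis, the subsection on stable and unstable manifolds of critical ends, and all the figures drawn for $\beta_\e=-1/2$). No Morse-theoretic or hyperbolic-geometric argument can exclude index one at an end: the Euler characteristic relation only constrains the total count of indices over $\Crit\hV$, not which points of $\hSigma$ carry which index. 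The clause is an internal inconsistency of the statement (repeated in the paper's conclusions) and should be read merely as recording $\sign(\beta_\e)=\epsilon_1(\e)\epsilon_2(\e)$, i.e.\ that once $\beta_\e$ is known a single characteristic sign fixes the orientation of all orbits. Your ``sectorial'' fallback is precisely the statement that is true and that the paper actually uses, so your proof is complete once you drop the attempt to prove the false identity rather than leaving it as a conditional input.
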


\paragraph{Asymptotic sectors for non-special gradient flow curves when $\lambda_1(\e)\neq \lambda_2(\e)$.}

\noindent Let us assume that $\lambda_1(\e)\neq \lambda_2(\e)$
i.e. that $\beta_\e<1$. Then \eqref{eSolBeta} reads:
\ben
\label{eSB}
\bgamma_2\left(\frac{2\epsilon_\e}{\omega}\right)=C-\frac{4{\tilde c}_\e}{1-\beta_\e} H(\theta;\beta_\e)~~,
\een
where $H:\rS^1\times [-1,1) \rightarrow \overline{\R}$ is the function defined through:
\be
H(\theta;\beta)\eqdef \beta \log|\sin\theta|-\log|\cos\theta |~~.
\ee
This function satisfies:
\be
H(-\theta,\beta)=H(\pi-\theta;\beta)=H(\theta,\beta)~~.
\ee
Notice that:
\be
\bgamma_2(v) \in \twopartdef{(0,1)}{~v>0}{(0,+\infty)}{~v<0}
\ee
and:
\be
\bgamma_2(-\infty)=+\infty~~,~~\bgamma_2(0)=0~~,~~\bgamma_2(+\infty)=1~~.
\ee
Moreover:
\be
\bgamma_2'(v)=ve^{-v} \threepartdef {>0}{~v>0}{=0}{~v=0}{<0}{~v<0}~~,
\ee
hence $\bgamma_2(v)$ is strictly decreasing for $v<0$ and strictly increasing
for $v>0$. Since $\omega>0$, we have:
\be
v:=\frac{2\epsilon_\e}{\omega}\in \twopartdef{>0}{~\e=\mathrm{flaring~end}}{<0}{~\e=\mathrm{cusp~end}}~~,
\ee
which gives:
\ben
\label{gamma2im}
\bgamma_2\left(\frac{2\epsilon_\e}{\omega}\right)\in \twopartdef{(0,1)}{~\e=\mathrm{flaring~end}}{(0,+\infty)}{~\e=\mathrm{cusp~end}}~~. 
\een
Hence \eqref{eSB} requires:
\ben
\label{Crange}
C-\frac{4{\tilde c}_\e}{1-\beta_\e} H(\theta;\beta_\e)\in \twopartdef{(0,1)}{~\e=\mathrm{flaring~end}}{(0,+\infty)}{~\e=\mathrm{cusp~end}}~~. 
\een
Since $H(\theta;\beta)$ tends to $\pm \infty$ for $\theta
\,\mathrm{mod}\, 2\pi\in \{0,\frac{\pi}{2},\pi,\frac{3\pi}{2}\}$,
these values of $\theta$ cannot be attained along any asymptotic
gradient flow orbit. Hence all such orbits are contained in the
complement of the principal coordinate axes in the $(x,y)$-plane,
which means that they cannot meet the principal geodesic orbits close
to the end $\e$. For fixed $\beta$, the function $H(\theta,\beta)$ is
invariant under the action of the Klein four-group $\Z_2\times \Z_2$
generated by the reflections $\theta\rightarrow -\theta$ and
$\theta\rightarrow\pi -\theta$ with respect to the $x$ and $y$ axes;
in particular, this function is periodic of period $\pi$ and its
restriction to the interval $(0,\pi)$ is symmetric with respect to
$\frac{\pi}{2}$, hence it suffices to study the restriction of $H$ to
the interval $\theta\in (0,\frac{\pi}{2})$. Noticing that:
\be
\lim_{\theta\rightarrow 0}H(\theta,\beta)=-\sign(\beta)\infty~~,~~\lim_{\theta\rightarrow \frac{\pi}{2}}H(\theta,\beta)=+\infty
\ee
we distinguish the cases: 
\begin{enumerate}[A.]
\itemsep 0.0em
\item $\beta\in [-1,0)$. For $\theta\in (0,\pi/2)$, we have:
\be
\frac{\dd H(\theta;\beta)}{\dd \theta}=\beta\cot\theta+\tan\theta \threepartdef{<0}{~~0\leq \theta<\theta_0(\beta)}{=0}{~~\theta=\theta_0(\beta)}{>0}{~~\theta_0(\beta)\leq \theta\leq \frac{\pi}{2}}~~,
\ee
where $\theta_0(\beta)\eqdef \arctan\sqrt{|\beta|}$. In this case,
$H(\theta;\beta)$ tends to $+\infty$ at the endpoints of the interval
$(0,\frac{\pi}{2})$ and attains its minimum within this interval at
the point $\theta_0(\beta)$, the minimum value being given by:
\be
\mu(\beta)\eqdef H(\theta_0(\beta),\beta)= \frac{1}{2}\Big[(1+|\beta|)\log (1+|\beta|)-|\beta|\log|\beta|\Big]~~.
\ee
Notice that $\mu$ is an increasing function of $|\beta|$ (hence a decreasing function of $\beta$) and that we have:
\be
\mu(\beta)\in (0,\log2]~~\mathrm{with}~~\mu(-1)=\log2~~,~~\mu(0^-)=0~~.
\ee
It follows that $H(\theta,\beta)$ tends to $+\infty$ for $\theta \!\in\!
\{0,\frac{\pi}{2},\pi,\frac{3\pi}{2}\}$ and attains its minimum
$\mu(\beta)$ on $\rS^1$ for $\theta \in
\{\theta_0(\beta),\pi-\theta_0(\beta),\theta_0(\beta)+\pi,2\pi-\theta_0(\beta)\}$
(see Figure \ref{Hbm1}).

\begin{figure}[H]
\centering
\begin{minipage}{.48\textwidth}
\centering ~~~~ \includegraphics[width=.98\linewidth]{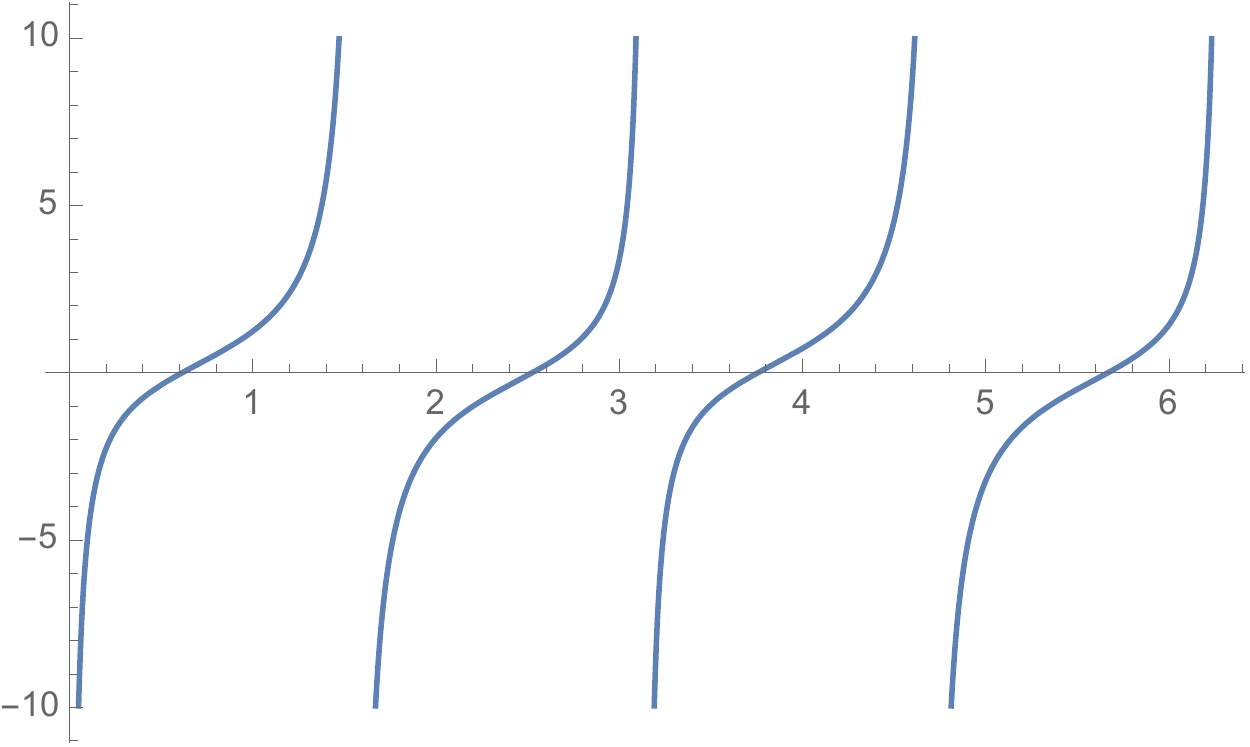}
\vskip 1.2em  
\subcaption{$\frac{\dd H(\theta;\beta)}{\dd \theta}$ for $\beta=-1/2$ and $\theta\in(0,2\pi)$.}
\end{minipage}\hfill 
\begin{minipage}{.48\textwidth}
\centering \includegraphics[width=.98\linewidth]{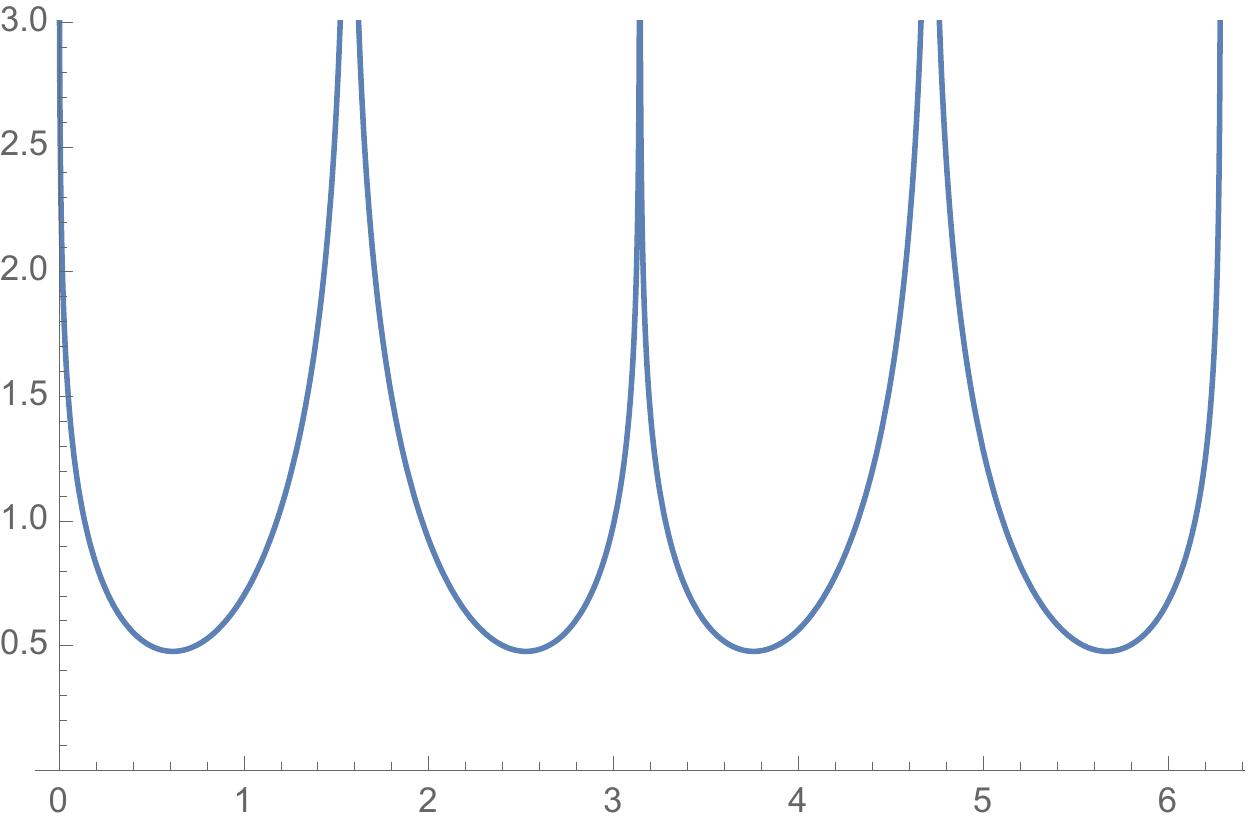}
\subcaption{$H(\theta;\beta)$ for $\beta=-1/2$ and $\theta\in (0,2\pi)$.}
\end{minipage}
\caption{Plots of $\frac{\dd H(\theta;\beta)}{\dd \theta}$ and $H(\theta;\beta)$ for 
$\beta=-1/2$ and $\theta\in (0,2\pi)$.}
\label{Hbm1}
\end{figure}

\item $\beta\in (0,1)$. Then the derivative:
\be
\frac{\dd H(\theta;\beta)}{\dd \theta}=\left(\beta+\tan^2\theta\right)\cot\theta
\ee
is strictly positive for $\theta \in
(0,\frac{\pi}{2})\cup(\pi,\frac{3\pi}{2})$ and strictly negative for
$\theta \in
(\frac{\pi}{2},\pi)\cup(\frac{3\pi}{2},2\pi)$. Hence $H(\theta,\beta)$
increases strictly from $-\infty$ to $+\infty$ along the
$\theta$-intervals $(0,\frac{\pi}{2})$ and $(\pi,\frac{3\pi}{2})$ and
decreases strictly from $+\infty$ to $-\infty$ along the
$\theta$-intervals $(\frac{\pi}{2},\pi)$ and
$(\frac{3\pi}{2},2\pi)$. Thus $H$ tends to $-\infty$ for $\theta\in
\{0,\pi\}$ and to $+\infty$ for $\theta \in
\{\frac{\pi}{2},\frac{3\pi}{2}\}$ and is strictly monotonous on the
circle intervals separating these four special points of $\rS^1$ (see
Figure \ref{Hbp1}).

\begin{figure}[H]
\centering
\begin{minipage}{.48\textwidth}
\centering ~~~~ \includegraphics[width=.97\linewidth]{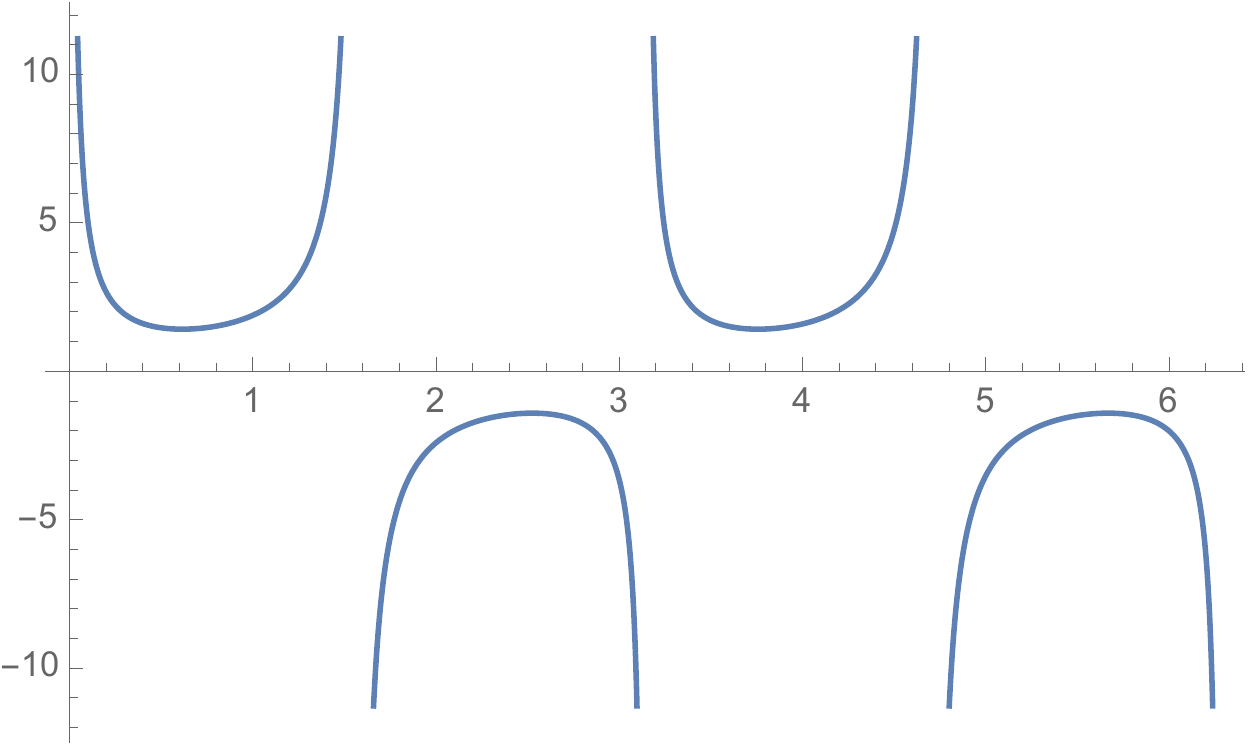}
\subcaption{$\frac{\dd H(\theta;\beta)}{\dd \theta}$ for $\beta=1/2$ and $\theta \in (0,2\pi)$.}
\end{minipage}\hfill 
\begin{minipage}{.48\textwidth}
\centering \includegraphics[width=.98\linewidth]{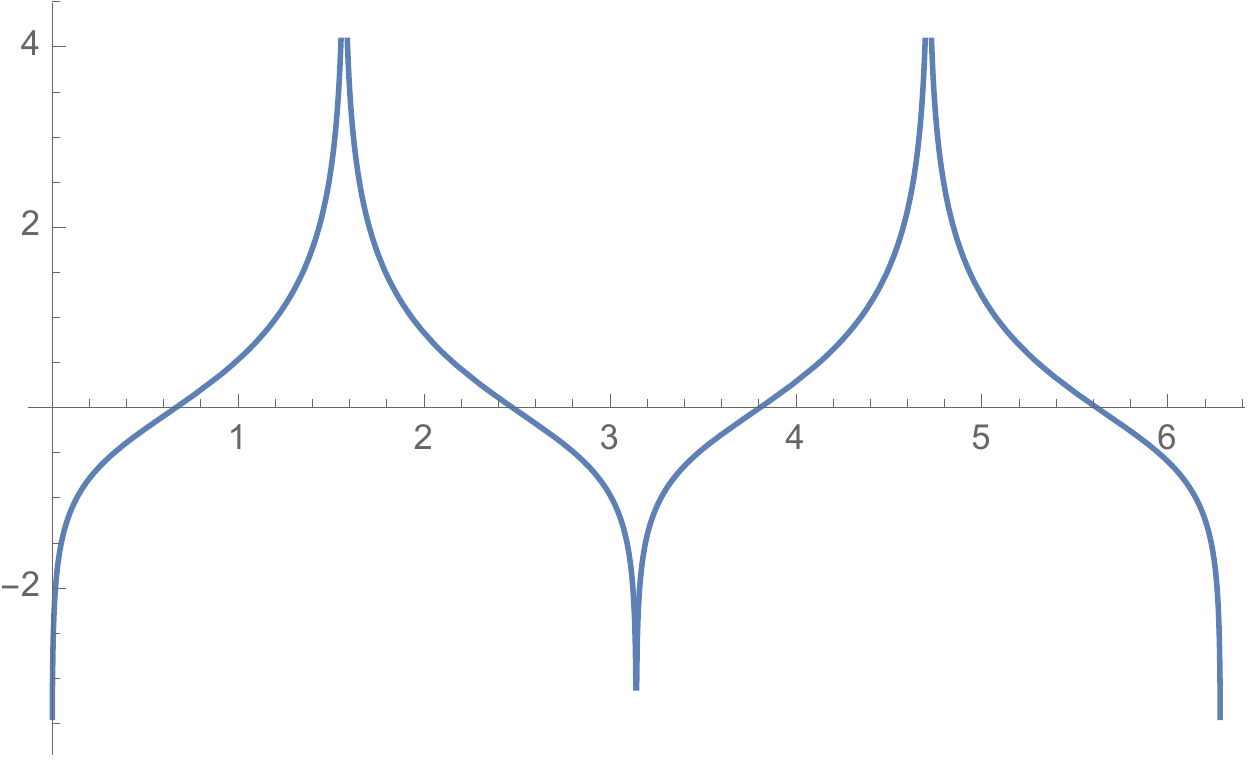}
\subcaption{$H(\theta;\beta)$ for $\beta=1/2$ and $\theta \in (0,2\pi)$.}
\end{minipage}
\caption{Plots of $\frac{\dd H(\theta;\beta)}{\dd \theta}$ and $H(\theta;\beta)$ for $\beta=1/2$ and $\theta \in (0,2\pi)$.}
\label{Hbp1}
\end{figure}
\end{enumerate}

\noindent Returning to condition \eqref{Crange}, we distinguish the
cases:

\begin{enumerate}[1.]
\itemsep 0.0em
\item $\e$ is a flaring end. Then \eqref{Crange} requires
$C-\frac{4{\tilde c}_\e}{1-\beta_\e} H(\theta;\beta_\e)\in (0,1)$.  We
have two sub-cases:
\begin{enumerate}[(a)]
\itemsep 0.0em
\item $\beta_\e\in [-1,0)$. Then $H(\theta,\beta_\e)\geq \mu(\beta_\e)$ 
and $C$ is constrained by the condition:
\ben
\label{CuspCNegBeta}
C>\frac{4{\tilde c}_\e}{1-\beta_\e}\mu (\beta_\e)~~.
\een
Each gradient flow orbit corresponds to a fixed value of $C$. On any such
orbit, condition \eqref{Crange} gives:
\ben
\label{Hcond}
\frac{4{\tilde c}_\e}{1-\beta_\e} H(\theta;\beta_\e)\in (C-1,C)~~.
\een
We have two possibilities:
\begin{itemize}
\item When $C-1\geq\frac{4{\tilde c}_\e}{1-\beta_\e}\mu(\beta_\e)$,
condition \eqref{Hcond} constrains $\theta$ to lie in a union of eight
disjoint open intervals on $\rS^1$. These eight intervals divide into
four successive pairs, where both intervals of each pair are contained
in one of the four quadrants and the four pairs are related by the
action of $\Z_2\times \Z_2$. Hence for each fixed value of $C$ we have
eight orbits in the $(x,y)$ plane, which arrange into four pairs
lying in the four quadrants; the pairs are related to each other by
the action of $\Z_2\times \Z_2$. We will see below that these eight
orbits have $\e$ as a limit point. 
\item When $C-1<\frac{4{\tilde c}_\e}{1-\beta_\e}\mu(\beta_\e)$, the
condition above constrains $\theta$ to lie in a union of four disjoint
open intervals on $\rS^1$ (each lying in a different quadrant) which
is invariant under the action of $\Z_2\times \Z_2$. In this case, we
have four orbits (one in each quadrant) which are related by this
action. These orbits do not have $\e$ as a limit point (see below). 
\end{itemize}

\item $\beta_\e\in (0,1)$. Then $C$ is unconstrained since
$H(\theta,\beta)$ is not bounded from below or from above. On any
gradient flow orbit corresponding to $C$, condition \eqref{Hcond} must
be satisfied. This constrains $\theta$ to lie in a union of four
disjoint open circular intervals which is invariant under the action
of the Klein four-group on $\rS^1$. Hence for each $C$ we have four
gradient flow orbits (one in each quadrant) which are related by the
action of $\Z_2\times \Z_2$. These four orbits have $\e$ as a limit
point (see below).

\end{enumerate}
\item $\e$ is a cusp end. In this case, condition \eqref{Crange}
  requires $\frac{4{\tilde c}_\e}{1-\beta_\e} H(\theta;\beta_\e)<C$.
  We have two sub-cases:
\begin{enumerate} \itemsep 0.0em
\item $\beta_\e\in [-1,0)$. In this case, we must have
$C>\frac{4{\tilde c}_\e}{1-\beta_\e}\mu(\beta_\e)$ and $\theta$ lies
in a union of four disjoint open intervals on $\rS^1$ which is
invariant under the action of $\Z_2\times \Z_2$. Hence for each value
of $C$ we have four gradient flow orbits (each in one of the four
quadrants) which are related by the action of $\Z_2\times \Z_2$. We will
see below that these orbits do not have $\e$ as a limit point. 
\item $\beta_\e\in (0,1)$. In this case, $C$ is unconstrained and
(since the values $\theta=0,\pi$ are forbidden) $\theta$ lies in a
union of four disjoint open intervals on $\rS^1$ of the form
$(0,\theta_0)\cup (\pi-\theta_0,\pi)\cup(\pi+\theta_0,2\pi-\theta_0)$.
Hence for each value of $C$ we have four gradient flow orbits (one in
each quadrant), which are related to each other by reflections in the
coordinate axes. These orbits have $\e$ as a limit point. 
\end{enumerate}
\end{enumerate}

\paragraph{Non-special gradient flow orbits having $\e$ as a limit point.}

\noindent Suppose that $\beta_\e\neq 1$ and distinguish the cases:
\begin{enumerate}
\itemsep 0.0em
\item $\e$ is a flaring end. In this case, we have
$\epsilon_\e=+1$ and the left hand side of \eqref{eSolBeta} tends to
$1$ as $\omega\rightarrow 0$. In this limit, the equation reduces to:
\ben
\label{Has1}
H(\theta,\beta_\e)=\frac{C-1}{{4{\tilde c}_\e}}(1-\beta_\e)~~.
\een
Distinguish the cases:
\begin{itemize}
\item When $\beta_\e\in [-1,0)$, condition \eqref{Has1} requires $C\geq
1+\frac{4\tc_\e}{1-\beta_\e}\mu(\beta_\e)$. When this inequality
holds strictly, the condition has eight solutions of the form
$\theta=\pm\theta_1\mod \pi$ and $\theta=\pm \theta_2\mod \pi$, where
$\theta_1$ and $\theta_2$ are the two solutions lying in the interval
$[0,\frac{\pi}{2})$ (see Figure \ref{Hbm1}(b)). The corresponding
eight gradient flow orbits asymptote to the end making one of these
angles with the $x$ axis.  The angles $\theta_1$ and $\theta_2$
coincide when $C=1+\frac{4\tc_\e}{1-\beta_\e}\mu(\beta_\e)$; in this
case, the geodesics lying in the same quadrant reach the end at the
same angle.
\item When $\beta_\e\in (0,1)$, condition \eqref{Has1} does not
constrain $C$. This condition has four solutions of the form
$\theta=\pm\theta_0\mod \pi$, where $\theta_0$ is the solution lying
in the interval $[0,\frac{\pi}{2})$ (see Figure \ref{Hbp1}(b)). The
corresponding four gradient flow orbits reach the origin making these
angles with the $x$ axis.
\end{itemize}
\item $\e$ is a cusp end. In this case, we have $\epsilon_\e=-1$ and
the left hand side of \eqref{eSolBeta} tends to plus infinity when
$\omega\rightarrow 0$. This requires that the right hand side also
tends to $+\infty$, which (for fixed $C$) happens when
$H(\theta,\beta_\e)$ tends to $-\infty$. Hence we must have
$\beta_\e>0$ and $\theta\rightarrow 0$ or $\theta\rightarrow \pi$,
which means that each of the four gradient flow orbits which asymptote
near the end to one of the two principal geodesic orbits which have
$\e$ as a limit point and correspond to the semi-axes determined by
the $x$ axis.
\end{enumerate}

\noindent A few unoriented gradient flow orbits of the effective
scalar triple $(\Sigma,G,V)$ near critical ends are plotted in Figures
\ref{fig:CritPlanePQ}-\ref{fig:CritCuspPQ} in principal canonical
coordinates centered at the end.

\vspace{-1cm}

\begin{figure}[H]
\centering
\begin{minipage}{.5\textwidth}
\centering \includegraphics[width=.99\linewidth]{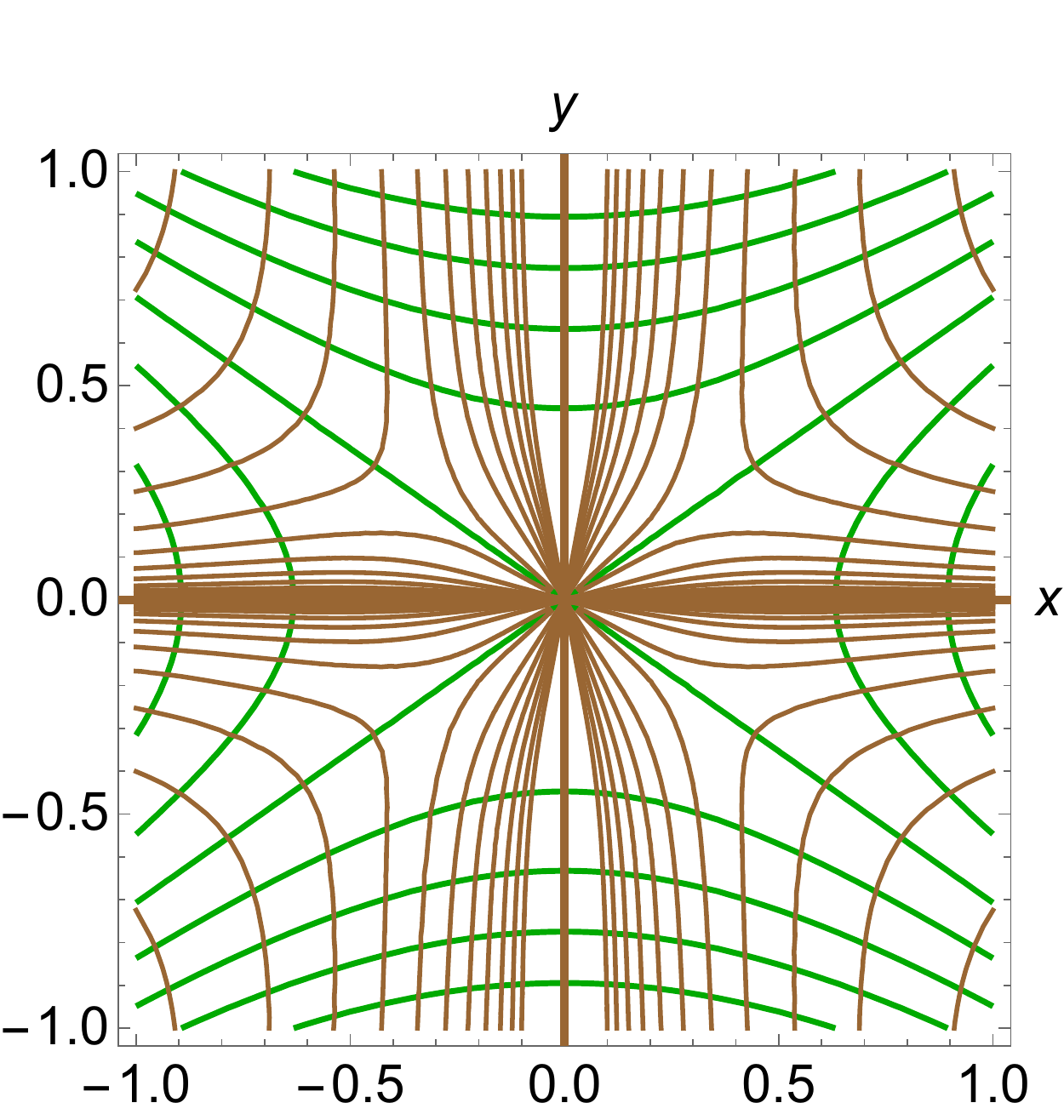}
\subcaption{For $\beta_\e=-1/2$}
\end{minipage}\hfill 
\begin{minipage}{.5\textwidth}
\vskip 0.2em
\centering \includegraphics[width=.99\linewidth]{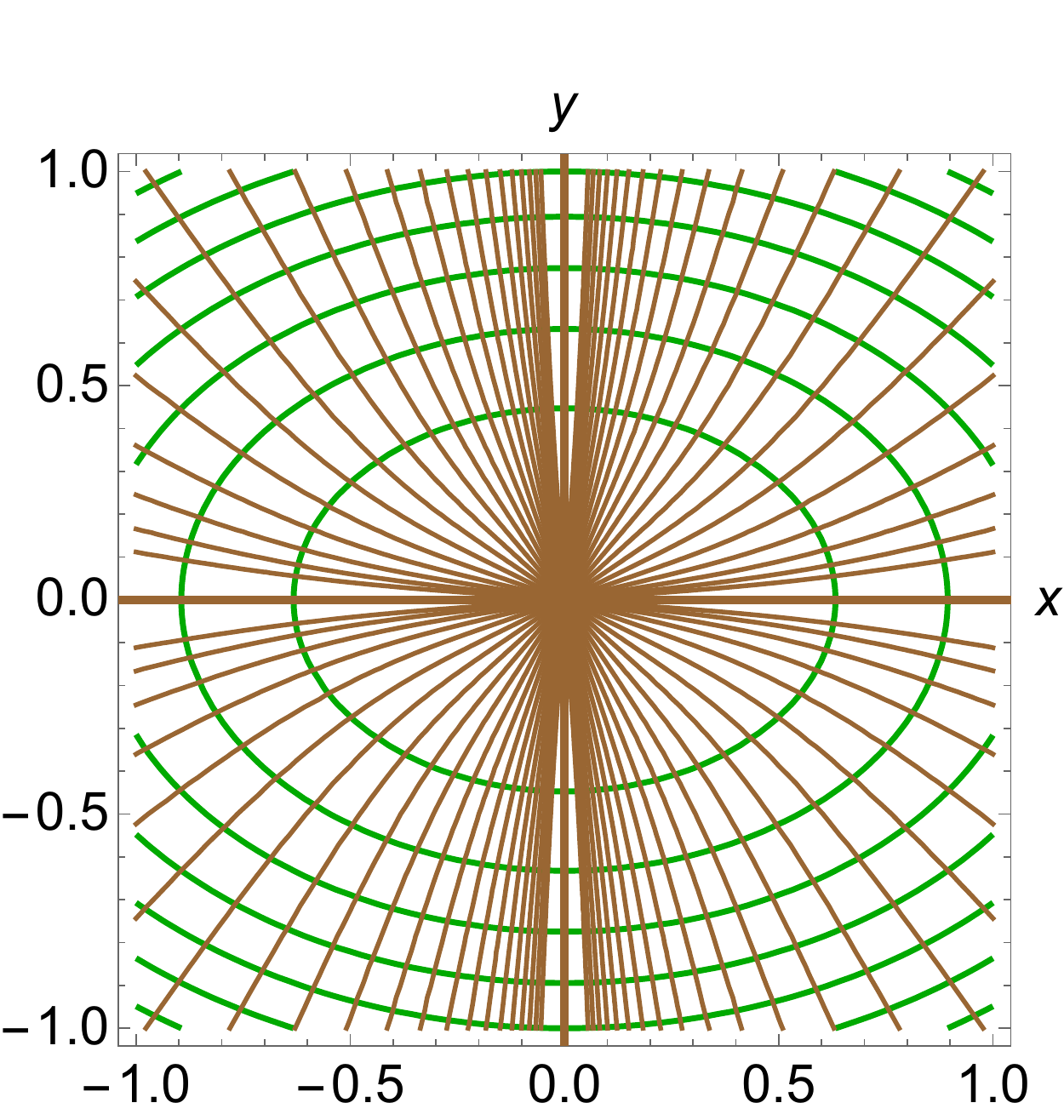}
\subcaption{For $\beta_\e=1/2$}
\end{minipage}
\caption{Gradient flow orbits of $V$ (shown in brown) and level sets
of $V$ (shown in green) near a critical plane end $\e$, drawn in
principal Cartesian canonical coordinates centered at $\e$ for two
values of $\beta_\e$.}
\label{fig:CritPlanePQ}
\end{figure}

\vspace{-0.5cm}

\begin{figure}[H]
\centering
\begin{minipage}{.5\textwidth}
\centering  \includegraphics[width=.99\linewidth]{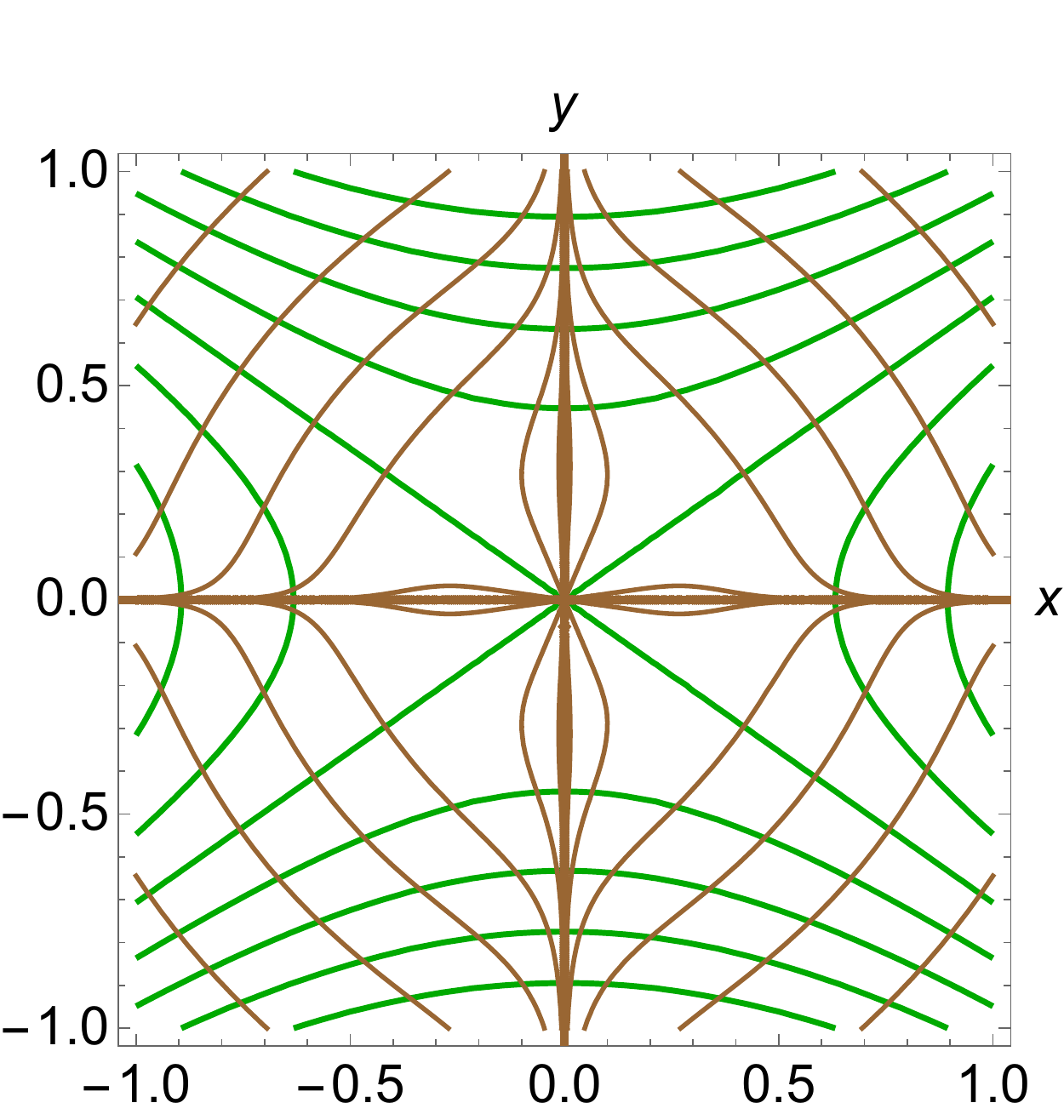}
\subcaption{For $\beta_\e=-1/2$.}
\end{minipage}\hfill
\begin{minipage}{.5\textwidth}
\centering \includegraphics[width=.99\linewidth]{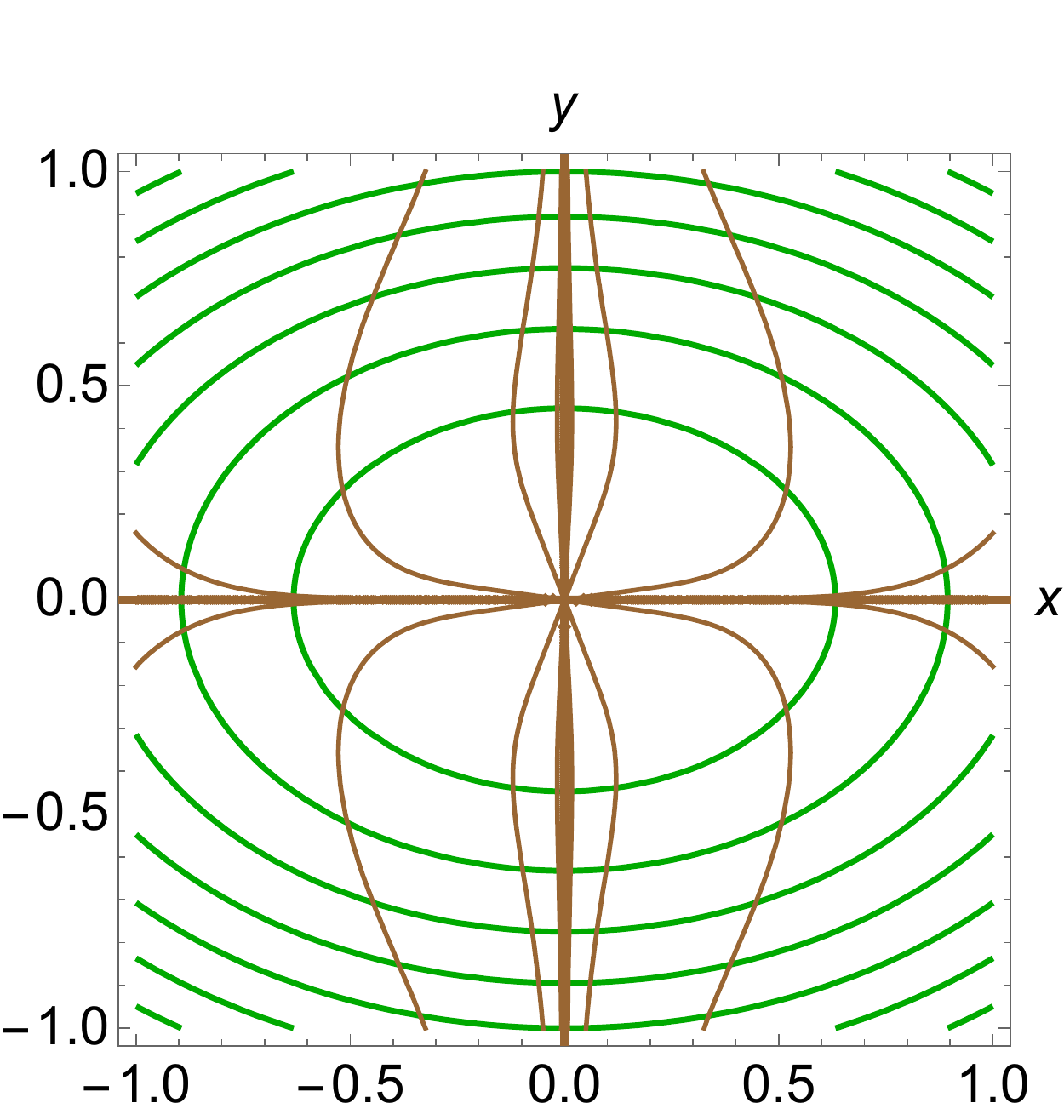}
\subcaption{For $\beta_\e=1/2$.}
\end{minipage}
\caption{Gradient flow orbits of $V$ (shown in brown) and level sets
of $V$ (shown in green) near a critical horn end $\e$, drawn in
principal Cartesian canonical coordinates centered at $\e$ for two
values of $\beta_\e$.}
\label{fig:CritHornPQ}
\end{figure}

\vspace{-0.5cm}

\begin{figure}[H]
\centering
\begin{minipage}{.5\textwidth}
\centering  \includegraphics[width=.99\linewidth]{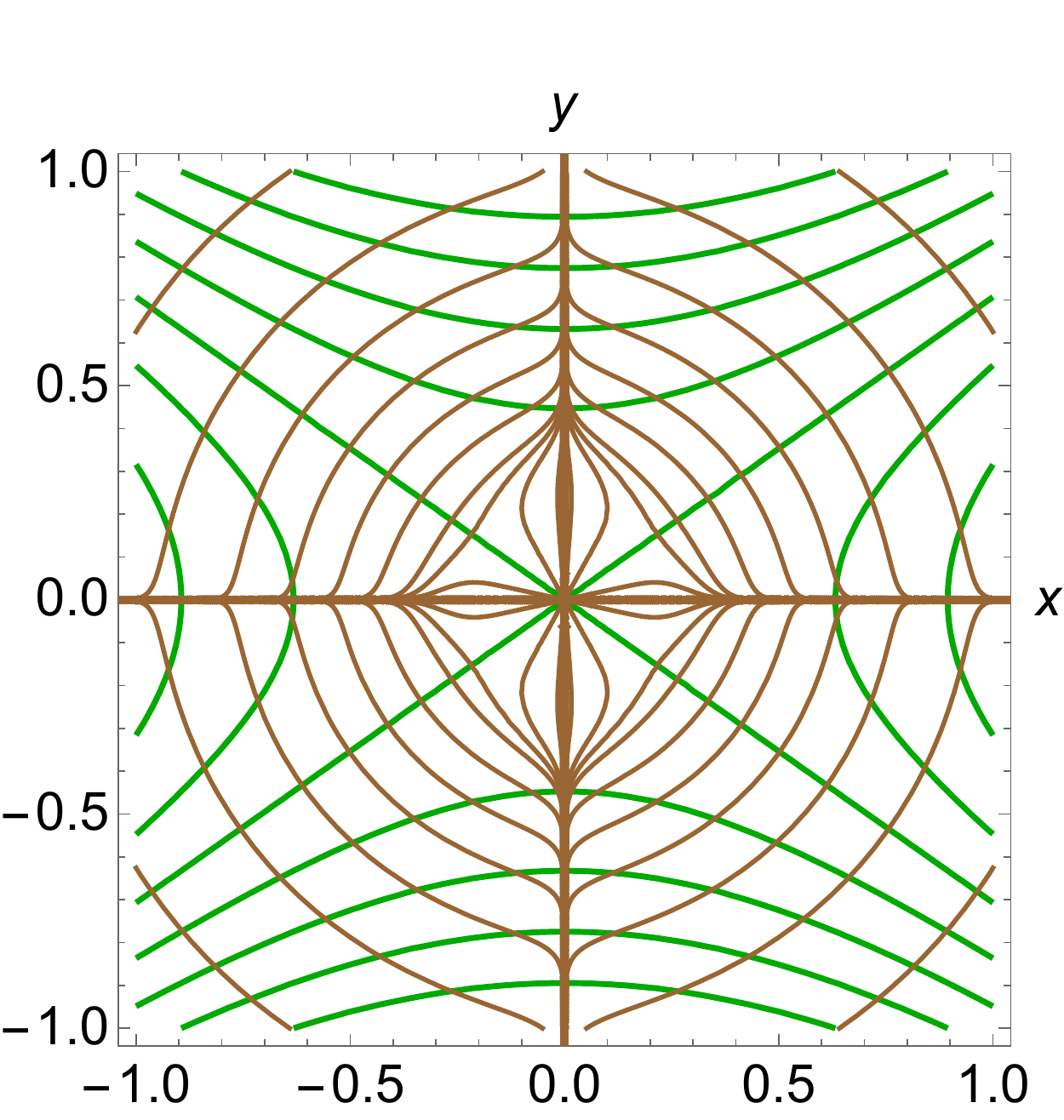}
\subcaption{For $\beta_\e=-1/2$.}
\end{minipage}\hfill
\begin{minipage}{.5\textwidth}
\centering \includegraphics[width=.99\linewidth]{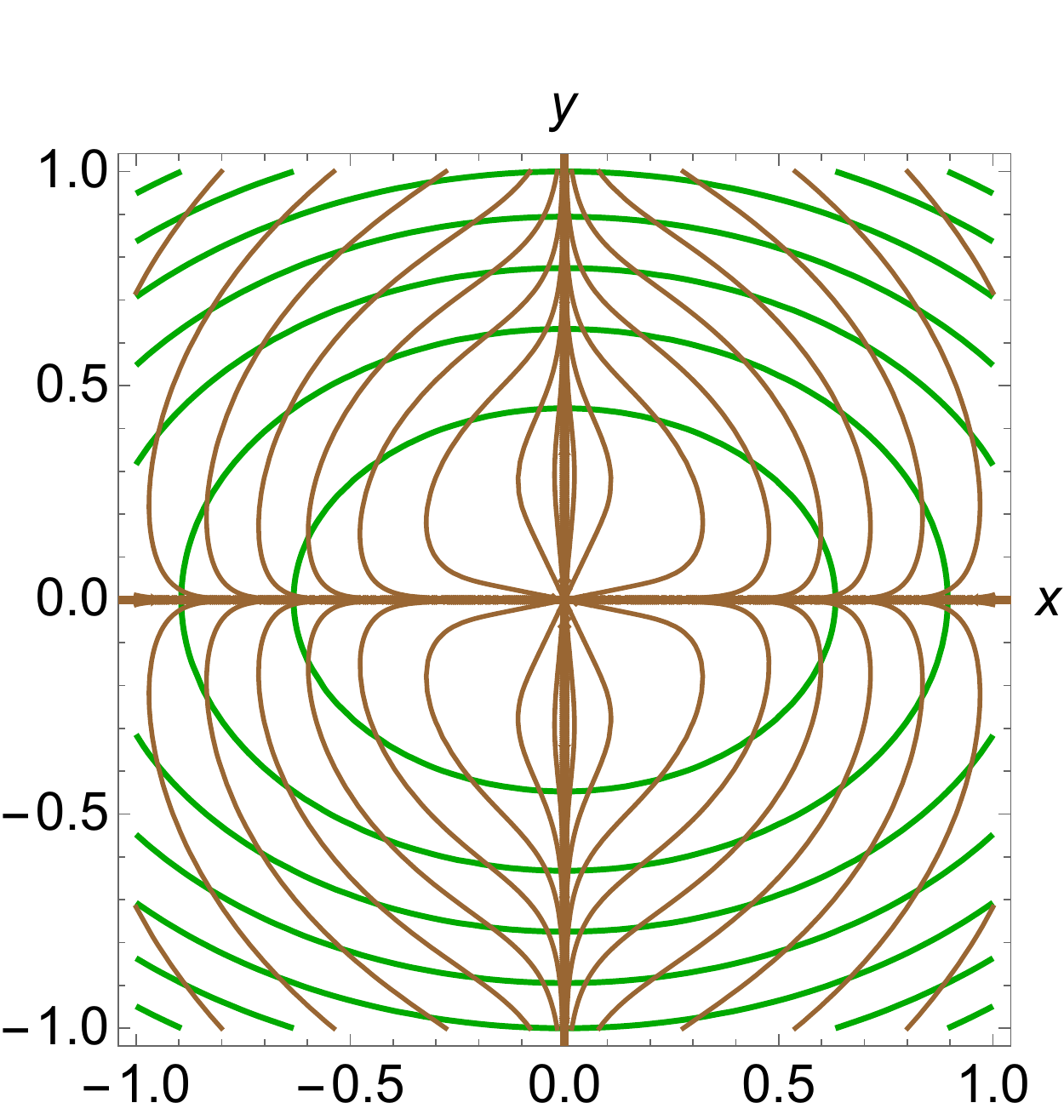}
\subcaption{For $\beta_\e=1/2$.}
\end{minipage}
\caption{Gradient flow orbits of $V$ (shown in brown) and level sets
of $V$ (shown in green) near a critical funnel end $\e$ of
circumference $\ell=1$, drawn in principal Cartesian canonical
coordinates centered at $\e$ for two values of $\beta_\e$.}
\label{fig:CritFunnelPQ}
\end{figure}

\begin{figure}[H]
\centering
\begin{minipage}{.5\textwidth}
\centering  \includegraphics[width=.99\linewidth]{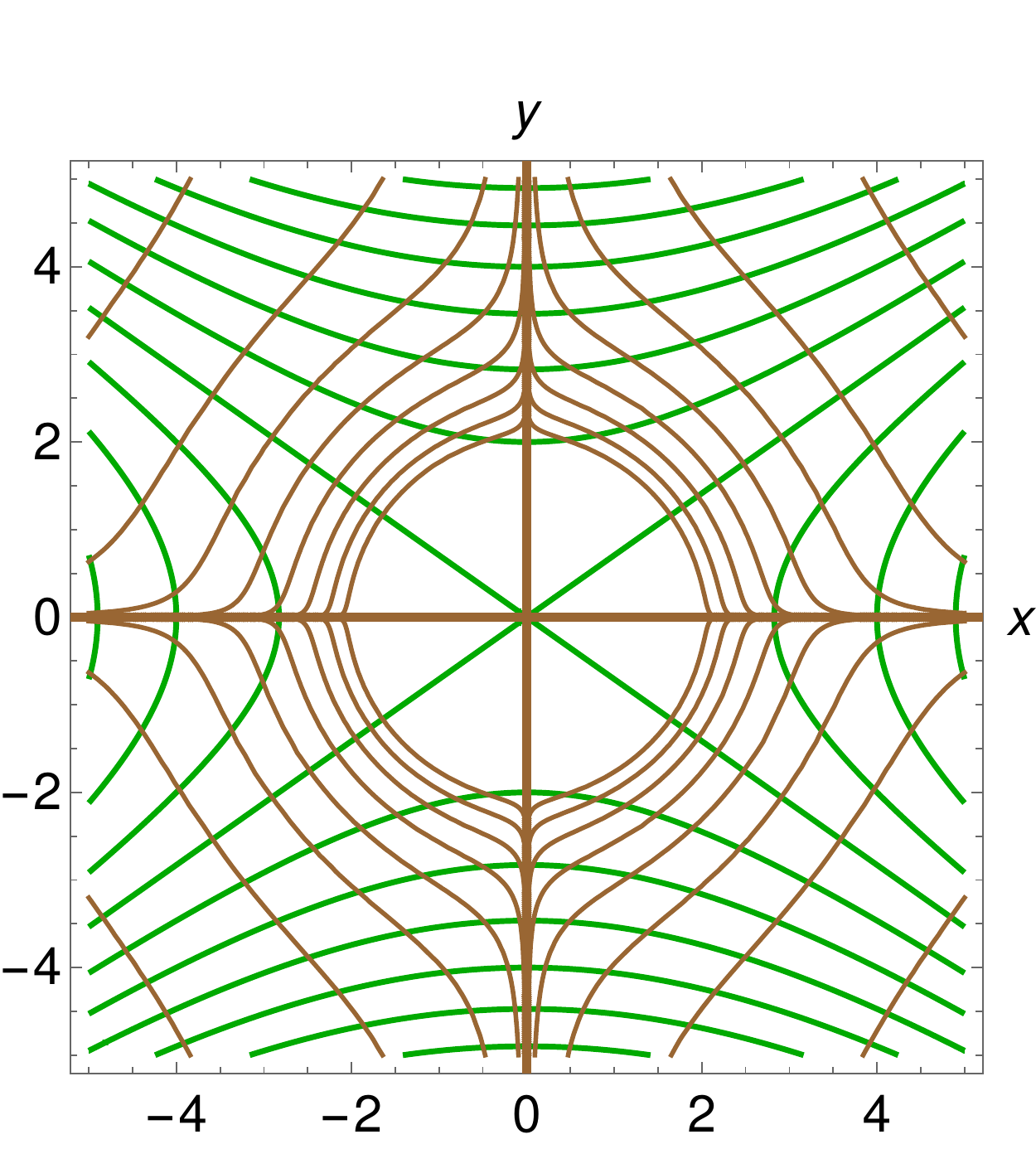}
\subcaption{For $\beta_\e=-1/2$.}
\end{minipage}\hfill
\begin{minipage}{.5\textwidth}
\centering \includegraphics[width=.99\linewidth]{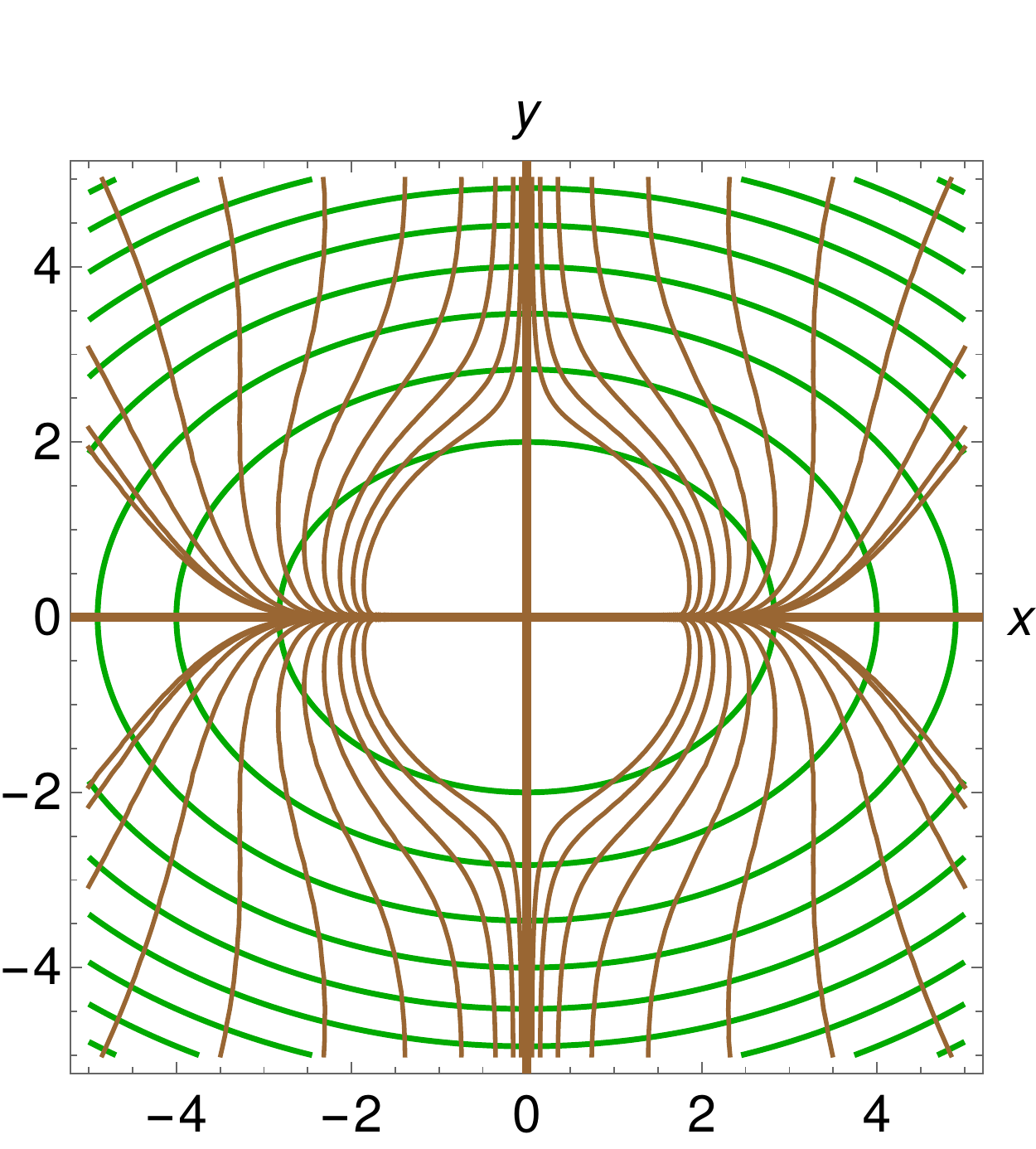}
\subcaption{For $\beta_\e=1/2$.}
\end{minipage}
\caption{Gradient flow orbits of $V$ (shown in brown) and level sets
of $V$ (shown in green) near a critical cusp end $\e$, drawn in
principal Cartesian canonical coordinates centered at $\e$ for two
values of $\beta_\e$.}
\label{fig:CritCuspPQ}
\end{figure}

The extended scalar potential $\hPhi$ of the canonical model can be
recovered from the extended classical effective potential as:
\ben
\label{Phic}
\hPhi=\frac{1}{2 M_0^2} \hV^2\approx \frac{{\bar \lambda}_2(\e)^2}{2}\left[\hat{\bar{V}}(\e)+\frac{1}{2}\omega^2 (\beta_\e \cos^2\theta+\sin^2\theta)\right]^2~~,
\een
where we defined:
\be
{\bar \lambda}_2(\e)\eqdef \frac{\lambda_2(\e)}{M_0}~~,~~{\hat {\bar V}}(\e)\eqdef \frac{{\hat V}(\e)}{\lambda_2(\e)}~~.
\ee
Figures \ref{fig:CritCosmPlane}-\ref{fig:CritCosmCusp} show some
numerically computed infrared optimal cosmological orbits of the
canonical model parameterized by $(M_0,\Sigma,G,\Phi)$ near critical ends
$\e$. In these figures, we took ${\bar \lambda}_2(\e)=1$, ${\bar {\hat
V}}(\e)=1$ and $M_0=1$. Notice that the accuracy of the first order
IR approximation depends on the value of ${\hat {\bar V}}(\e)$, since
the first IR parameter of \cite{ren} depends on this value. The initial
point of each orbit is shown as a black dot. 

\begin{figure}[H]
\centering
\begin{minipage}{.5\textwidth}
\vskip 0.5em
\centering  \includegraphics[width=.99\linewidth]{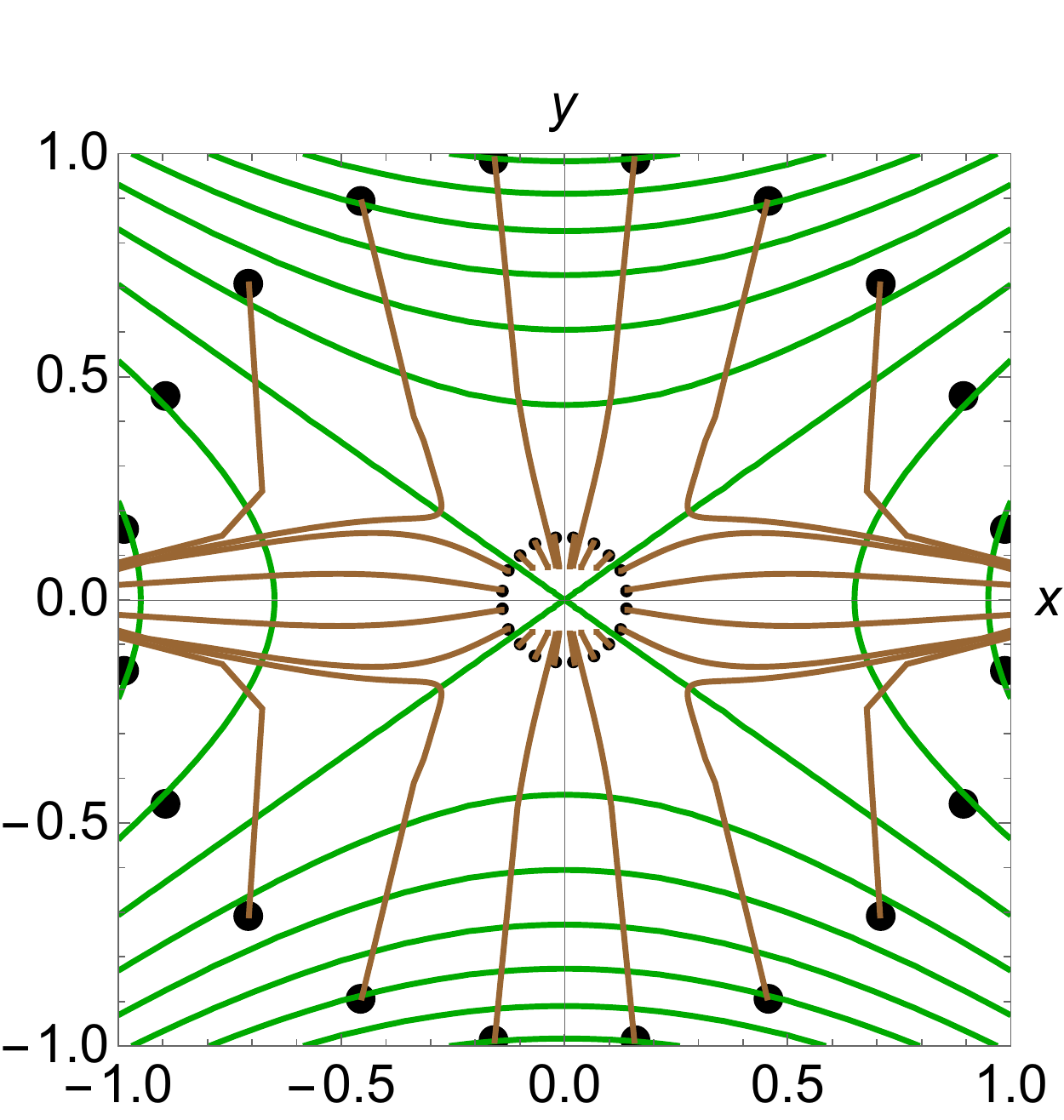}
\subcaption{For $\beta_\e=-1/2$}
\end{minipage}\hfill 
\begin{minipage}{.5\textwidth}
\vskip 0.2em
\centering \includegraphics[width=.99\linewidth]{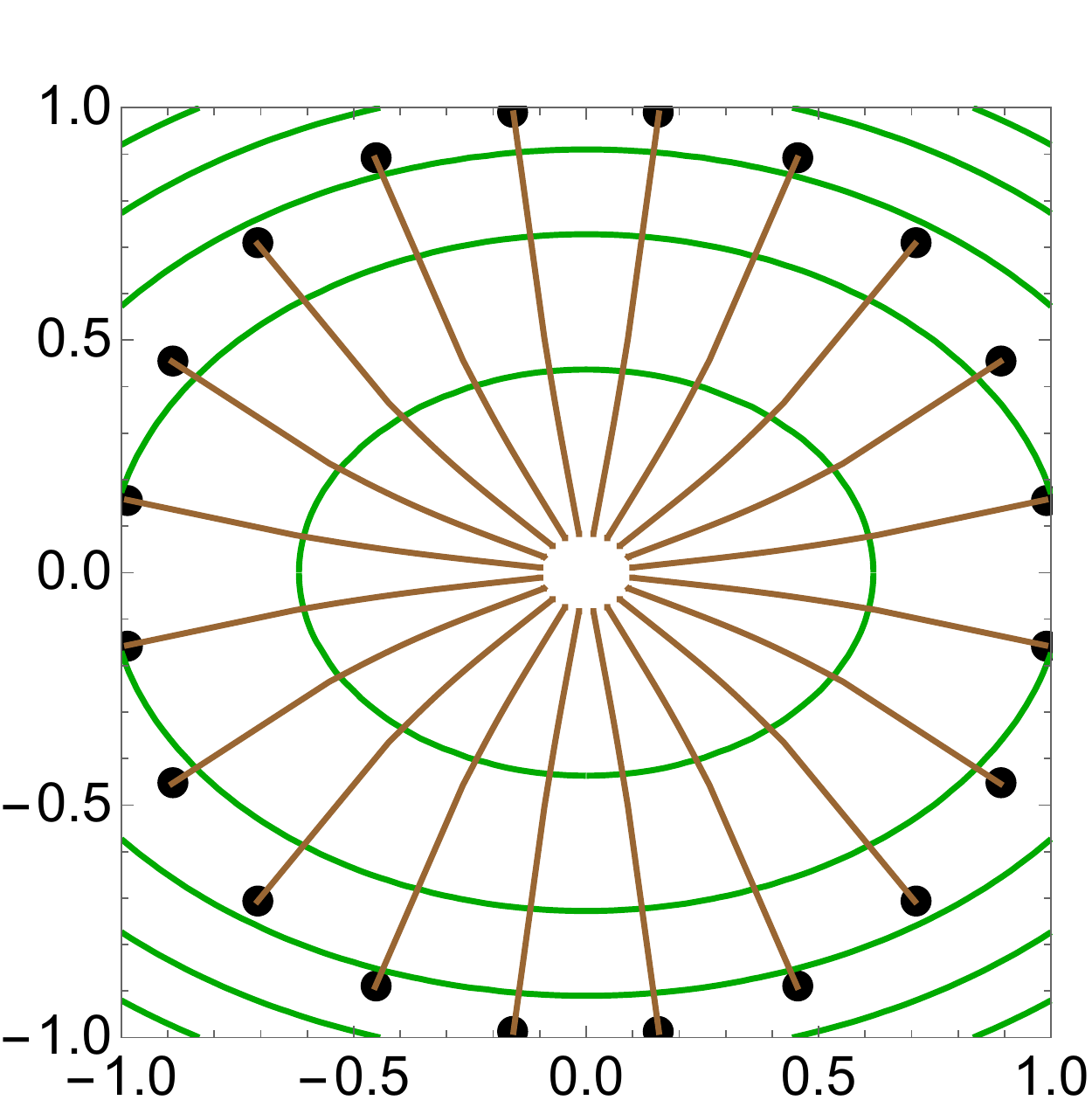}
\subcaption{For $\beta_\e=1/2$}
\end{minipage}
\caption{Numerically computed infrared optimal cosmological orbits of
the canonical model (shown in brown) and level sets of $\hPhi$ (shown
in green) near a critical plane end $\e$, drawn in principal canonical
Cartesian coordinates centered at $\e$ for two values of $\beta_\e$.}
\label{fig:CritCosmPlane}
\end{figure}

\vspace{-0.5cm}

\begin{figure}[H]
\centering
\begin{minipage}{.5\textwidth}
\centering  \includegraphics[width=.99\linewidth]{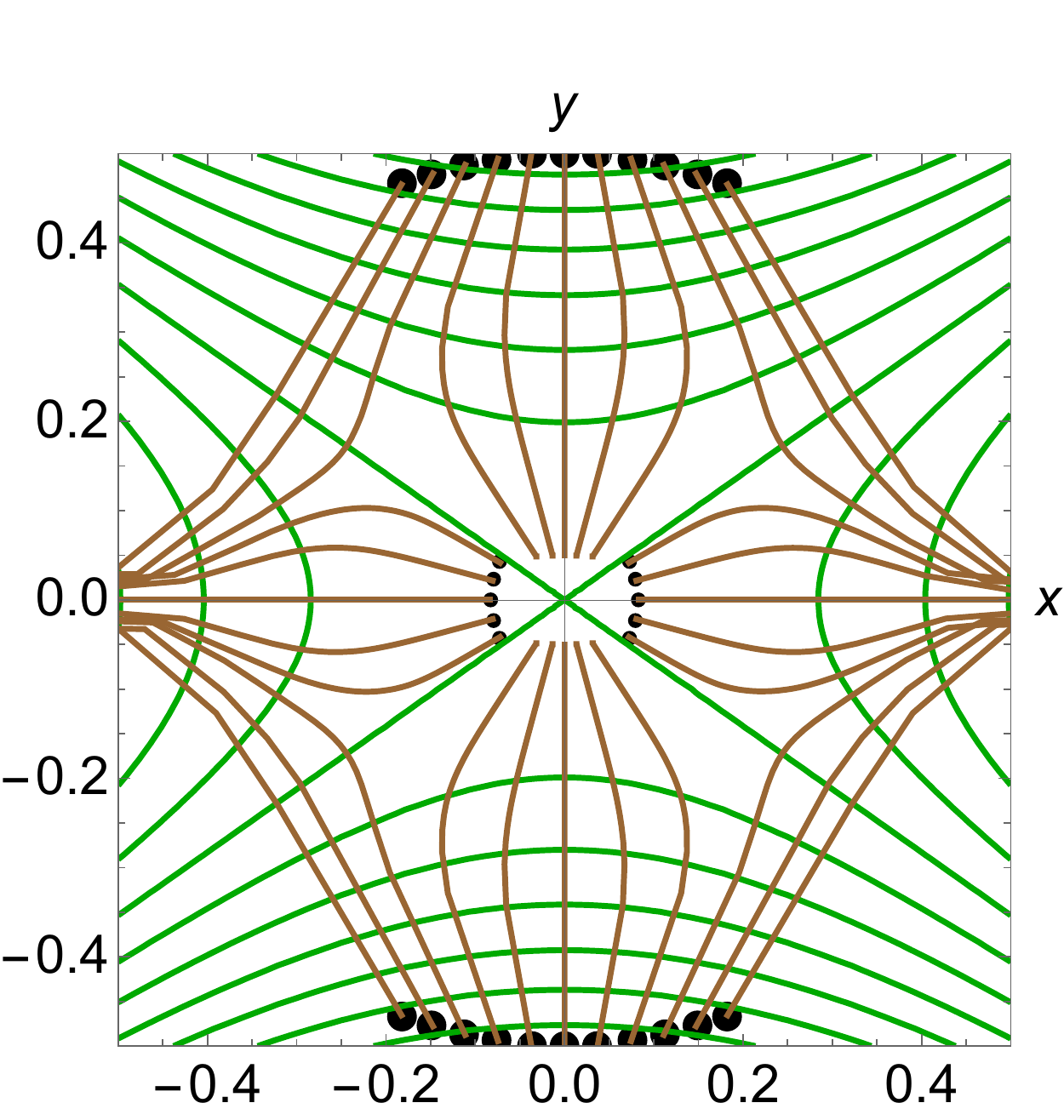}
\subcaption{For $\beta_\e=-1/2$.}
\end{minipage}\hfill
\begin{minipage}{.5\textwidth}
\centering \includegraphics[width=.99\linewidth]{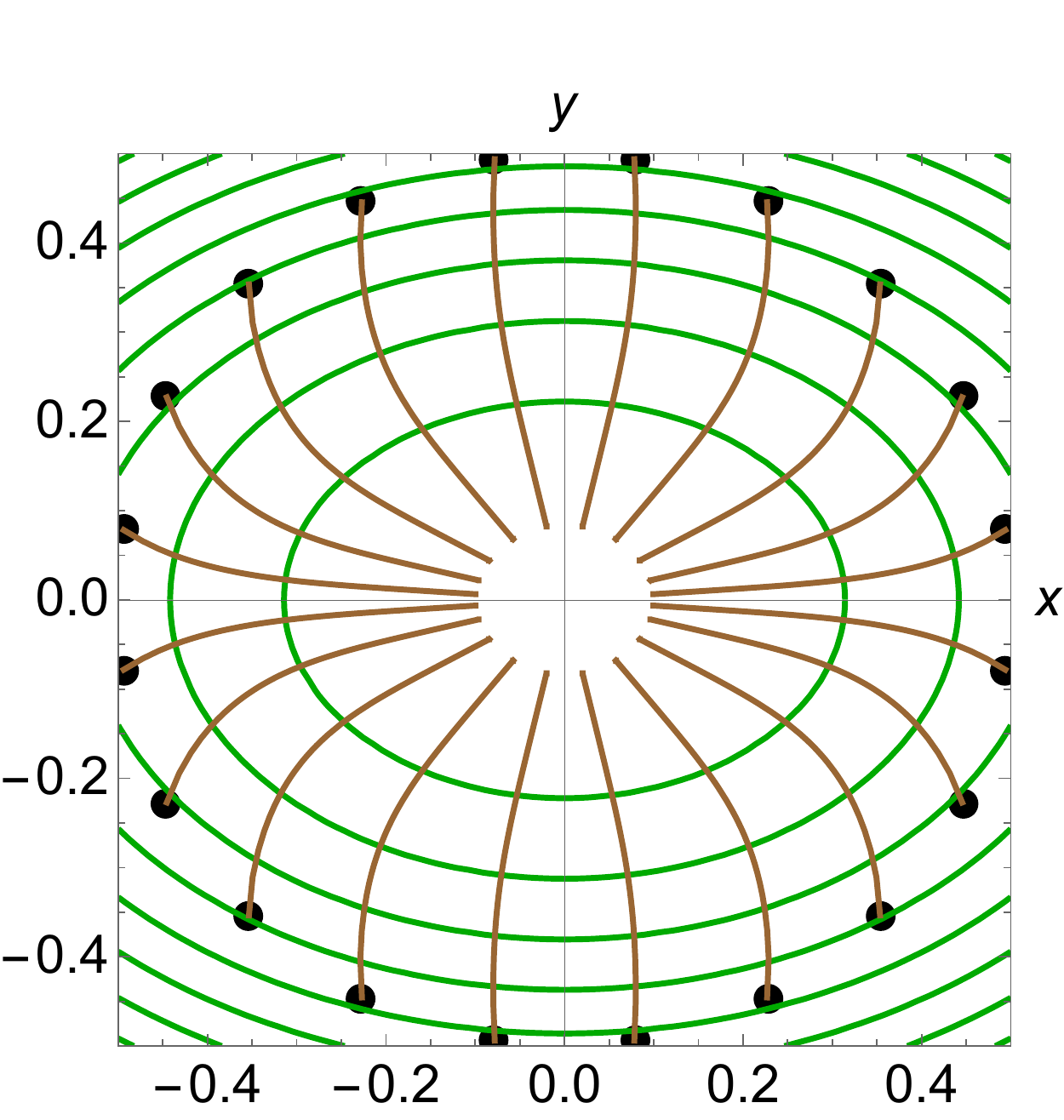}
\subcaption{For $\beta_\e=1/2$.}
\end{minipage}
\caption{Numerically computed infrared optimal cosmological orbits of
the canonical model (shown in brown) and level sets of $\hPhi$ (shown
in green) near a critical horn end $\e$, drawn in principal canonical
Cartesian coordinates centered at $\e$ for two values of $\beta_\e$.}
\label{fig:CritCosmHorn}
\end{figure}

\vspace{-0.5cm}

\begin{figure}[H]
\centering
\begin{minipage}{.5\textwidth}
\centering  \includegraphics[width=.99\linewidth]{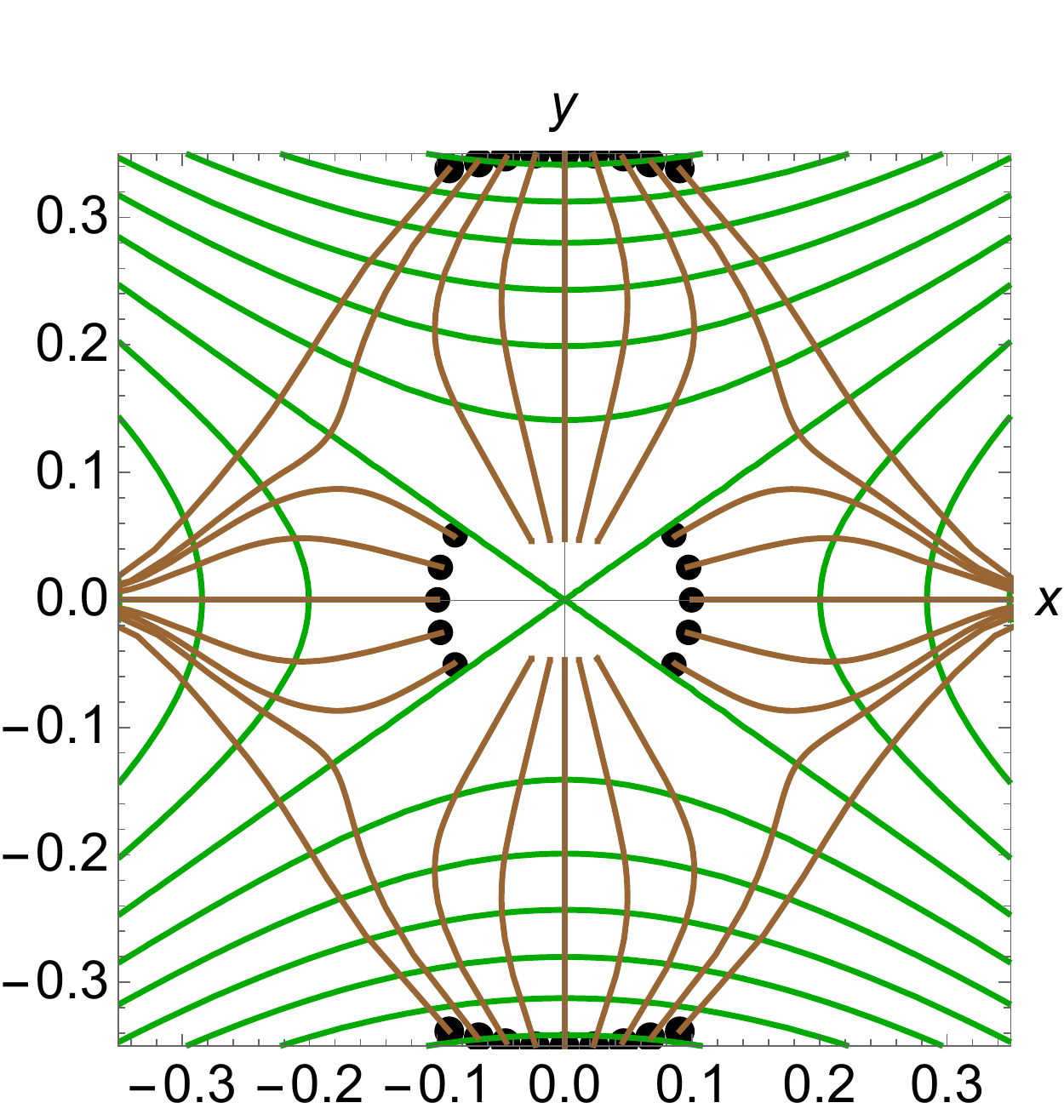}
\subcaption{For $\beta_\e=-0.5$.}
\end{minipage}\hfill
\begin{minipage}{.5\textwidth}
\centering \includegraphics[width=.99\linewidth]{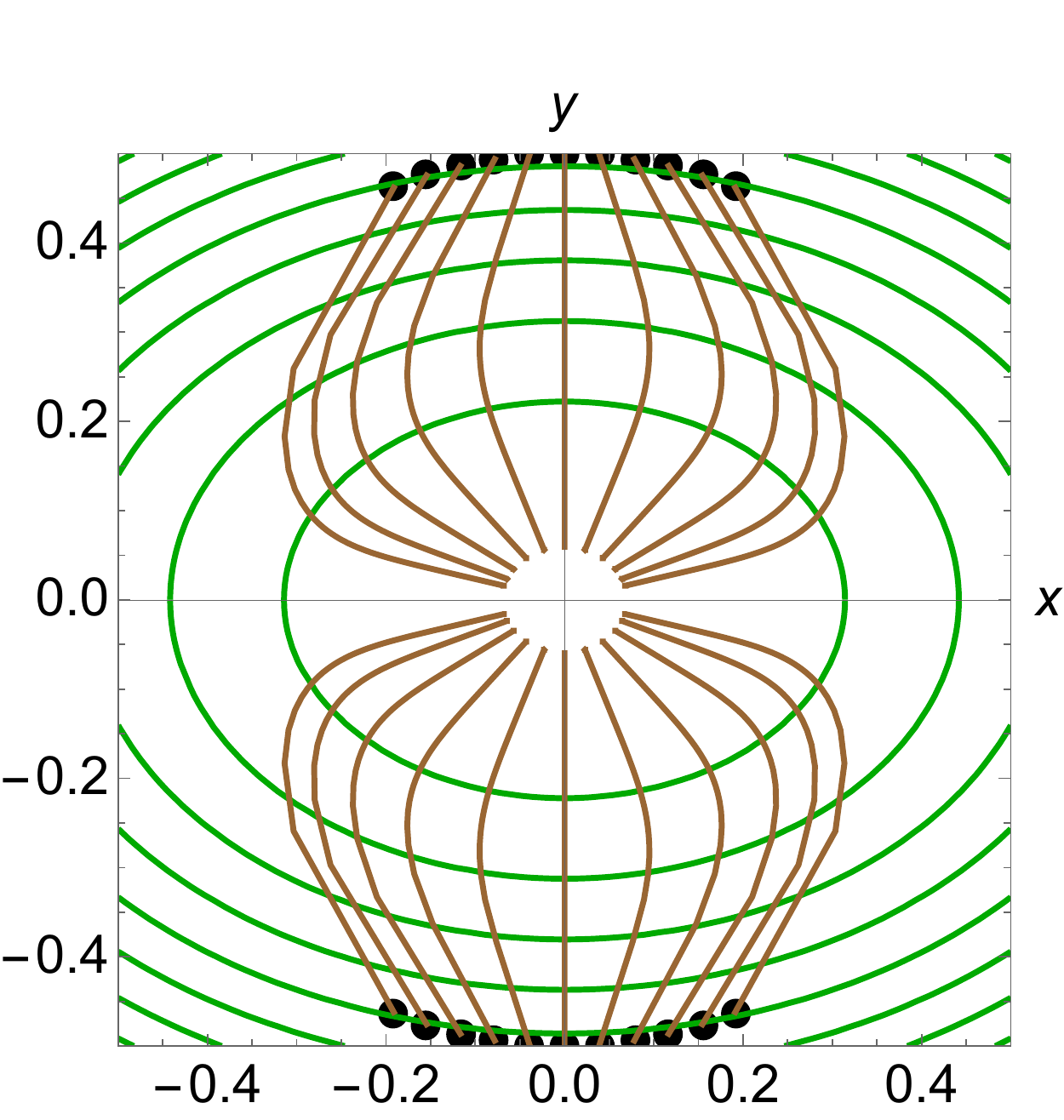}
\subcaption{For $\beta_\e=0.5$.}
\end{minipage}
\caption{Numerically computed infrared optimal cosmological orbits of the
canonical model (shown in brown) and level sets
of $\hPhi$ (shown in green) near a critical funnel end $\e$ of
circumference $\ell=1$, drawn in principal canonical coordinates
centered at $\e$ for two values of $\beta_\e$.}
\label{fig:CritCosmFunnel}
\end{figure}

\begin{figure}[H]
\centering  
\begin{minipage}{.5\textwidth}
\centering  \includegraphics[width=.99\linewidth]{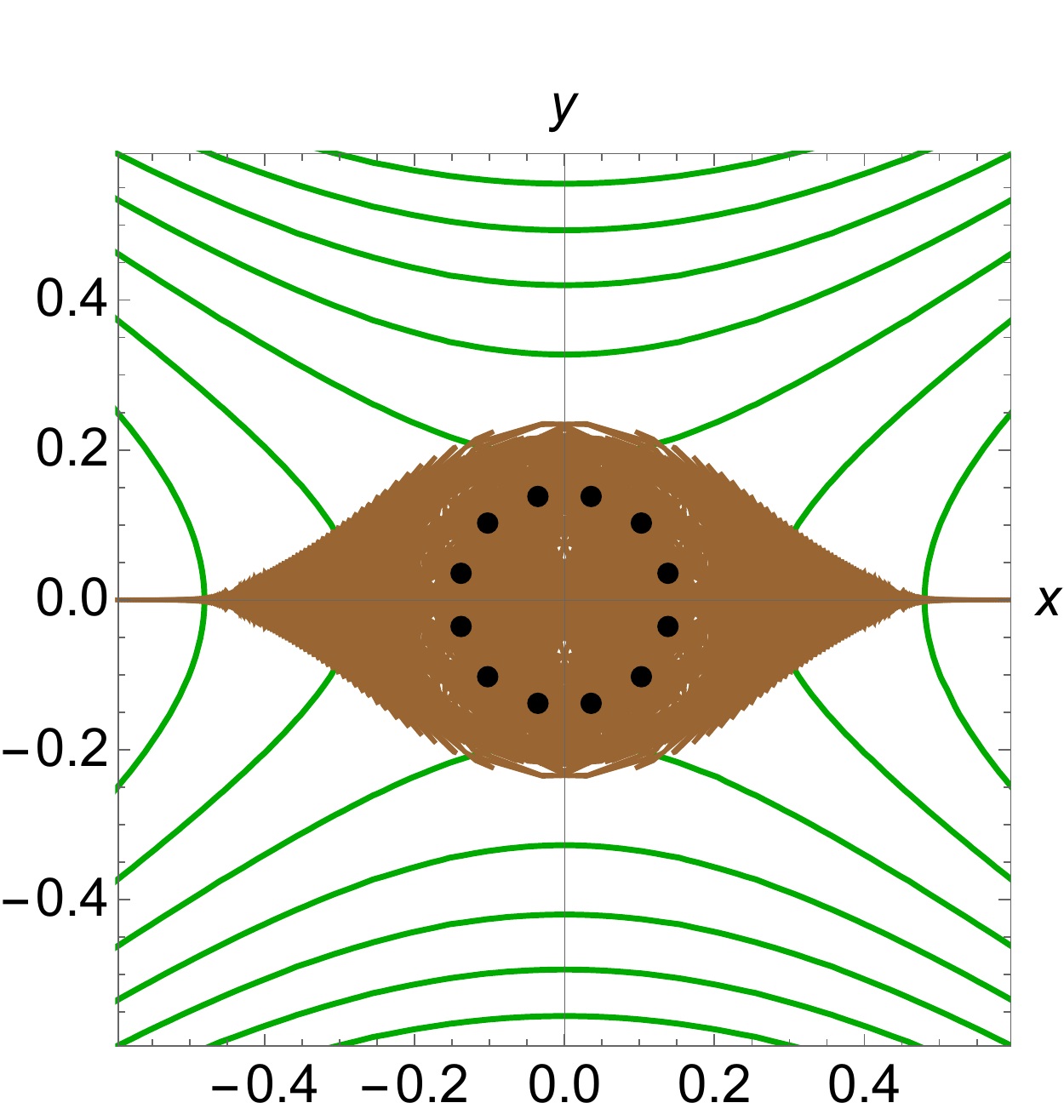}
\subcaption{For $\beta_\e=-0.5$.}
\end{minipage}\hfill
\begin{minipage}{.5\textwidth}
\centering \includegraphics[width=.99\linewidth]{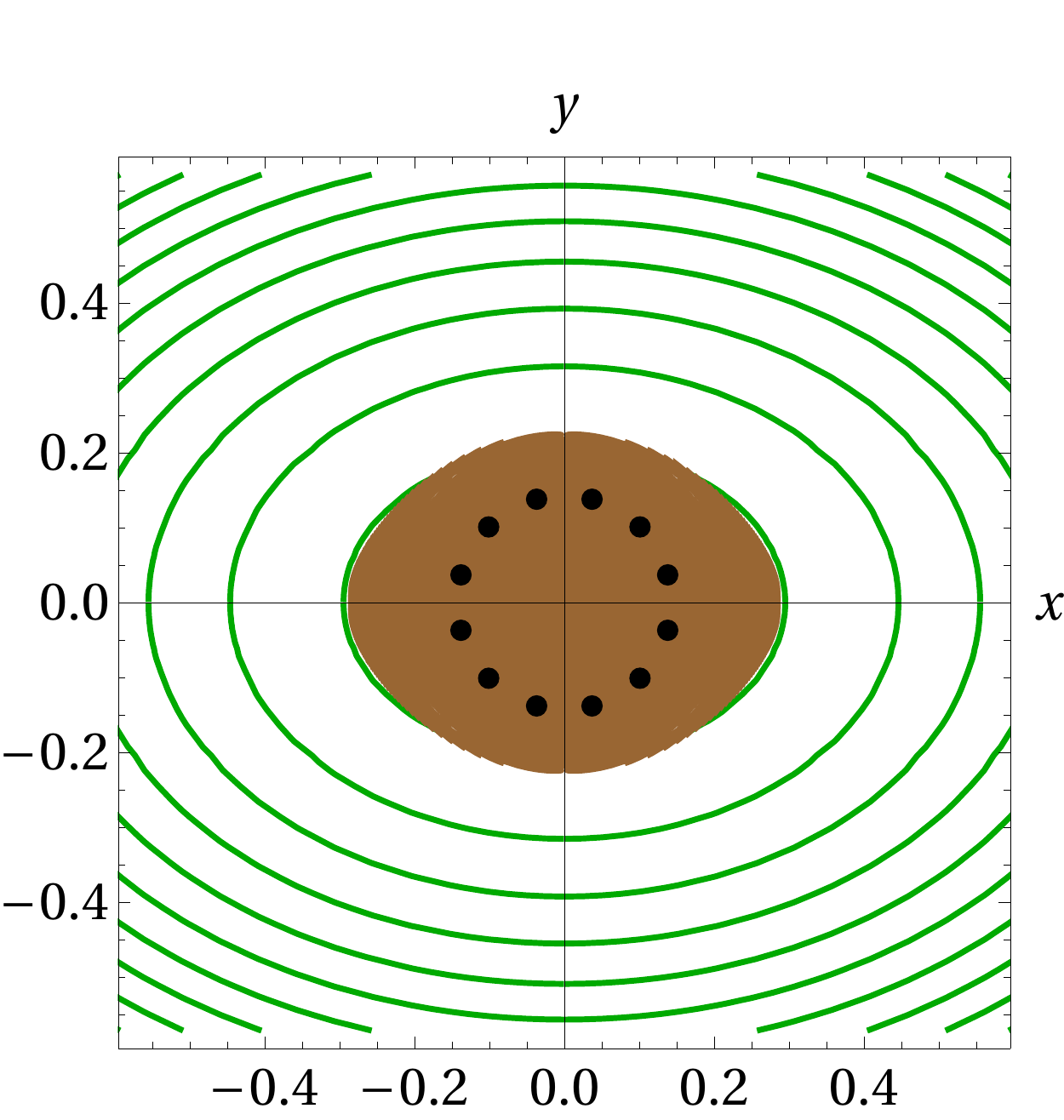}
\subcaption{For $\beta_\e=0.5$.}
\end{minipage}
\caption{Numerically computed infrared optimal cosmological orbits of the
canonical model (shown in brown) and level sets
of $\hPhi$ (shown in green) near a critical cusp end $\e$, drawn in
principal canonical coordinates centered at $\e$ for two values of
$\beta_\e$.}
\label{fig:CritCosmCusp}
\end{figure}

The case of critical cusp ends is particularly interesting. For
clarity, Figures \ref{fig:CritCosmCuspMinusSingle} and
\ref{fig:CritCosmCuspPlusSingle} display the evolution of a single
infrared optimal cosmological curve of the uniformized model for four
consecutive cosmological times when $\beta_\e<0$ and $\beta_\e>0$
respectively, where in the second case we assume that $\e$ is a local
minimum of $\hV$; this orbit was chosen such that its initial point
does not lie on any of the four principal geodesic orbits.  When
$\beta_\e<0$, the orbit spirals numerous times around the cusp while
approaching it, after which it spirals away from the cusp until it is
finally repelled by it along one of two principal geodesic
orbits. When $\beta_\e>0$ (and assuming as in Figure
\ref{fig:CritCosmCuspPlusSingle} that $\e$ is a local minimum of
$\hV$), the cosmological orbit first spirals around the cusp end
approaching it, after which it distances from it oscillating with
gradually decreasing amplitude around one of the principal geodesic
orbits until stopped by the attractive force generated by the scalar
potential. At that point the cosmological curve starts evolving back
towards the cusp and ``falls back into'' the cusp end along the
principal geodesic. On the other hand, infrared optimal cosmological
curves which start from a point lying on one of the four principal
geodesic orbits flow away from the cusp or into it along that geodesic
depending on whether the cusp is a source, sink or saddle for $\hV$
(and, in the saddle case, on the choice of that principal geodesic).

\begin{figure}[H]
\centering
\begin{minipage}{.45\textwidth}
\centering ~~\includegraphics[width=.97\linewidth]{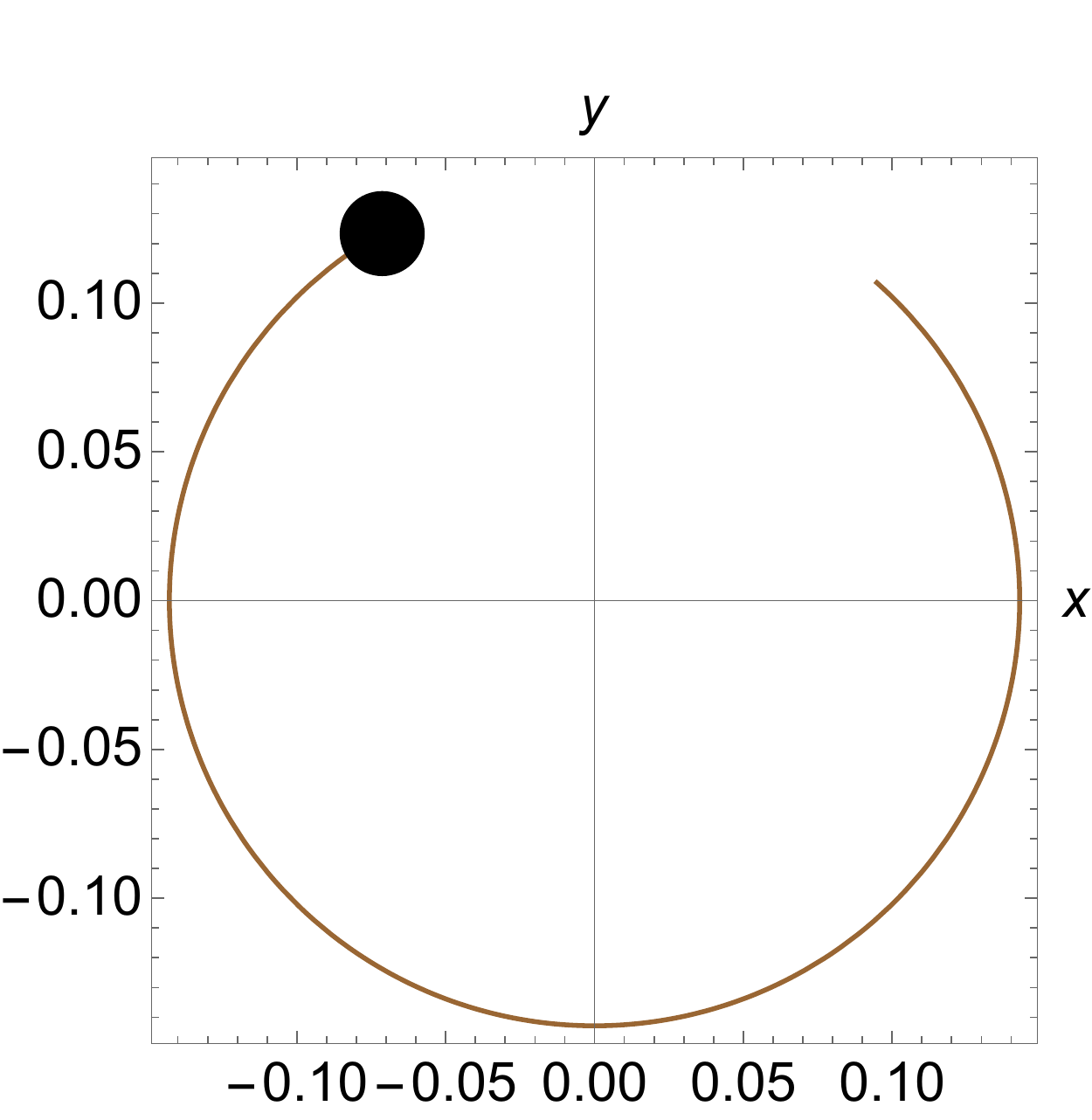}
\subcaption{For $t= 0.000008$.}
\end{minipage}\hfill 
\begin{minipage}{.45\textwidth}
\centering ~~\includegraphics[width=.95\linewidth]{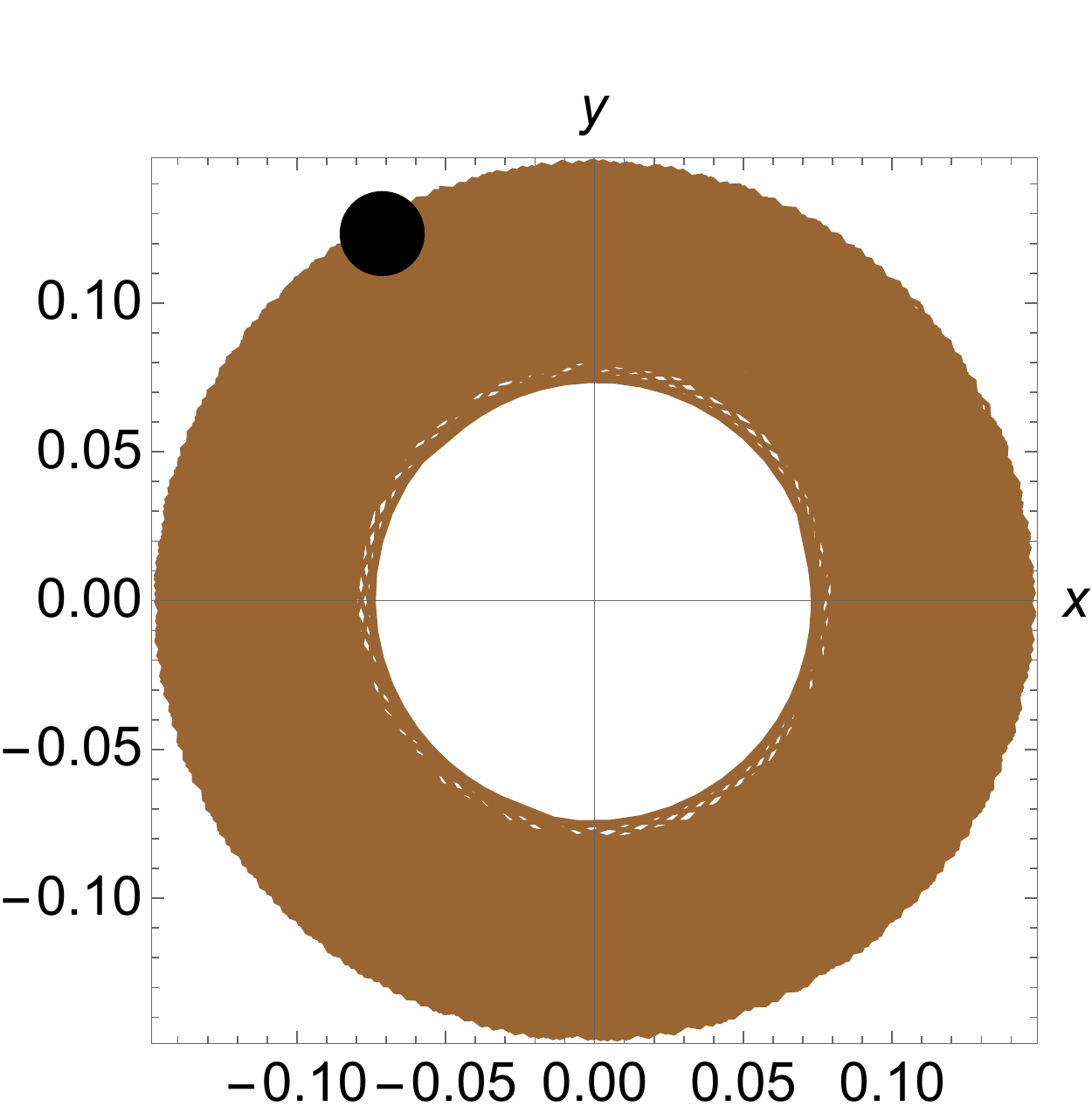}
\subcaption{For $t=0.01$.}
\end{minipage}\\
\begin{minipage}{.45\textwidth}
\centering ~~\includegraphics[width=.95\linewidth]{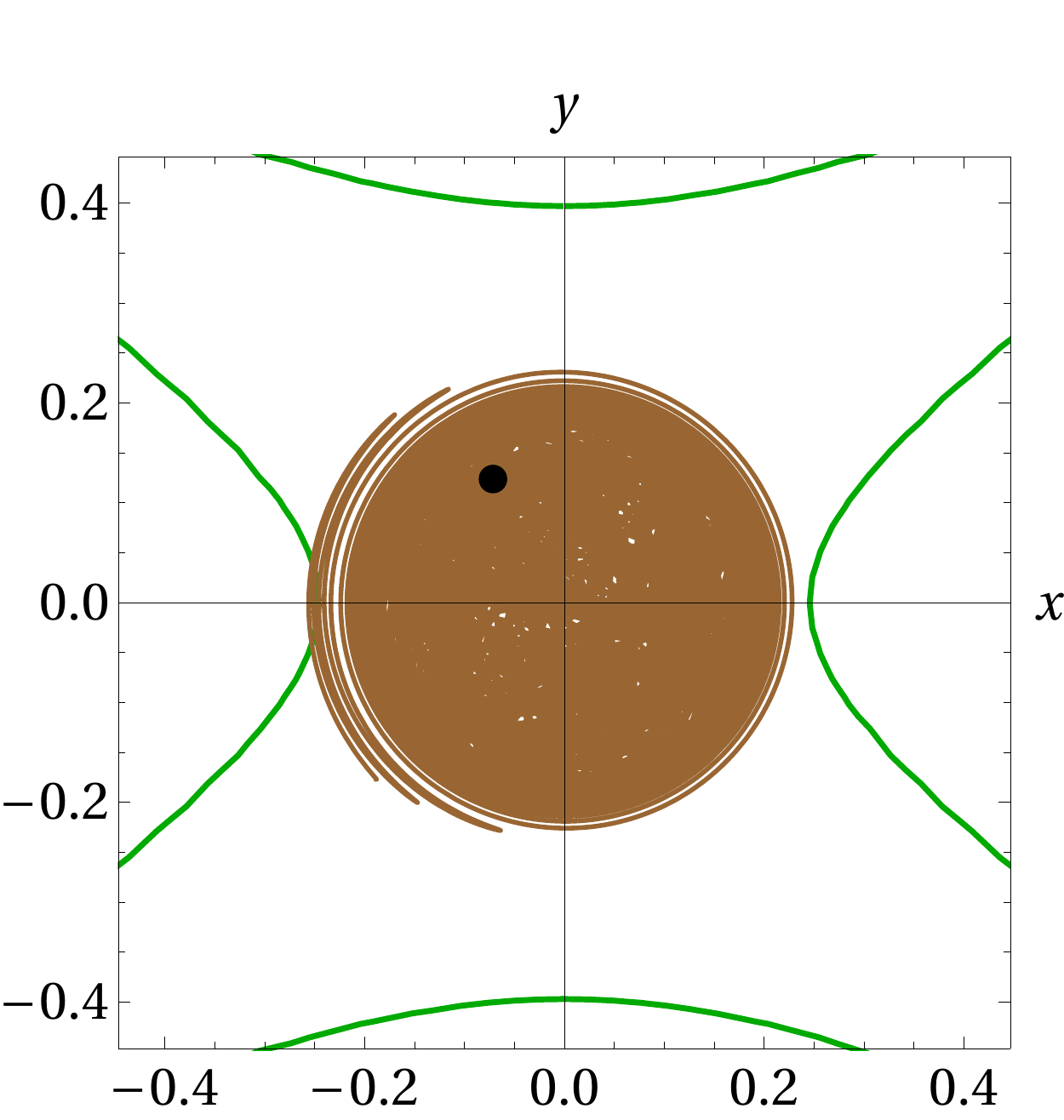}
\subcaption{For $t=0.5$.}
\end{minipage}\hfill
\begin{minipage}{.45\textwidth}
\centering ~~ \includegraphics[width=.97\linewidth]{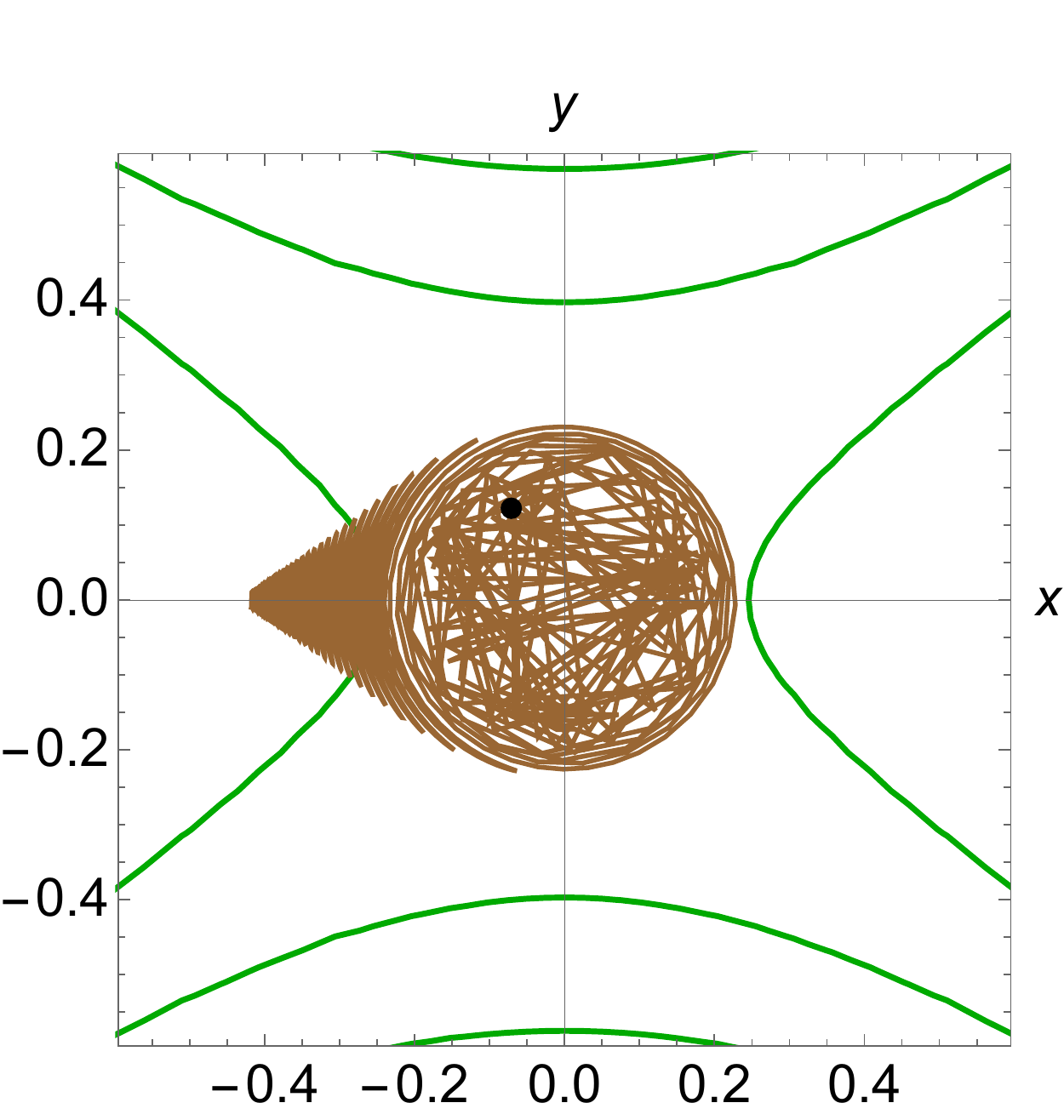}
\subcaption{For $t=5$.}
\end{minipage}
\caption{A numerically computed infrared optimal cosmological orbit of
the canonical model (shown in brown) and level sets of $\hPhi$ (shown
in green) near a critical cusp end $\e$ for $\beta_\e=-1/2$. The
trajectory starts at $\omega(0)=\frac{1}{7}$ and
$\theta(0)=\frac{2\pi}{3}$ with initial speed in the gradient flow
shell of $(\Sigma,G,V)$. The four figures show the cosmological orbit
for cosmological times between $t=0$ and $t= 0.000008$, $0.01$, $0.5$
and $5$ respectively. The white interior areas in the last figure
are numerical artifacts due to limited float precision.}
\label{fig:CritCosmCuspMinusSingle}
\end{figure}

\begin{figure}[H]
\centering
\begin{minipage}{.45\textwidth}
\centering ~~\includegraphics[width=.8\linewidth]{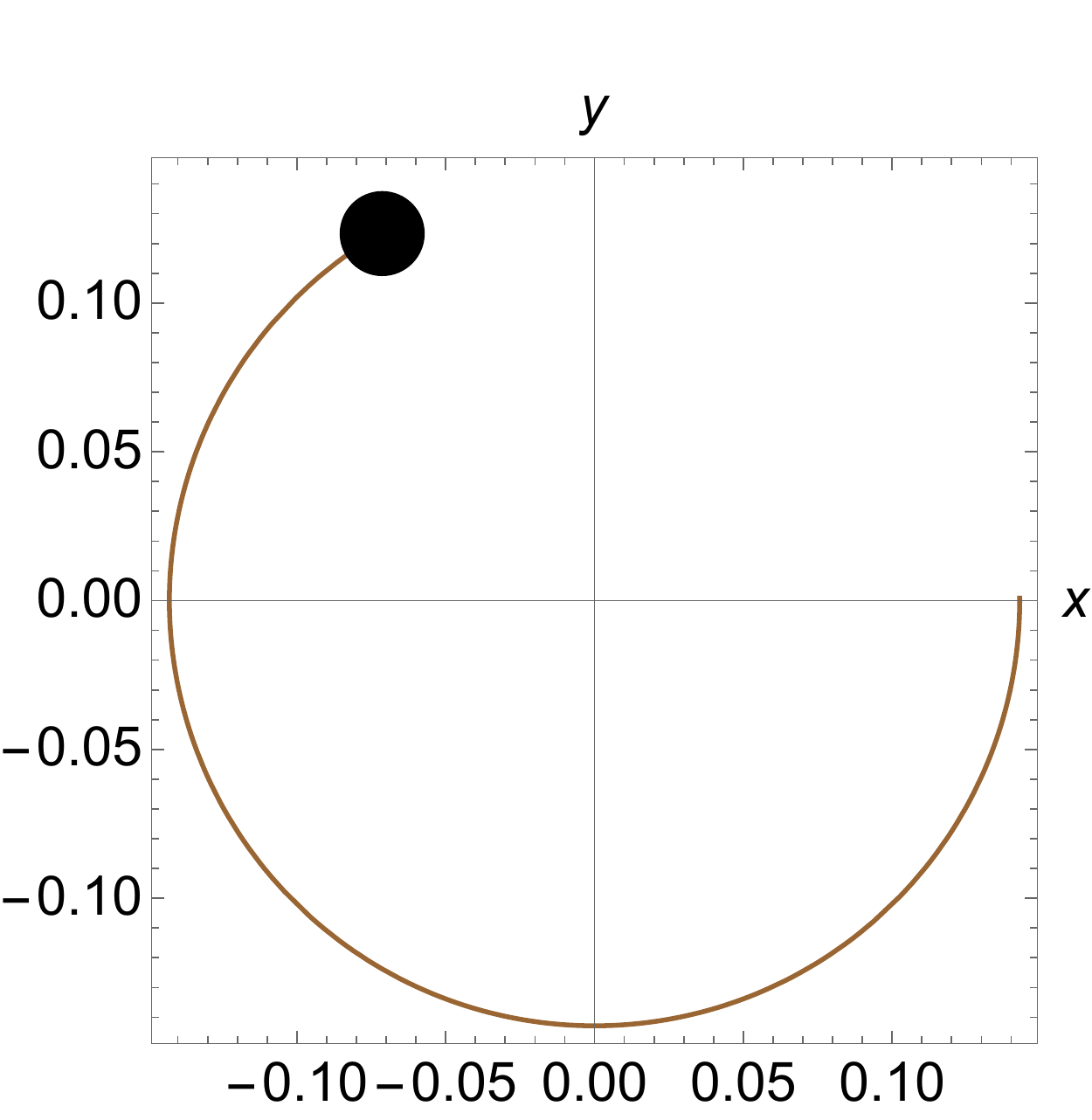}
\subcaption{For $t= 0.00002$.}
\end{minipage}\hfill 
\begin{minipage}{.45\textwidth}
\centering ~~\includegraphics[width=.8\linewidth]{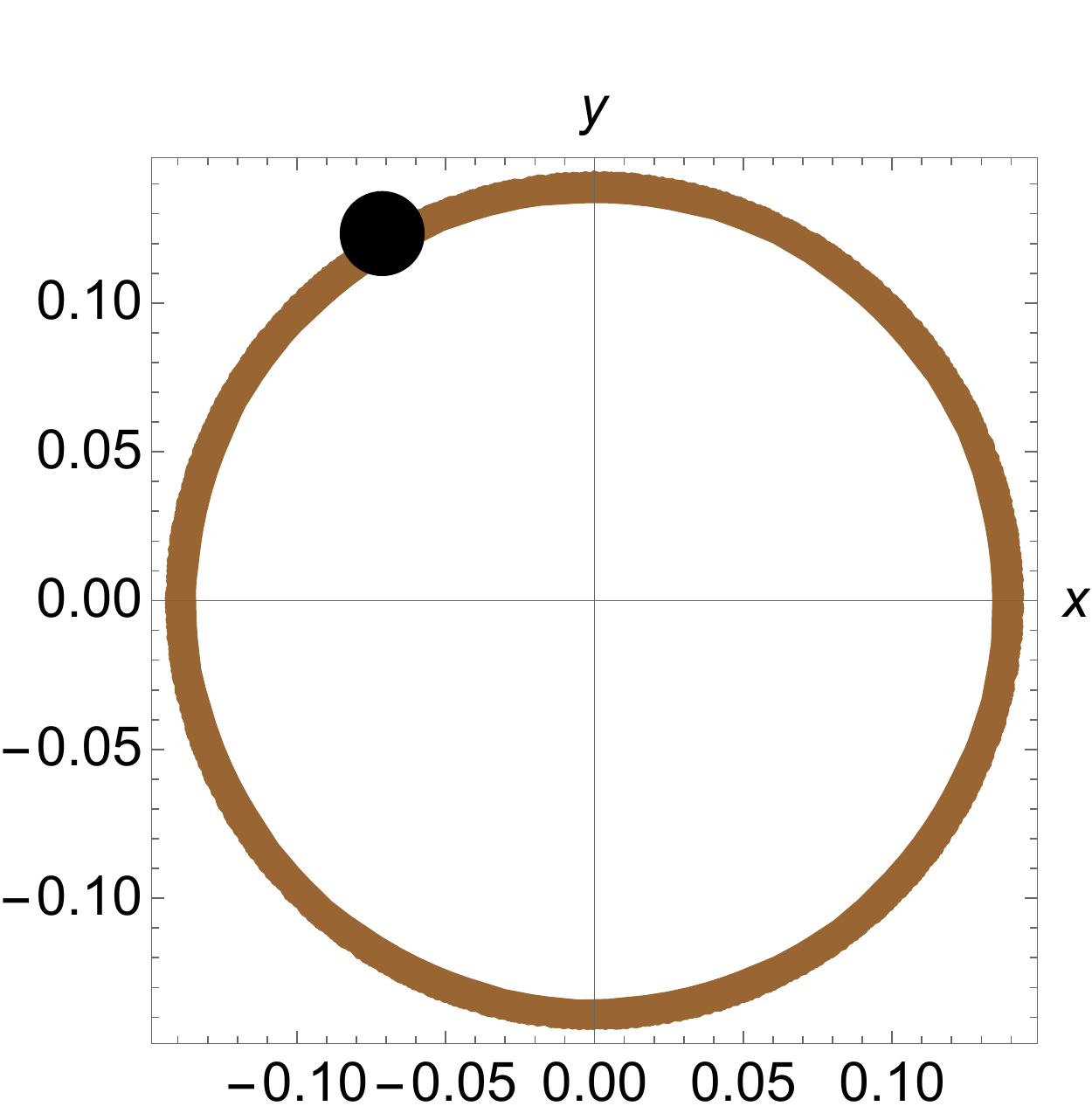}
\subcaption{For $t=0.01$.}
\end{minipage}\\
\begin{minipage}{.45\textwidth}
\centering ~~\includegraphics[width=.81\linewidth]{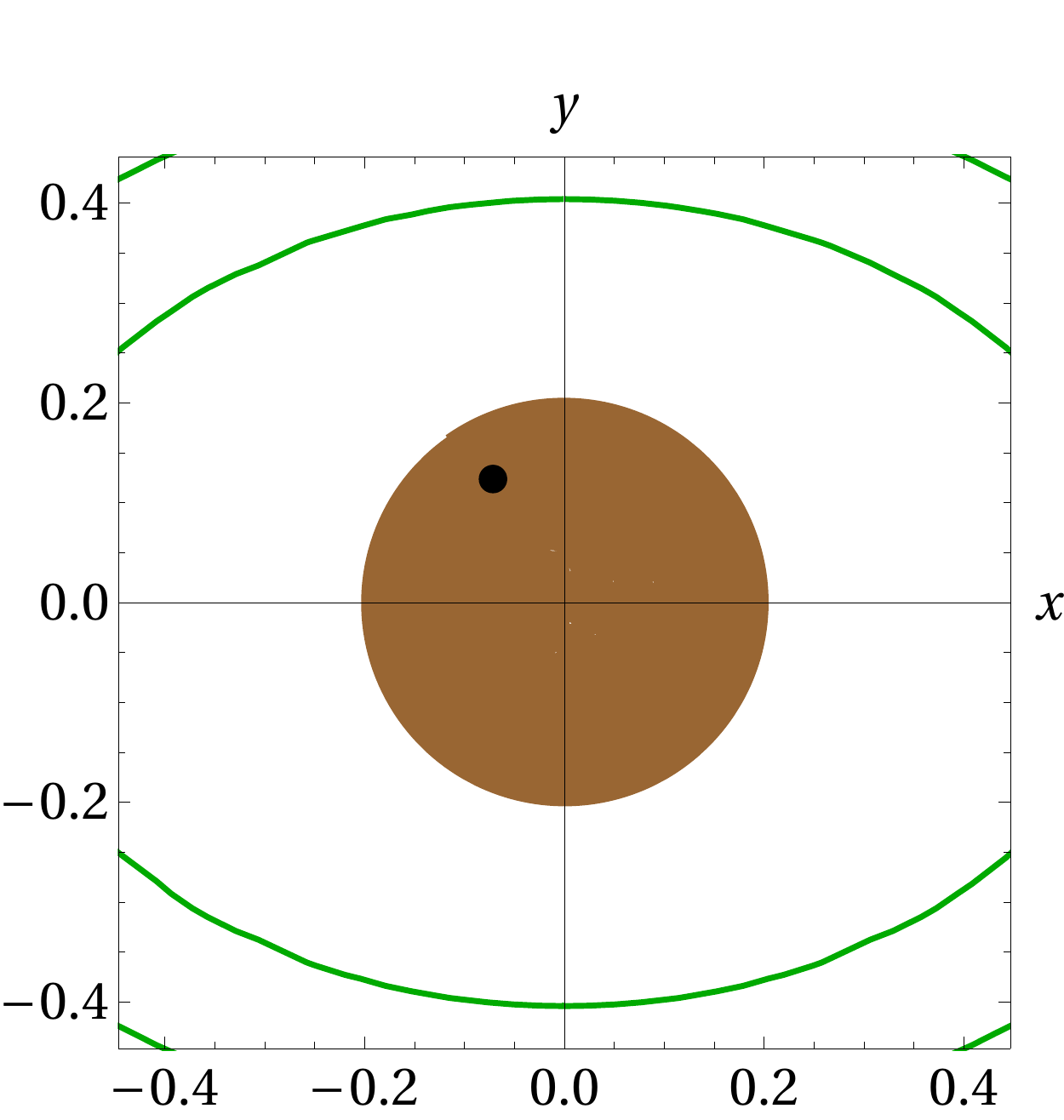}
\subcaption{For $t=0.5$.}
\end{minipage}\hfill
\begin{minipage}{.45\textwidth}
\centering ~~ \includegraphics[width=.81\linewidth]{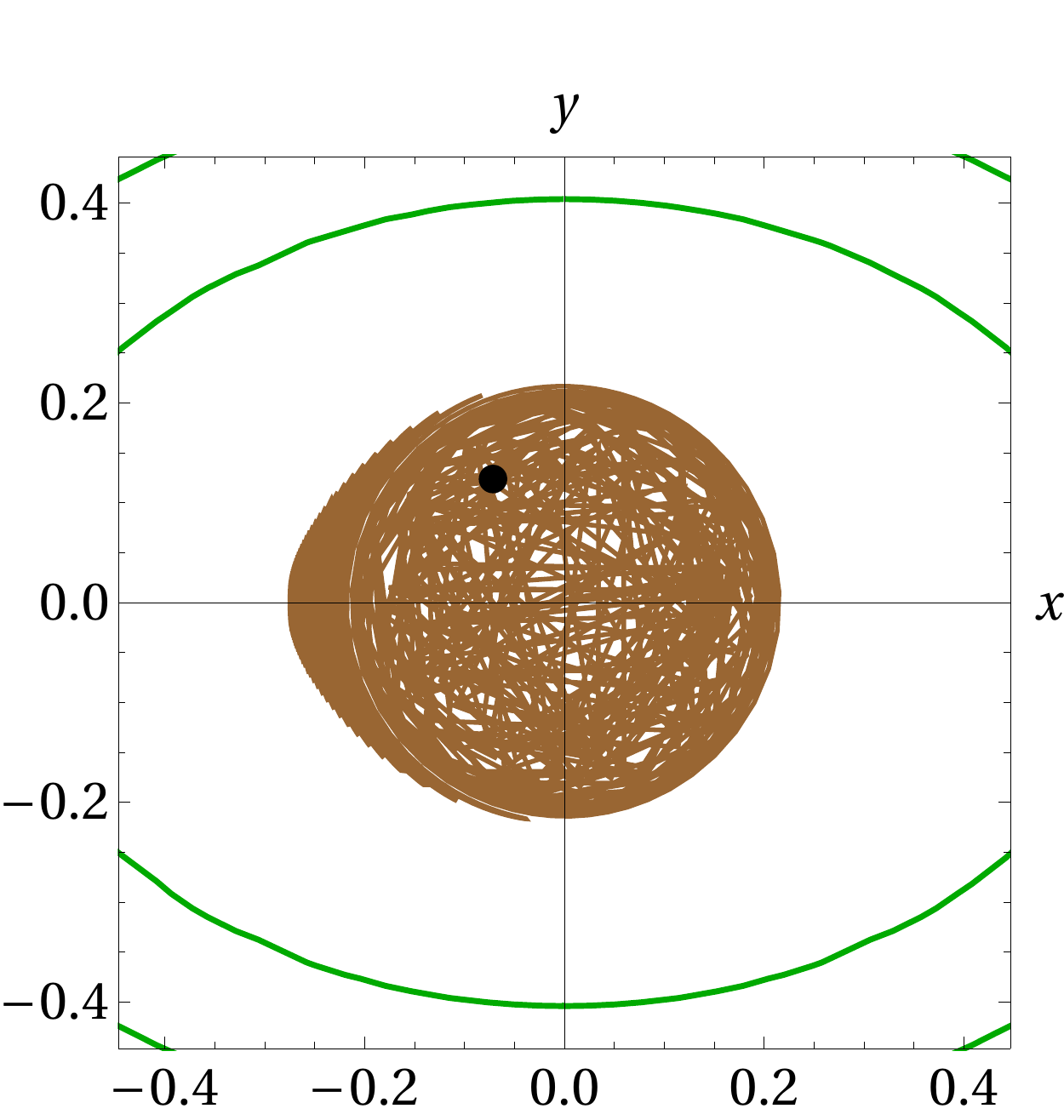}
\subcaption{For $t=5$.}
\end{minipage}
\caption{A numerically computed infrared optimal cosmological orbit of
the canonical model (shown in brown) and level sets of $\hPhi$ (shown
in green) near a critical cusp end $\e$ for $\beta_\e=+1/2$ when the
cusp end is a local minimum of $\hV$. The trajectory starts at
$\omega(0)=\frac{1}{7}$ and $\theta(0)=\frac{2\pi}{3}$ with initial
speed in the gradient flow shell of $(\Sigma,G,V)$. The four figures
show the orbit for cosmological times between $t=0$ and $t= 0.00002$,
$0.01$, $0.5$ and $5$ respectively. The white interior areas in the last figure
are numerical artifacts due to limited float precision.}
\label{fig:CritCosmCuspPlusSingle}
\end{figure}

\subsection{Stable and unstable manifolds of critical ends under the effective gradient flow}

\noindent The analysis above shows that critical ends of $\Sigma$
behave like exotic fixed points of the gradient flow of
$(\Sigma,G,V)$. The dimensions and number of connected components of
the stable and unstable manifolds (in $\Sigma$) are listed below,
where we use the notations:
\beqa
\cA_+(\e)\eqdef \cS(\e)~~&,&~~\cA_-(\e)\eqdef \cU(\e)~~\\
d_+(\e)\eqdef \dim\cS(\e)~~&,&~~d_-(\e)\eqdef \dim\cU(\e)~~\\
n_+(\e)\eqdef \Card [\pi_0(\cS(\e))]~~&,&~~n_-(\e)\eqdef \Card[\pi_0(\cU(\e))]~~.
\eeqa

\begin{enumerate}
\item If $\e$ is a flaring end:
\begin{itemize}
\item $\beta_\e\in [-1,0)$ (i.e. $\e$ is a saddle point of $\hV$):
$d_+(\e)=d_-(\e)=2$, $n_+(\e)=n_-(\e)=2$
\item $\beta_\e\in (0,1]$ (i.e. $\e$ is an extremum of $\hV$): Then
$\epsilon_1(\e)=\epsilon_2(\e):=\epsilon\in \{-1,1\}$ and
$d_{\epsilon}(\e)=2$, $n_{\epsilon}(\e)=1$ with
$\cA_{-\epsilon}=\emptyset$.
\end{itemize}
\item If $\e$ is a cusp end:
\begin{itemize}
\item $\beta_\e\in [-1,0)$ (i.e. $\e$ is a saddle point of $\hV$):
$d_+(\e)=d_-(\e)=1$, $n_+(\e)=n_-(\e)=2$
\item $\beta_\e\in (0,1]$ (i.e. $\e$ is an extremum of $\hV$): Then
$\epsilon_1(\e)=\epsilon_2(\e)=\epsilon\in \{-1,1\}$ and
$d_{\epsilon}(\e)=2$, $n_{\epsilon}(\e)=1$ with
$\cA_{-\epsilon}=\emptyset$.
\end{itemize}
\end{enumerate}

\noindent Notice that the stable and unstable manifolds of an end are
subsets of $\Sigma$ and that the number of their connected components
depends on the fact that the Freudenthal ends are not points of
$\Sigma$.

\section{Conclusions and further directions}
\label{sec:Conclusions}

We studied the first order IR approximants of hyperbolizable tame
two-field models, which are defined by the conditions that the target
surface $\Sigma$ is oriented and has finitely-generated fundamental
group, that the scalar field metric $\cG$ is hyperbolizable and that
the scalar potential $\Phi$ admits a strictly positive and smooth Morse
extension $\hPhi$ to the end compactification of $\Sigma$. In this
situation, the asymptotic form of the gradient flow orbits of the
uniformized effective scalar triple $(\Sigma,G,V)$ (which describe the
asymptotic behavior of the first IR approximant of the two-field
model parameterized by $(\Sigma,\cG,\Phi)$) can be determined
explicitly near each critical point of the classical effective
potential $V$ as well as near each end of $\Sigma$. We found that the
gradient flow of $(\Sigma,G,V)$ has exotic behavior near the ends,
which act like fictitious stationary points of this flow. Using
results from the theory of geometrically finite hyperbolic surfaces,
we showed that the IR behavior of the model near a critical point $c$
of the extended effective potential $\hV$ (which can be an interior
critical point or a critical end) is characterized by the critical
modulus $\beta_c\in [-1,1]\setminus\{0\}$ and by two sign
factors $\epsilon_1(c), \epsilon_2(c)\in \{-1,1\}$ which satisfy the
relation $\epsilon_1(c)\epsilon_2(c)=1$. When $c$ is a critical end,
this behavior also depends on the hyperbolic type of that end. For
critical ends, the definition of these quantities relies on the fact
that the hyperbolic metric $G$ admits an $\O(2)$ symmetry in a
vicinity of each end. For noncritical ends, we found that the
asymptotic behavior of effective gradient flow orbits depends only on
the hyperbolic type of the end. Non-critical flaring ends act like
fictitious but exotic stationary points of the effective gradient
flow even though they are not critical points of the extended
effective potential.

These results characterize the infrared behavior in each IR phase of
all tame two-field cosmological models up to first order in the
infrared expansion of \cite{ren} and hence open the way for systematic
studies of such models. We note that tame two-field models form an
extremely large class of cosmological models which was inaccessible
until now to systematic or conceptual analysis. With the exception of
the very special class of models discussed in \cite{Noether1,
Noether2} (which admit a `hidden' Noether symmetry and subsume all
previously considered integrable two-field models with canonical
kinetic term and canonical coupling to gravity), such models were
approached before only with numerical methods. Moreover, the vast
majority of work in this direction (with the exception of
\cite{genalpha,elem,modular}) was concerned exclusively with the
topologically trivial case of models whose target manifold is the
Poincar\'e disk \cite{KLR}. In view of \cite{ren} and of the results
of the present paper, models based on the Poincar\'e disk are
extremely far from capturing the infrared universality classes of tame
two-field models.

The results of this paper suggest various directions for further
research. As an immediate extension, one can study in more generality
the UV and IR behavior of two-field models whose target surface
$\Sigma$ is a disk, a punctured disk or an annulus. In this case the
uniformized scalar manifold $(\Sigma,G)$ is either an elementary
Euclidean surface or an elementary hyperbolic surface. In the second
situation, the universality classes are described by the UV or IR behavior
of the elementary two-field $\alpha$-attractor models considered in
\cite{elem}. The geodesic flow on elementary hyperbolic surfaces
(which describes the UV limit) is well-understood, while the
effective gradient flow can be studied for potentials $V$ which admit
a smooth extension $\hV$ to the Freudenthal compactification
$\hSigma\simeq \rS^2$ assuming that $\hV$ satisfies Morse-Bott
conditions \cite{Bott} on $\hSigma$. Similar questions can be asked
for $n$-field models whose target manifold is an elementary hyperbolic
space form \cite{Ratcliffe}.

One could also study the IR approximation of models for which
$(\Sigma,G)$ corresponds to a modular curve (such as the curve $Y(2)$
considered in \cite{modular}) with Morse-Bott conditions on the
extended potential. Using the uniformization theorem, such problems
can be reduced to the Poincar\'e disk, i.e. to studying the IR limit
of a modular cosmological model in the sense of \cite{Sch1, Sch2,
Sch3, Sch4}, though -- as pointed out in \cite{genalpha, modular} --
the quotient by the uniformization group can be highly nontrivial.

Finally, we mention that a characterization of the cosmological and
effective gradient flow near the ends of $\Sigma$ up to topological
equivalence can be extracted using the conformal compactification of
$(\Sigma,G)$ and the Vishik normal form \cite{Vishik} of vector fields
near the conformal boundary of $\Sigma$ and near its lift to
$T\Sigma$; we hope to report on this in a future publication.

\acknowledgments{\noindent This work was supported by grant PN
19060101/2019-2022. The authors thank the Simons Center for 
Geometry and Physics for hospitality. }

\appendix

\section{Details of computations for each case}

This appendix gives some details of the computation of cosmological
curves near critical points $c\in \hSigma$ of the extended
potential. We take $M_0=1$ as explained in the main text. In principal
Cartesian canonical coordinates near $c$, the first order
approximation of the effective potential is:
\be
V=V(c)+\frac{1}{2}\big(\lambda_1(c)x^2 + \lambda_2(c)y^2\big)~~,
\ee
with $V(c)$ a positive constant. We
take $V(c)=1$ and $\lambda_2(c)=1$, which gives
$\beta(c)\eqdef\frac{\lambda_1(c)}{\lambda_2(c)}=\lambda_1(c)$. In
polar principal canonical coordinates and with the assumptions
considered, we have:
\be
V=1+\frac{\omega^2}{2}\big(\beta(c)\cos^2 \theta + \sin^2 \theta\big)
\ee
In local coordinates on $\hSigma$, we have:
\be
\nabla_t \dot{\varphi}^i(t)\!=\!\ddot\varphi^i(t)+\Gamma^i_{jk}(\varphi(t))\dot\varphi^j(t)\dot\varphi^k(t)~,~~
||\dot\varphi(t)||^2_G\!=\!G_{ij}(\varphi(t))\dot\varphi^i(t)\dot\varphi^j(t)~,~~
\grad_G\Phi\!=\!G^{ij}\partial_i\Phi\partial_j~.
\ee
An infrared optimal cosmological curve is a solution $\varphi(t)$ of the
cosmological equation \eqref{eomsingle} which satisfies:
\be
{\dot \varphi}(0)=-(\grad_G V)(\varphi(0))~.
\ee

\subsection{Interior critical points}

\noindent In this case, $c$ is denoted by $\c$ and the hyperbolic
metric $G$ has the following form in semigeodesic coordinates
$(r,\theta)$ on the disk $D_{\omega_\rmax(\c)}$ (which is contained in
the Poincar\'e disk $\mD$):
\be
\dd s_G^2=\dd r^2+\sinh^2(r)\dd\theta^2~~,
\ee
where $r$ is related to $\omega\eqdef \sqrt{x^2+y^2}$ by
$\omega=\tanh(r/2)$ and we have $r\leq r(\c)$. Notice that
$(r,\theta)$ can be identified with the polar coordinates on the
tangent space $T_0\mD$ through the exponential map $\exp_0^\mD$ of the
Poincar\'e disk used in Subsection \ref{subsec:CanIntCrit}. The only
nontrivial Christoffel symbols are:
\be
\Gamma^r_{\theta\theta}=-\sinh(r)\cosh(r)~~,~~
\Gamma^\theta_{r\theta}=\Gamma^\theta_{\theta r}=\coth(r)~~.
\ee
The cosmological equations \eqref{eomsingle} become:
\beqa
&&\ddot r-\frac{1}{2}\sinh(2r)\dot\theta^2+\dot r\sqrt{\dot r^2+\sinh^2(r)\dot\theta^2+2\Phi}+\partial_r\Phi=0~~\\
&&\ddot\theta+2\coth(r)\dot r\dot\theta+\dot\theta\sqrt{\dot r^2+\sinh^2(r)\dot\theta^2+2\Phi}+\frac{1}{\sinh^2(r)}\partial_\theta\Phi=0~~,
\eeqa
which we solved numerically to obtain Figure \ref{fig:CosmIntCritical}.
The scalar potential is $\Phi(r,\theta)=\frac{1}{2}V^2(r,\theta)$,
where the classical effective potential takes the approximate form:
\be
V(r,\theta)=1+\frac{1}{2}\tanh^2(r/2)\big(\beta(\c)\cos^2(\theta)+\sin^2(\theta)\big)~.
\ee 
In principal polar canonical coordinates $(\omega,\theta)$ centered at
$\c$, we have:
\be
\dd s^2_G=\frac{4}{(1-\omega^2)^2}[\dd \omega^2+\omega^2\dd \theta^2]~~
\ee
and:
\be
V(\omega,\theta)= 1+\frac{1}{2}\omega^2\left[\lambda_1(\c) \cos^2\theta +\lambda_2(\c) \sin^2\theta\right]~~.
\ee
Thus:
\beqa
&& H(\omega,\theta,\dot{\omega},\dot{\theta})=\sqrt{\frac{4}{(1-\omega^2)^2}\big(\dot\omega^2+\omega^2\dot\theta^2 \big)+2\Phi(\omega,\theta)} ~~\\
&&  \Gamma^\omega_{\omega\omega}=\frac{2\omega}{1-\omega^2}~~,~~\Gamma^\omega_{\theta\theta}=-\omega^2\big( \frac{1}{\omega} + \frac{2\omega}{1-\omega^2}\big)~~,~~
\Gamma^\theta_{\omega\theta}=\Gamma^\theta_{\theta\omega}= \frac{1}{\omega} + \frac{2\omega}{1-\omega^2}~~\\
&& (\grad \Phi)^\omega\!\approx\!\frac{(1-\omega^2)^2}{4} \pd_\omega \Phi~~,~~
 (\grad \Phi)^\theta\!\approx\!\frac{(1-\omega^2)^2}{4\omega^2} \pd_\theta \Phi
\eeqa
The cosmological equations become:
\beqa
&&\ddot\omega+\Gamma^\omega_{\omega\omega}\dot\omega^2+\Gamma^\omega_{\theta\theta}\dot\theta^2 + H\dot\omega+(\grad\Phi)^\omega=0~,\\
&&\ddot\theta+2\Gamma^\theta_{\omega\theta}\dot\omega\dot\theta+H\dot\theta+
(\grad\Phi)^\theta=0~~.
\eeqa

\subsection{Critical and noncritical ends}

\noindent Recall the asymptotic form \eqref{eomegametric} near the end
$\e$ in principal polar canonical coordinates centered at $\e$:
\be
\dd s_G^2|_{\dot{U}_\e}=\frac{\dd \omega^2}{\omega^4}+f_\e(1/\omega)\dd \theta^2~~,
\ee
with:
\be
f_\e(1/\omega)= {\tilde c}_\e e^{\frac{2\epsilon_\e}{\omega}}\left[1+\O\left(e^{-\frac{2}{\omega}}\right)\right]~~\mathrm{for}~~\omega\ll 1~~,
\ee
where:
\be
{\tilde c}_\e=\fourpartdef{\frac{1}{4}}{~\e=\mathrm{plane~end}}
{\frac{1}{(2\pi)^2}}{~\e=\mathrm{horn~end}}
{\frac{\ell^2}{(4\pi)^2}}{~\e=\mathrm{funnel~end~of~circumference}~\ell>0}{\frac{1}{(2\pi)^2}}{~\e=\mathrm{cusp~end}}
\ee
and:
\be
\epsilon_\e=\twopartdef{+1}{~\e=\mathrm{flaring~(i.e.~plane,~horn~or~funnel)~end}}{-1}{~\e=\mathrm{cusp~end}}~~.
\ee
The term $\O\left(e^{-\frac{2}{\omega}}\right)$
vanishes identically when $\e$ is a cusp or horn end, but we will approximate it to zero for all ends. 
The only nontrivial Christoffel symbols are:
\be
\Gamma^\omega_{\omega\omega}=-\frac{2}{\omega}~~,~~
\Gamma^\omega_{\theta\theta}=\tilde c_\e\epsilon_\e\omega^2\e^\frac{2\epsilon_\e}{\omega}~~,~~
\Gamma^\theta_{\theta\omega}=\Gamma^\theta_{\omega\theta}=-\frac{\epsilon_\e}{\omega^2}~~.
\ee
The cosmological equations \eqref{eomsingle} become:
\beqan
\label{cosmeqends}
&&\ddot\omega-\frac{2}{\omega}\dot\omega^2+\tilde c_\e\epsilon_\e\omega^2 e^\frac{2\epsilon_\e}{\omega}\dot\theta^2+H\dot\omega+\omega^4\partial_\omega\Phi=0~~,\nn\\
&&\ddot\theta-\frac{2\epsilon_\e}{\omega^2}\dot\omega\dot\theta +H\dot\theta +
\frac{1}{\tilde c_\e}e^{-\frac{2\epsilon_\e}{\omega}}\partial_\theta\Phi=0~~,
\eeqan
where 
\be
H=\sqrt{ \frac{\dot\omega^2}{\omega^4}+
\tilde c_\e e^\frac{2\epsilon_\e}{\omega} \dot\theta^2+2\Phi}~~.
\ee
The difference between the critical and noncritical ends manifests in
the form of the second order approximations for the potential:
\begin{description}
\item $\bullet$ for {\bf critical ends}:
\be
V(\omega,\theta)=1 +\frac{1}{2}\omega^2\big(\beta(\e)\cos^2 \theta + \sin^2 \theta \big)~,
\ee
\item $\bullet$ for the {\bf noncritical ends}:
\be
V(\omega,\theta)=1+\mu\omega\cos\theta~,
 \ee
where $\mu_\e$ is a positive constant which we chose to be $1/2$ in
our graphs.
\end{description}

\noindent Figures \ref{fig:NoncritCosm},
\ref{fig:NoncritCosmCuspDetail}, and
\ref{fig:CritCosmPlane}-\ref{fig:CritCosmCuspPlusSingle} were obtained
by solving \eqref{cosmeqends} numerically in each case.

\end{document}